\date{\today}
\newcommand{\Hi}{{\mathcal H}}
\newcommand{\Z}{{\mathbb Z}}
\newcommand{\R}{{\mathbb R}}
\newcommand{\C}{{\mathbb C}}
\newcommand{\D}{{\mathbb D}}
\newcommand{\supp}{{\mathrm{supp}}}
\newcommand{\CMV}{{\mathcal C}}
\newtheorem{theorem}{Theorem}[section]
\newtheorem{lemma}[theorem]{Lemma}
\newtheorem{prop}[theorem]{Proposition}
\newtheorem{coro}[theorem]{Corollary}
\theoremstyle{definition}
\newtheorem{remark}[theorem]{Remark}
\theoremstyle{definition}
\theoremstyle{definition}
\numberwithin{equation}{section}
\renewcommand{\Re}{\mathrm{Re} \, }
\newcommand{\abs}[1]{\left\vert#1\right\vert}
\newcommand{\set}[1]{\left\{#1\right\}}
\newcommand{\eqdef}{\overset{\mathrm{def}}=}
\newcommand{\norm}[1]{\left\|#1\right\|}
\begin{document}

\title[Quantum Intermittency for Sparse CMV Matrices]{Quantum Intermittency for Sparse CMV Matrices with an Application to Quantum Walks on the Half-Line}

\author[D.\ Damanik]{David Damanik}

\address{Department of Mathematics, Rice University, Houston, TX~77005, USA}

\email{damanik@rice.edu}

\author[J.\ Erickson]{Jon Erickson}

\address{Department of Mathematics, Rice University, Houston, TX~77005, USA}

\email{jfe23@rice.edu}

\author[J.\ Fillman]{Jake Fillman}

\address{Department of Mathematics, Rice University, Houston, TX~77005, USA}

\email{fillman@vt.edu}

\author[G.\ Hinkle]{Gerhardt Hinkle}

\address{Department of Mathematics, Rice University, Houston, TX~77005, USA}

\email{gnh1@rice.edu}

\author[A.\ Vu]{Alan Vu}

\address{Department of Mathematics, Rice University, Houston, TX~77005, USA}

\email{atv3@rice.edu}

\begin{abstract}

We study the dynamics given by the iteration of a (half-line) CMV matrix with sparse, high barriers. Using an approach of Tcheremchantsev, we are able to explicitly compute the transport exponents for this model in terms of the given parameters. In light of the connection between CMV matrices and quantum walks on the half-line due to Cantero-Gr\"unbaum-Moral-Vel\'azquez, our result also allows us to compute transport exponents corresponding to a quantum walk which is sparsely populated with strong reflectors. To the best of our knowledge, this provides the first rigorous example of a quantum walk that exhibits quantum intermittency, i.e., nonconstancy of the transport exponents. When combined with the CMV version of the Jitomirskaya-Last theory of subordinacy and the general discrete-time dynamical bounds from Damanik-Fillman-Vance, we are able to exactly compute the Hausdorff dimension of the associated spectral measure.

\end{abstract}

\maketitle

\noindent \textbf{Keywords:} quantum walks; CMV matrices; unitary dynamics

\section{Introduction}

Recently, quantum walks have been studied heavily; see \cite{AVWW, BGVW, CGMV, CGMV2,DFO, DFV, DMY2, J11, J12, JM, K14, KS11, KS14, ST12} (and references therein) for some papers on this subject that have appeared in the past five years. These are quantum analogues of classical random walks. In this paper, we will concentrate on the special case of quantum walks on the half-line; let us briefly describe the appropriate setting. We consider a system whose state may be described by a vector $(v_n)_{n=0}^\infty$ with $v_0 \in \C$, $v_n \in \C^2$ for $n \geq 1$ which is $\ell^2$-normalized in the sense that
$$
|v_0|^2 + \sum_{n=1}^\infty \| v_n \|^2
=
1,
$$
so that $|v_0|^2$ may be thought of as the probability that the state is at the origin and $\|v_n\|^2$ corresponds to the probability that the state is at site $n \in \Z_+$.   A quantum walk on $\Z_+$ is then described by choosing ``quantum coins,'' which are unitary $2 \times 2$ matrices, and using these to determine the probability that a given state will transition to the left or the right after one time unit. We shall describe this in more detail in Subsection~\ref{sec:qw} below. The upshot, following an important observation of Cantero, Gr\"unbaum, Moral, and Vel\'azquez in \cite{CGMV}, is this: the evolution of the quantum walk during one time unit may now be described by a CMV matrix $\CMV$ acting on $\ell^2(\Z_0)$ (note that we write $\Z_+$ for the set of positive integers and $\Z_0$ for the set of nonnegative integers). This draws an interesting analogy between the present setting and the case of classical random walks on $\Z_+$ with symmetric nearest-neighbor interactions, which may be parameterized by Jacobi matrices in a natural fashion. In fact, the analogy goes deeper than that, since CMV matrices are the natural unitary analogues of Jacobi matrices, and, moreover, they play a canonical role within the class of unitary operators analogous to the canonical role played by Jacobi matrices within the class of (bounded) self-adjoint operators; see \cite{S1, S2} and references therein. Thus, we are concerned with the study of
\begin{equation}\label{e.CMVevolution}
\langle \delta_n , \CMV^t v \rangle,
\quad
t\in \Z, \; n \in \Z_0,
\end{equation}
where $\CMV$ is a CMV matrix, and in particular a unitary operator on $\ell^2(\Z_0)$. This allows one to employ spectral theoretical methods to analyze the behavior of the corresponding quantum walk, since expressions of the form \eqref{e.CMVevolution} may be rewritten as suitable integrals against spectral measures of $\CMV$. Indeed, the inner product in \eqref{e.CMVevolution} is a Fourier coefficient of a spectral measure, so one can prove quantitative decay estimates for it in terms of the fractal regularity of the associated spectral measure by using the theory which proves decay of the Fourier coefficients of a measure in terms of the regularity of the same \cite{DFV}.

In addition to this approach, there is an alternative method that we employ here, and which was proposed in \cite{DFV} and developed in \cite{DFO}; namely, one may rewrite \eqref{e.CMVevolution} by means of integrals of matrix elements of the resolvent of $\CMV$ over the unit circle with respect to normalized Lebesgue/Haar measure on $\partial \D$ (i.e., the measure generated by normalized arc length). This connects time-averaged spreading of wave packets to properties of transfer matrices, since transfer matrices are related to the properties of the resolvent probed by the integral in the fundamental formula. This connection is exceedingly useful in many cases of interest as there are many ways to study the growth of transfer matrix norms. In the present paper, we combine these two approaches \`a la \cite{T2005}, and we are able to rigorously observe the phenomenon of \emph{quantum intermittency} in a quantum walk that is sparsely populated with strong reflectors. We accomplish this by proving a slightly more general result which computes transport exponents for a class of CMV matrices with sparse, high barriers, and then applying the CGMV connection. To the best of our knowledge, this represents the first class of examples of quantum walks with explicitly computed nontrivial (i.e. $\neq 0, 1$) transport exponents.

\bigskip

Let us now describe the models and results more carefully.  For each $\alpha \in \D \eqdef \set{z \in \C : \abs{z} < 1}$, define the unitary matrix $\Theta(\alpha)$ by
$$
\Theta(\alpha)
\eqdef
\begin{pmatrix}
\overline{\alpha} & \rho \\
\rho & - \alpha
\end{pmatrix},
\quad
\rho
\eqdef
\sqrt{1-|\alpha|^2}.
$$
Given a sequence $(\alpha_n) \in \D^{\Z_0}$, the corresponding \emph{CMV matrix} is a linear operator on $\Hi \eqdef \ell^2(\Z_0)$ given by $\CMV = \mathcal L \mathcal M$, where
$$
\mathcal L
=
\begin{pmatrix}
\Theta(\alpha_0) &&& \\
& \Theta(\alpha_2) && \\
&& \Theta(\alpha_4) & \\
&&& \ddots
\end{pmatrix}
,
\quad
\mathcal M
=
\begin{pmatrix}
1 &&& \\
& \Theta(\alpha_1) && \\
&& \Theta(\alpha_3) & \\
&&& \ddots
\end{pmatrix}.
$$
It is straightforward to check that $\CMV$ enjoys the matrix representation
\begin{equation} \label{def:cmv}
\CMV
=
\begin{pmatrix}
 \overline{\alpha_0} & \overline{\alpha_1}\rho_0 & \rho_1\rho_0 &&& & \\
\rho_0 & -\overline{\alpha_1}\alpha_0 & -\rho_1 \alpha_0 &&& & \\
& \overline{\alpha_2}\rho_1 & -\overline{\alpha_2}\alpha_1 & \overline{\alpha_3} \rho_2 & \rho_3\rho_2 & & \\
& \rho_2\rho_1 & -\rho_2\alpha_1 & -\overline{\alpha_3}\alpha_2 & -\rho_3\alpha_2 &  &  \\
&&& \overline{\alpha_4} \rho_3 & -\overline{\alpha_4}\alpha_3 & \overline{\alpha_5}\rho_4 & \rho_5\rho_4 \\
&&& \rho_4\rho_3 & -\rho_4\alpha_3 & -\overline{\alpha_5}\alpha_4 & -\rho_5 \alpha_4  \\
&&&& \ddots & \ddots &  \ddots & \ddots
\end{pmatrix},
\end{equation}
where all unspecified matrix entries of $\CMV$ are zero. One typically refers to $(\alpha_n)_{n = 0}^\infty$ as the sequence of \emph{Verblunsky coefficients} of $\CMV$. There is a close relationship between operators of the form \eqref{def:cmv} and the theory of orthogonal polynomials on the unit circle \cite{S1,S2}. Let us briefly note that $\delta_0$ is a cyclic vector for $\CMV$. Whenever we refer to ``the spectral measure'' of $\CMV$, we mean the spectral measure associated to $\CMV$ and $\delta_0$, that is, the unique Borel measure on $\partial \D$ that satisfies
\[
\langle \delta_0, g(\CMV)\delta_0 \rangle
=
\int_{\partial \D} \! g \, d\mu
\]
for all continuous functions $g:\partial \D \to \C$.

CMV matrices furnish a particularly pleasant class of \emph{quantum walks}, i.e., dynamical systems defined by the iteration of a unitary operator on the unit sphere of a Hilbert space. There is a (very) rough physical heuristic which identifies the size of $\rho_n^{-1}$ with the size of the barrier through which a wavepacket must tunnel to escape the region $[0,n]$. To lend a small air of credibility to one direction of this heuristic, one may consider $\mathcal C_0$, the CMV matrix obtained by setting $\alpha_n \equiv 0$ for all $n$. It is straightforward to verify that $\mathcal C_0^k \delta_0 = \delta_{2k-1}$ for all $k \in \Z_+$, so the wavepacket propagates ballistically (i.e., without impediment from the environment). This heuristic is refined considerably and elucidated further in the main theorems and examples of \cite{DFO}.

\bigskip

Let us now precisely describe what we mean by CMV matrices with sparse, high barriers. First, we fix a sequence of integers $1 \leq L_1 < L_2 < \cdots $ with the property that
\begin{equation} \label{eq:sparsedef}
\nu_N
\eqdef
\frac{\log(L_1 L_2 \cdots L_{N-1})}{\log L_N}
\to 0
\end{equation}
as $N \to \infty$. Now, fix $\eta \in (0,1)$, and let us define a sequence of Verblunsky coefficients by
\begin{equation} \label{eq:sparseCMVdef}
\alpha_{n}
\eqdef
\begin{cases}
\sqrt{1-L_j^{-\frac{1-\eta}{\eta}}} & \text{ if } n = L_j \text{ for some } j, \\
0 & \text{ otherwise.}
\end{cases}
\end{equation}
According to the heuristic above, the wavepacket encounters a barrier of size $L_k^{\frac{1-\eta}{2\eta}}$ at each site $L_k$ and encounters no other impediments. Given the condition \eqref{eq:sparsedef}, the size of the barriers and the separation between consecutive obstacles becomes very large; hence the term ``sparse, high barriers.'' We are interested in the time evolution of the initial state $\delta_0 \in \ell^2(\Z_0)$, that is, we want to study $\CMV^t \delta_0$ as $t \in \Z_+$ grows. To quantify the dynamics, we first put
$$
a(n,t)
\eqdef
\left| \left\langle \delta_n , \CMV^t \delta_0 \right\rangle \right|^2,
\quad
t \in \Z, \, n \in \Z_0,
$$
which can be thought of as the probability that the associated wave packet is at the site $n$ at time $t$. We shall also be interested in the time-averaged probabilities, given by
$$
\widetilde{a}(n,T)
\eqdef
\left( 1-e^{-2/T} \right) \sum_{t = 0}^{\infty} e^{-2t / T} a(n,t),
\quad
T > 0, \, n \in \Z_0.
$$
We will also frequently consider the inside and outside probabilities, given by
$$
P(n < M, T)
\eqdef
\sum_{0 \leq n < M} \widetilde{a}(n,T),
\qquad
P(n \geq M,T)
\eqdef
\sum_{n \geq M} \widetilde{a}(n,T)
$$
for $M , T > 0$.

The following formula, from \cite[Lemma~3.16]{DFV}, allows us to connect time averages of dynamical quantities to averages of matrix elements of the resolvent of $\CMV$. We have
$$
\sum_{t = 0}^\infty e^{-2t/T} \left| \langle \varphi, \CMV^t \delta_0 \rangle \right|^2
=
e^{2/T} \int_0^{2\pi} \! \left| \left\langle \varphi, (\CMV - e^{i\theta+1/T})^{-1} \delta_0 \right\rangle \right|^2 \, \frac{d\theta}{2\pi}
$$
for any $T > 0$ and any $\varphi \in \Hi$. In particular,
\begin{equation} \label{eq:parsevalprob}
\widetilde{a}(n,T)
=
(e^{2/T} - 1) \int_0^{2\pi} \!\left| \left\langle \delta_n, (\CMV - e^{i\theta+1/T})^{-1} \delta_0 \right\rangle \right|^2 \, \frac{d\theta}{2\pi}
\end{equation}
for all $n \in \Z_0$, $T > 0$. Since $\CMV$ is a unitary operator, $\| \CMV^t \delta_0 \| = 1$ for every $t \in \Z$, so we may then think of $\CMV^t \delta_0$ as defining a probability distribution on $\Z_0$. Consequently, one may describe the spreading of these distributions in terms of their moments. More precisely, for $p > 0$ and $T > 0$, define
\[
\left\langle |X|^p \right\rangle(T)
\eqdef
\sum_{n = 0}^\infty ( n^p +1 ) \widetilde{a}(n,T).
\]
We would like to compare the growth of the $p$th moment to polynomial growth of the form $T^{\beta p}$ for a suitable exponent $\beta \in [0,1]$. Thus, the following transport exponents\footnote{Some authors consider Ces\`aro averages for the moments, instead of the exponential averages which we consider. However, it is not hard to see that either method of averaging yields the same values for $\widetilde\beta^\pm$; compare \cite[Lemma~2.19]{DT2010}} are natural objects to consider
\[
\widetilde{\beta}^+(p)
\eqdef
\limsup_{T \to \infty} \frac{\log \left( \left\langle |X|^p \right\rangle (T) \right)}{p \log(T)},
\quad
\widetilde{\beta}^-(p)
\eqdef
\liminf_{T \to \infty} \frac{\log \left( \left\langle|X|^p\right\rangle(T) \right)}{p \log(T)}.
\]
By Jensen's inequality $\widetilde{\beta}^+$ and $\widetilde{\beta}^-$ are both non-decreasing functions of $p$; compare \cite[Lemma~2.7]{DT2010}.

Our main result is an exact CMV analog of the main result of \cite{T2005} for discrete half-line Schr\"odinger operators with growing sparse potentials; one can precisely compute the transport exponents, and the lower transport exponent is a strictly increasing function of $p$. This is a phenomenon known as \emph{quantum intermittency} in the physics literature \cite{Mantica}. Physically (and somewhat loosely) speaking, Theorem~\ref{t:betas} tells us that the quantum dynamical transport properties of a sparse CMV matrix are \emph{inhomogeneous} in at least two senses. First, the result implies that $\widetilde \beta^-(p) < \widetilde \beta^+(p)$ for all $p > 0$, which implies that wavepackets propagate wildly differently on different time scales. Second, nonconstancy of $\widetilde \beta^-(p)$ as a function of $p$ means that the fastest and slowest parts of the wavepacket travel at different rates of speed. One may find a more substantial discussion of the relationship between $\widetilde \beta^\pm(p)$ and quantum dynamics in \cite{DT2010}.

\begin{theorem} \label{t:betas}
With $\alpha$ and $\CMV$ as above, we have
$$
\widetilde\beta^-(p)
=
\frac{p+1}{p+1/\eta},
\qquad
\widetilde\beta^+(p)
=
1
\text{ for all } p > 0.
$$
\end{theorem}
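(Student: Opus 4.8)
The strategy is to follow Tcheremchantsev's approach from \cite{T2005}, adapting it to the CMV setting. The key tool is formula~\eqref{eq:parsevalprob}, which expresses the time-averaged probabilities $\widetilde a(n,T)$ in terms of $L^2$ averages over the circle of resolvent matrix elements $\langle \delta_n, (\CMV - e^{i\theta+1/T})^{-1}\delta_0\rangle$. These matrix elements, in turn, are controlled by the norms of transfer matrices associated to the Verblunsky coefficients. For the sparse model~\eqref{eq:sparseCMVdef}, the transfer matrices are explicitly computable: between two consecutive barrier sites the Verblunsky coefficients vanish, so the transfer matrix is (a power of) the free transfer matrix, which is a rotation; at each barrier site $L_j$, the transfer matrix has a single large entry of size roughly $\rho_{L_j}^{-1} = L_j^{\frac{1-\eta}{2\eta}}$. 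Thus the norm of the transfer matrix up to site $L_N$ is, up to the free rotations, a product of the barrier heights, $\|T_N\| \approx (L_1 L_2 \cdots L_N)^{\frac{1-\eta}{2\eta}}$, and the sparseness condition~\eqref{eq:sparsedef} guarantees that this product is dominated by its last factor in the appropriate logarithmic sense.

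First I would set up the transfer matrix formalism for CMV matrices (Gesztesy--Zinchenko / Szeg\H{o} recursion), record the explicit form of the free transfer matrix and the barrier transfer matrix, and derive the key two-sided estimates on transfer matrix norms at energies $z = e^{i\theta + 1/T}$ just off the circle. Second, I would establish the upper bound for the outside probabilities $P(n \geq M, T)$ and lower bounds for $P(n < M, T)$ (or, dually, the ``fast'' and ``slow'' parts of the wavepacket) as functions of $T$ and the length scale $M$; these follow from upper and lower bounds on the resolvent matrix elements via the Combes--Thomas-type estimates and the Parseval formula, in the style of \cite{T2005, DFO}. Third, from these one extracts the growth of the moments $\langle |X|^p\rangle(T)$: the upper transport exponent $\widetilde\beta^+(p) = 1$ comes from showing that along a subsequence of times $T_N$ (chosen just after the wavepacket has tunneled through the $N$th barrier) the wavepacket has spread ballistically to scale $\sim L_{N+1}$, while $\widetilde\beta^-(p)$ is governed by the slowest time scales, where the wavepacket is still trapped behind a barrier, and a careful optimization over the length scale $M$ balancing the contribution of $M^p \cdot P(n \geq M, T)$ against $P(n < M, T)$ yields exactly $\frac{p+1}{p+1/\eta}$.

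The main obstacle I anticipate is the lower bound $\widetilde\beta^-(p) \geq \frac{p+1}{p+1/\eta}$: this requires showing that even on the worst time scales the wavepacket cannot be too concentrated near the origin, which demands a genuine \emph{lower} bound on how much of the wavepacket has tunneled past the current barrier. In the Schr\"odinger case this rests on delicate lower bounds for transfer matrix norms (one cannot have the transfer matrix norm small at too many energies simultaneously), together with the polynomially-weighted Parseval estimate; transplanting this to CMV requires the analogous lower bounds on CMV transfer matrices, which should be available via the Gesztesy--Zinchenko transfer matrix formalism and the unitarity of $\CMV$, but the bookkeeping of the four-block structure of the CMV transfer matrices and the precise matching of exponents will be the technical heart of the argument. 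The upper bound $\widetilde\beta^+(p) \leq 1$ is comparatively soft, following from the general ballistic upper bound for CMV dynamics, and the upper bound $\widetilde\beta^-(p) \leq \frac{p+1}{p+1/\eta}$ comes from exhibiting the bad subsequence $T_N$ explicitly and estimating the moments there from above using the upper resolvent bounds.
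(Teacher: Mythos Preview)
Your overall plan---follow Tcheremchantsev's template, replacing Schr\"odinger transfer matrices by the Gesztesy--Zinchenko cocycle and the Borel transform by the Carath\'eodory function---is exactly what the paper does, and your identification of $\widetilde\beta^+(p)\le 1$ as a soft ballistic bound and of the lower bound $\widetilde\beta^-(p)\ge \frac{p+1}{p+1/\eta}$ as the crux is accurate. However, you have misdiagnosed the mechanism behind that lower bound. No ``delicate lower bounds for transfer matrix norms'' are needed; the paper (and \cite{T2005}) uses only \emph{upper} bounds on $\|Z(n,0;z)\|$, together with $|\det Z|=1$ so that $\|Z^{-1}\|=\|Z\|$, to obtain pointwise \emph{lower} bounds on $|u(n,z)|^2+|v(n,z)|^2$ in terms of $\Re^2 F(z)$. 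The ingredient you are missing is the quantity
\[
I(\varepsilon)=(1-e^{-2\varepsilon})\int_0^{2\pi}\Re^2\!\big(F(e^{i\theta+\varepsilon})\big)\,\frac{d\theta}{2\pi}
\]
and its relation to the return probability $J(\varepsilon)$. One obtains two \emph{complementary} lower bounds on the moments: $\langle|X|^p\rangle(T)\gtrsim I(1/T)^{-p}$ from a return-probability argument (Cauchy--Schwarz gives $P(n<M,T)\le\sqrt{2MJ(1/T)}$, and $J\le I$), and $\langle|X|^p\rangle(T)\gtrsim K\cdot I(1/T)$ with $K\asymp L_N^{p+1}+T^{p+1}L_N^{-(1-\eta)/\eta}$ from the transfer-matrix lower bounds on outside probabilities. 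Since $I(1/T)$ is unknown, the key trick is to minimize $y^{-p}+Ky$ over $y>0$, which yields $\langle|X|^p\rangle(T)\gtrsim K^{p/(p+1)}$ \emph{independently of $I$}; inserting $K$ and ranging over $L_N/4\le T\le L_{N+1}/4$ gives exactly $\frac{p+1}{p+1/\eta}$. Your proposed route via energy-set lower bounds for transfer matrices is neither what \cite{T2005} does nor what is needed here.

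For the upper bound $\widetilde\beta^-(p)\le\frac{p+1}{p+1/\eta}$ you should also be aware that a direct resolvent estimate for $\CMV$ is awkward; the paper instead compares $\CMV$ to the truncated operator $\CMV_N$ (agreeing with $\CMV$ on $[0,L_N]$, free beyond), computes $R_N(z)\delta_0$ explicitly past $L_N$ (it decays like $|z|^{-k/2}$), shows via an explicit $2\times2$ computation that the barrier at $L_N$ damps the amplitude by a factor $\rho_{L_N}/\alpha_{L_N}\sim L_N^{-(1-\eta)/(2\eta)}$, and controls $\|R(z)\delta_0-R_N(z)\delta_0\|$ by a resolvent identity and the sparseness of the remaining barriers. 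Integrating against $-\Re F_N$ (whose circle average is $1$, Lemma~\ref{l:cara:int}) then yields the moment upper bound, and the choice $T_N=L_N^{(p+1/\eta)/(p+1)}$ gives the sharp exponent.
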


\medskip

Since CMV matrices are unitary analogs of Jacobi matrices, the broad strokes of the proof of Theorem~\ref{t:betas} proceed along the same general lines as \cite{T2005}, so, aside from some minor deviations, the large-scale structure of the present paper is similar to \cite{T2005}. However, in many cases, the technical nuances of the proofs are fairly distinct. This comes about for several reasons; perhaps most significant is the fact that the dynamical analysis revolves around the resolvent, via the Parseval formula \eqref{eq:parsevalprob}. In the Jacobi matrix case, the primary integral transform of the spectral measure is the Borel transform, which connects directly to the resolvent -- in fact, the Borel transform of the spectral measure of a Jacobi matrix is simply the (1,1) matrix entry of the resolvent! However, the primary integral transform used to probe CMV matrices is the Carath\'eodory function, whose connection to the resolvent is more oblique.\footnote{For a more thorough discussion of Carath\'eodory vs.\ Borel in the OPUC setting, see the excellent paper \cite{SimonJACM2004}.}

\medskip

We also point out that our result allows us to compute the fractal dimension of the spectral measure of $\CMV$ exactly. The following statement follows from the the quantitative version of subordinacy theory due to Jitomirskaya-Last. More specifically, it follows from the arguments which prove \cite[Theorem~1.3(ii)]{JL99}. See Appendix~\ref{sec:speccont} for more details.

\begin{theorem} \label{t:subord}
Let $\alpha$ and $\CMV$ be as above, and let $\mu = \mu_{\delta_0}$ denote the corresponding spectral measure. The lower Hausdorff dimension of $\mu$ is bounded below by $\eta$ in the sense that $\mu$ does not give weight to sets of Hausdorff dimension less than $\eta$.
\end{theorem}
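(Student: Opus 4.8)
The plan is to combine the quantitative subordinacy theory of Jitomirskaya--Last (in its CMV incarnation) with the upper bound $\widetilde\beta^+(p) = 1$ obtained in Theorem~\ref{t:betas}. Recall that the Jitomirskaya--Last theorem relates the Hausdorff dimension of the spectral measure at a point $e^{i\theta} \in \partial\D$ to the asymptotic comparison of the norms $\|u\|_L$ and $\|v\|_L$, where $u$ and $v$ are suitable (orthogonal, or ``first-kind/second-kind'') solutions of the generalized eigenvalue equation $\CMV u = e^{i\theta} u$, and where $\|\cdot\|_L$ denotes the $\ell^2$-norm truncated to a box of size $L$. Precisely, the arguments proving \cite[Theorem~1.3(ii)]{JL99} (adapted to the unit circle; see Appendix~\ref{sec:speccont}) show that if one controls the growth of transfer matrix norms — equivalently the growth of $\|u\|_L \|v\|_L$ — uniformly enough, then $\mu$ assigns no weight to sets of Hausdorff dimension smaller than a threshold dictated by that growth rate. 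So the first step is to state the CMV version of this implication carefully and identify the precise hypothesis it needs: a uniform (in $\theta$) power-law upper bound on the transfer matrix norms along the sparse structure.

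Second, I would extract exactly that transfer-matrix estimate from the analysis already carried out in the proof of Theorem~\ref{t:betas}. The computation of $\widetilde\beta^+(p) = 1$ necessarily rests on lower bounds for the resolvent matrix elements $\langle \delta_n, (\CMV - e^{i\theta+1/T})^{-1}\delta_0\rangle$, which in turn rest on understanding how the transfer matrices grow across each barrier at the sites $L_j$. Between barriers the Verblunsky coefficients vanish, so the transfer matrices are those of the free CMV matrix $\CMV_0$ and have norm bounded by an absolute constant; across the barrier at $L_j$ the norm is comparable to $\rho_{L_j}^{-1} = L_j^{\frac{1-\eta}{2\eta}}$. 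Multiplying these contributions up to a site $n$ near $L_N$ and using the sparseness hypothesis \eqref{eq:sparsedef} (so that $L_1 \cdots L_{N-1} = L_N^{o(1)}$), one gets that the transfer matrix norm to site $n$ is bounded by $C_\varepsilon\, n^{\frac{1-\eta}{2\eta} + \varepsilon}$ for every $\varepsilon > 0$, uniformly in $\theta$. Plugging the exponent $\gamma \eqdef \frac{1-\eta}{2\eta}$ into the Jitomirskaya--Last box-size bookkeeping — where the relevant comparison scale $L(\theta, N)$ satisfies $\|u\|_{L}\|v\|_{L} \sim N^{-1}$ and hence $L \sim N^{\frac{1}{2\gamma+1}} = N^{\eta}$ — yields precisely the Hausdorff-dimension lower bound $\eta$, i.e. $\mu$ gives no weight to sets of dimension $< \eta$.

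The main obstacle, and the reason Appendix~\ref{sec:speccont} is needed rather than a one-line citation, is the translation of the Jitomirskaya--Last/Gilbert--Pearson subordinacy machinery from the half-line Schrödinger (or Jacobi) setting to the CMV setting. The delicate points are: (i) the two-step nature of the CMV recursion (the $\mathcal{LM}$ factorization) means one must be careful about which norms and which indices enter the $\|u\|_L$, $\|v\|_L$ comparison, and about the distinction between Szegő transfer matrices and the genuine transfer matrices for the eigenvalue equation; and (ii) one must reconcile the spectral parameter on $\partial\D$ with the regularization $e^{i\theta + 1/T}$ pushing slightly outside the circle, and confirm that the relevant boundary behavior of the Carathéodory/Schur functions is governed by the same transfer-matrix growth. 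Once the correct CMV dictionary is in place — and the relevant version has essentially appeared in the literature on OPUC subordinacy, so this is a matter of assembling known pieces rather than inventing new theory — the dimension lower bound follows mechanically from the uniform transfer-matrix bound established above. I do not expect the sparseness to create difficulties here; on the contrary, it makes the transfer-matrix growth essentially as tame as possible off the barriers, which is exactly what makes the clean value $\eta$ come out.
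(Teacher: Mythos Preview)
Your proposal is correct and follows essentially the same route as the paper's Appendix~\ref{sec:speccont}: establish the power-law bound $\|T(n,0;z)\| \leq C_\varepsilon\, n^{\frac{1-\eta}{2\eta}+\varepsilon}$ on the Szeg\H{o} transfer matrices uniformly for $z\in\partial\D$, then feed it into the CMV version of Jitomirskaya--Last subordinacy (the paper invokes \cite[Theorems~10.8.5 and 10.8.7]{S2}) to obtain $\eta$-continuity of $\mu$. Two minor corrections: the transport exponent $\widetilde\beta^+(p)=1$ is not actually an input --- as your second paragraph acknowledges, the real ingredient is the transfer-matrix estimate itself, which the paper re-derives directly on $\partial\D$ rather than extracting from the dynamical argument --- and the paper also proves and uses the matching \emph{lower} bound $\|T(n,0;z)\|\geq L_N^{\gamma-\varepsilon}$ (Proposition~\ref{p:szbounds}) when rerunning the argument for \cite[(5.9)]{JL99}.
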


Thus, combining this with our result and the general dynamical bounds in \cite{DFV}, we have the following companion to Theorem~\ref{t:subord}, which allows us to exactly compute the fractal dimension of $\mu$.

\begin{theorem} \label{t:specdim}
With $\alpha$ and $\CMV$ as above, $\mu$ has a support of Hausdorff dimension $\eta$. In particular, $\mu$ has exact {\rm(}Hausdorff{\rm)} dimension $\eta$.
\end{theorem}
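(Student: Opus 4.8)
The new content of Theorem~\ref{t:specdim} is the \emph{upper} bound $\dim_H^*(\mu)\le\eta$ on the Hausdorff dimension of a minimal Borel support of $\mu$ (here $\dim_H^*(\mu)$ denotes the infimum of $\dim_H S$ over Borel sets $S$ with $\mu(\partial\D\setminus S)=0$). Granting this, the theorem follows formally: Theorem~\ref{t:subord} says $\mu(E)=0$ whenever $\dim_H E<\eta$, so if $S$ is any support of $\mu$ with $\dim_H S\le\eta$, then $\dim_H S<\eta$ would force $\mu(S)=0$, contradicting $\mu(S)=\mu(\partial\D)=1$; hence $\dim_H S=\eta$. Likewise Theorem~\ref{t:subord} gives $\dim_{H,*}(\mu)\ge\eta$ for the lower Hausdorff dimension, while trivially $\dim_{H,*}(\mu)\le\dim_H^*(\mu)\le\eta$, so both coincide with $\eta$ and $\mu$ is exact-dimensional. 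Thus the whole task is to extract $\dim_H^*(\mu)\le\eta$ from the transport exponents already computed in Theorem~\ref{t:betas}.

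The mechanism is a Guarneri--Combes--Last-type dynamical lower bound, of the kind established in \cite{DFV} for unitary operators: a fractally ``fat'' spectral component forces a definite contribution to the time-averaged moments \emph{on every large time scale} (time-averaging cannot cancel a positive contribution), and a quantitative version of this yields, for every $p>0$, an inequality
\[
\widetilde\beta^-(p)\ \ge\ \gamma_\mu(p),
\]
where $\gamma_\mu(p)$ is a fractal-dimension characteristic of $\mu$ --- concretely the relevant generalized (R\'enyi-type) fractal dimension of $\mu$ evaluated at $q=\tfrac1{1+p}$ --- with the property that $\gamma_\mu(p)\to\dim_H^*(\mu)$ as $p\to0^+$, by the classical multifractal identity expressing $\dim_H^*(\mu)$ as the $q\to1^-$ limit of the generalized dimensions. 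Combined with Theorem~\ref{t:betas}, which gives $\lim_{p\to0^+}\widetilde\beta^-(p)=\lim_{p\to0^+}\tfrac{p+1}{p+1/\eta}=\eta$, sending $p\to0^+$ in the displayed inequality produces $\dim_H^*(\mu)\le\eta$.

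So the plan is: (i) quote the appropriate discrete-time Guarneri--Combes--Last bound from \cite{DFV} in the form $\widetilde\beta^-(p)\ge\gamma_\mu(p)$; (ii) compute $\lim_{p\to0^+}\widetilde\beta^-(p)=\eta$ from Theorem~\ref{t:betas} and let $p\to0^+$ to get $\dim_H^*(\mu)\le\eta$, hence Borel sets $S_k$ with $\mu(S_k^c)=0$ and $\dim_H S_k\le\eta+\tfrac1k$, whose intersection $S=\bigcap_kS_k$ is a support of $\mu$ with $\dim_H S\le\eta$; (iii) invoke Theorem~\ref{t:subord} to upgrade $\dim_H S\le\eta$ to $\dim_H S=\eta$ and to conclude $\dim_{H,*}(\mu)=\dim_H^*(\mu)=\eta$, i.e.\ $\mu$ has exact Hausdorff dimension $\eta$.

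The delicate point is matching the precise formulation of the \cite{DFV} bound to the \emph{correct} notion of dimension: one needs $\gamma_\mu(p)$ on the right-hand side to be a support-type (``$L^\infty$/upper'') dimension, so that its $p\to0^+$ limit genuinely controls $\dim_H^*(\mu)$ rather than merely the continuity exponent of a single spectral component (which would only bound $\dim_{H,*}(\mu)$ and leave the support statement unproven). Verifying that the unitary analogue of the Barbaroux--Germinet--Tcheremchantsev estimates recorded in \cite{DFV} is of this stronger type, and that the accompanying $q\to1^-$ multifractal identity holds in the OPUC setting, is where the care is required; once this is in place, the limit from Theorem~\ref{t:betas} is explicit and the remaining deductions are the routine mass-distribution argument together with Theorem~\ref{t:subord}.
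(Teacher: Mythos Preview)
Your approach is essentially the same as the paper's: invoke the dynamical lower bound from \cite{DFV}, plug in the value of $\widetilde\beta^-(p)$ from Theorem~\ref{t:betas}, send $p\downarrow 0$, and combine with Theorem~\ref{t:subord}. The paper streamlines your ``delicate point'' by citing \cite[Corollary~3.13]{DFV} directly in the form $\dim_{\mathrm H}^+(\mu)\le\widetilde\beta^-(p)$ for every $p>0$, so no detour through generalized R\'enyi dimensions or a $q\to 1^-$ multifractal identity is needed.
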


\begin{remark}
In fact, our method can be used to prove a statement about the dimension of the spectral measure that is uniform in the boundary condition. Concretely, our dynamical estimates also apply to Verblunsky coefficients of the form $\alpha^\lambda_n = \lambda \alpha_n$ with $\lambda \in \partial \D$ and $\alpha$ given by \eqref{eq:sparsedef}--\eqref{eq:sparseCMVdef}. If one denotes the spectral measure associated to $\alpha^\lambda$ by $\mu^\lambda$, our methods demonstrate that $\mu^\lambda$ has exact dimension $\eta$ for all $\lambda \in \partial \D$. This solves a problem posed by Simon in \cite[Section~12.9]{S2} by improving~\cite[Theorem~12.9.4]{S2}, which shows that $\mu^\lambda$ has exact dimension $\eta$ for Lebesgue a.e.\ $\lambda \in \partial \D$. For simplicity of exposition, we will only work with $\lambda = 1$, but the modifications to make our arguments work for general $\lambda \in \partial \D$ are easy.
\end{remark}

\subsection{Quantum Walks on $\Z_+$} \label{sec:qw}

We now precisely describe quantum walks on the half-line and their relationship with CMV matrices, following \cite{CGMV,DFO,DFV}.  A quantum walk on $\Z_+$ is modeled by a unitary operator on the state space $\Hi_+ = \left( \ell^2(\Z_+) \otimes \C^2 \right) \oplus \langle \delta_0 \otimes e_\downarrow \rangle$, which models a space in which a wave packet comes equipped with a ``spin'' at each positive integer site. The elementary tensors of the form $\delta_n \otimes e_\uparrow$, and $\delta_n \otimes e_\downarrow$ with $n \in \Z_+$ together with $\delta_0 \otimes e_\downarrow$ comprise an orthonormal basis of $\Hi_+$. A time-homogeneous quantum walk scenario is given as soon as unitary coins
\begin{equation}\label{e.timehomocoins}
C_{n}
=
\begin{pmatrix}
c^{11}_{n} & c^{12}_{n} \\
c^{21}_{n} & c^{22}_{n}
\end{pmatrix}
\in \mathrm U(2), \quad n \in \Z_+,
\end{equation}
are specified. Additionally, one specifies an appropriate boundary condition at the origin, e.g.,
$$
C_0
=
\begin{pmatrix} 0 & 1 \\ -1 & 0 \end{pmatrix}.
$$
As one passes from time $t$ to time $t+1$, the update rule of the quantum walk is given by
\begin{align}
\delta_{n} \otimes e_\uparrow & \mapsto
  c^{11}_{n} \delta_{n+1} \otimes e_\uparrow
+ c^{21}_{n} \delta_{n-1} \otimes e_\downarrow , \label{e.updaterule1} \\
\delta_n \otimes e_\downarrow  & \mapsto
  c^{12}_{n} \delta_{n+1} \otimes e_\uparrow
+ c^{22}_{n} \delta_{n-1} \otimes e_\downarrow \label{e.updaterule2}.
\end{align}
If we extend this by linearity and continuity to general elements of $\mathcal{H}_+$, this defines a unitary operator $U$ on $\mathcal{H}_+$. Next, order the basis of $\mathcal{H}_+$ by taking $\phi_{2m-1} = \delta_m \otimes e_\uparrow$, $\phi_{2m} = \delta_m \otimes e_\downarrow$ for $m \in \Z_+$, and $\phi_0 = \delta_0 \otimes e_\downarrow$. In this ordered basis, the matrix representation of $U : \mathcal{H}_+ \to \mathcal{H}_+$ is given by
\begin{equation}\label{e.umatrixrep}
U
=
\begin{pmatrix}
 0 & c_1^{21} & c_1^{22} &&&&& \\
 1 & 0 & 0 &&& && \\
& 0 & 0 & c_2^{21} & c_2^{22} & && \\
& c_1^{11} & c_1^{12} & 0 & 0 & && \\
&&& 0 & 0 & c_3^{21} & c_3^{22}& \\
&&& c_2^{11} & c_2^{12} & 0 & 0 & \\
&&&& \ddots & \ddots &  \ddots & \ddots
\end{pmatrix},
\end{equation}
which is easy to check using the update rule \eqref{e.updaterule1}--\eqref{e.updaterule2}; compare \cite[Section~4]{CGMV}.

We can connect quantum walks to CMV matrices using the following observation. If all Verblunsky coefficients with even index vanish, the CMV matrix in \eqref{def:cmv} becomes
\begin{equation}\label{e.ecmvoddzero}
\small
\CMV
=
\begin{pmatrix}
0 & \overline{\alpha_1}  & \rho_1 &&& & \\
1 & 0 & 0 &&& & \\
& 0 & 0 & \overline{\alpha_3}  & \rho_3 & & \\
& \rho_1 & - \alpha_1 & 0 & 0 &  &  \\
&&& 0 & 0 & \overline{\alpha_5} & \rho_5 \\
&&& \rho_3 & - \alpha_3 & 0 & 0   \\
&&&& \ddots & \ddots &  \ddots
\end{pmatrix}.
\end{equation}
The matrix in \eqref{e.ecmvoddzero} strongly resembles the matrix representation of $U$ in \eqref{e.umatrixrep}. Note, however, that $\rho_n > 0$ for all $n$, so \eqref{e.umatrixrep} and \eqref{e.ecmvoddzero} may not match exactly when $c_n^{kk}$ is not real and positive. However, this can be easily resolved by conjugation with a suitable diagonal unitary. Since this is not an issue in the present paper, we skip the details of the gauge transformation and refer the reader to \cite{CGMV,DFO} for lucid expositions. In particular, the following result follows immediately from Theorem~\ref{t:betas}.

\begin{theorem} \label{t:sparseqw}
Let $1 \leq L_1 < L_2 < \cdots$ be a sequence of integers satisfying \eqref{eq:sparsedef}, and define coins $C_n \in \mathrm{SO}(2)$ by
\[
C_{L_j}
\eqdef
\begin{pmatrix}
r_j & - \sqrt{1-r_j^2} \\
\sqrt{1-r_j^2} & r_j
\end{pmatrix},
\quad
r_j
\eqdef
(2L_j-1)^{-\frac{1-\eta}{2\eta}}
\]
for each $j \in \Z_+$, and $C_n = I$ for all other $n$. Then, with the initial state $\psi = \delta_0 \otimes e_\downarrow$, we have
\[
\widetilde \beta^-(p)
=
\frac{p+1}{p+1/\eta},
\quad
\widetilde \beta^+(p)
=
1
\]
for all $p > 0$.
\end{theorem}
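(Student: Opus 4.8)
The plan is to obtain Theorem~\ref{t:sparseqw} as a special case of Theorem~\ref{t:betas} via the CGMV correspondence: the quantum walk in the statement is (literally) unitarily equivalent to one of the sparse CMV matrices to which Theorem~\ref{t:betas} applies. First I would identify that CMV matrix. Matching \eqref{e.umatrixrep} against \eqref{e.ecmvoddzero} as in \cite[Section~4]{CGMV}, one sees that the walk $U$ with coins $(C_n)$ corresponds to the CMV matrix $\widetilde{\CMV}$ whose even-indexed Verblunsky coefficients all vanish and whose odd-indexed ones are read off from
\[
C_n \;\longleftrightarrow\; \begin{pmatrix} \rho_{2n-1} & -\alpha_{2n-1} \\ \overline{\alpha_{2n-1}} & \rho_{2n-1} \end{pmatrix}.
\]
Every $C_n$ here is a rotation, hence has real, strictly positive diagonal entries (note $r_j>0$), so no gauge transformation is required and this is a genuine unitary equivalence carrying the ordered basis $(\phi_n)$ of $\Hi_+$ to the standard basis $(\delta_n)$ of $\ell^2(\Z_0)$ and $\psi=\phi_0$ to $\delta_0$. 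Plugging in the specific coins of Theorem~\ref{t:sparseqw} gives $\alpha_{2L_j-1}=\sqrt{1-r_j^2}=\sqrt{1-(2L_j-1)^{-\frac{1-\eta}{\eta}}}$ and $\alpha_{2n-1}=0$ for $n\notin\{L_j\}$; that is, $\widetilde{\CMV}$ has Verblunsky coefficients of precisely the form \eqref{eq:sparseCMVdef}, with the \emph{same} $\eta$ but with the sparse sequence $(2L_j-1)_{j\geq 1}$ in place of $(L_j)$ (these new indices are automatically odd, which is consistent with the vanishing of the even coefficients).

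Next I would check that $(2L_j-1)_{j\geq1}$ still satisfies the sparseness hypothesis \eqref{eq:sparsedef}. Using $\log L\leq\log(2L-1)\leq\log L+\log 2$ for integers $L\geq1$,
\[
\frac{\log\big((2L_1-1)\cdots(2L_{N-1}-1)\big)}{\log(2L_N-1)} \;\leq\; \nu_N+\frac{(N-1)\log 2}{\log L_N}.
\]
The first term tends to $0$ by hypothesis. For the second, $L_j\geq j$ yields $\log(L_1\cdots L_{N-1})\geq\log((N-1)!)$, while $\nu_N\to 0$ gives $\log L_N\geq\log(L_1\cdots L_{N-1})$ once $\nu_N<1$; since $\log((N-1)!)/N\to\infty$, this forces $N/\log L_N\to0$ and the second term vanishes as well. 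Hence $\widetilde{\CMV}$ is covered by Theorem~\ref{t:betas}, which gives $\widetilde\beta^-(p)=\frac{p+1}{p+1/\eta}$ and $\widetilde\beta^+(p)=1$ for the dynamics of $\widetilde{\CMV}$ started at $\delta_0$.

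Finally I would transfer these exponents back to the walk. Under the identification above, $|\langle\phi_n,U^t\psi\rangle|^2=|\langle\delta_n,\widetilde{\CMV}^t\delta_0\rangle|^2$ for all $n$ and $t$, so the time-averaged probabilities, the moments $\langle|X|^p\rangle(T)$, and hence $\widetilde\beta^\pm(p)$ for the walk (measured by the Hilbert-space index) coincide with those of $\widetilde{\CMV}$, proving the theorem. If one prefers to measure spreading by the physical site $m\in\Z_+$, where the probability at site $m$ is $|\langle\phi_{2m-1},\cdot\rangle|^2+|\langle\phi_{2m},\cdot\rangle|^2$, the two notions of position differ only by the bounded comparison between $m$ and $2m$, which changes $\langle|X|^p\rangle(T)$ by a $T$-independent factor and leaves $\widetilde\beta^\pm(p)$ unchanged. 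Since every step is a direct appeal to the CGMV dictionary or an elementary estimate, there is no real obstacle; the only point needing a little care is the first step --- matching the coin entries to the Verblunsky coefficients correctly and confirming that rotation coins require no gauge adjustment.
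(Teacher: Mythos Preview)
Your proof is correct and takes essentially the same approach as the paper: reduce to Theorem~\ref{t:betas} via the CGMV correspondence, observing that the walk with these rotation coins is exactly the CMV matrix of the form \eqref{eq:sparseCMVdef} with sparse sequence $(2L_j-1)_{j\ge1}$ and the same $\eta$. The paper simply declares that Theorem~\ref{t:sparseqw} ``follows immediately from Theorem~\ref{t:betas}'' without spelling out the coin--Verblunsky matching or the verification that $(2L_j-1)$ still satisfies \eqref{eq:sparsedef}; you have supplied both of those details carefully, and they are correct.
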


Physically speaking, the entries of the $n$th coin may be thought of as reflection and transmission coefficients at the site $n \in \Z_+$. Concretely, if $C_n = I_2$, the $2 \times 2$ identity matrix, then this corresponds to placing a perfect transmitter at site $n$; on the other hand, if
\[
C_n
=
J
\eqdef
\begin{pmatrix}
0 & -1 \\
1 &  0
\end{pmatrix},
\]
then this corresponds to placing a perfect reflector at site $n$. Since $C_{L_j} \sim J$ for large $j$, $C_{L_j}$ may be thought of as a strong reflector.

Strictly speaking, we have not defined the moments or the transport exponents for unitary operators on $\Hi_+$; however, the definitions are completely similar, and it is easy to check that the unitary equivalence that identifies the update rule of a quantum walk on $\Hi_+$ with a CMV matrix on $\Hi$ preserves the transport exponents.

\bigskip

The structure of the remainder of the paper is as follows. Section~\ref{sec:bg} contains background on the relevant objects needed to tackle the proof of Theorem~\ref{t:betas}. Section~3 proves lower bounds on the moments and transport exponents, and Section~4 proves upper bounds on the same. Finally, we describe the proof of Theorem~\ref{t:subord} in Appendix~\ref{sec:speccont}.

\section*{Acknowledgements}  D.\ D.\ and J.\ F.\ were supported in part by NSF grants DMS--1067988 and DMS--13616125. J.\ E., G.\ H., and A.\ V.\ were supported by NSF grant DMS--1148609.

\section{Background and Preparatory Work} \label{sec:bg}

\subsection{Return Probabilities and the Poisson Kernel}

We will frequently consider the (time-averaged) probability that the wavepacket returns to its initial state, given by
$$
\left( 1 - e^{-2/T} \right) \sum_{t = 0}^\infty e^{-2t/T} |\langle \delta_0, \CMV^t\delta_0 \rangle|^2.
$$
If $\mu = \mu_{\delta_0}$ denotes the spectral measure of $\CMV$, we note that
\begin{align*}
\sum_{t = 0}^\infty e^{-2t/T} |\langle \delta_0, \CMV^t \delta_0 \rangle|^2
& =
\sum_{t = 0}^\infty e^{-2t/T} \int_{\partial \D} \! \int_{\partial \D} \! z^t w^{-t} \, d\mu(z) \, d\mu(w) \\
& =
\frac{1}{2} \int_{\partial \D} \! \int_{\partial \D} \! \left( 1 + P_{e^{-2/T}} (z/w) \right) \, d\mu(z) \, d\mu(w),
\end{align*}
where $P$ denotes the Poisson kernel, defined by
\begin{equation} \label{def:poissonk}
P_r(\tau)
\eqdef
\sum_{\ell \in \Z} r^{|\ell|} \tau^\ell
=
\frac{1-r^2}{1-2r \mathrm{Re}(\tau) + r^2},
\quad
r \in [0,1), \, |\tau| = 1.
\end{equation}
Notice that the second step in the calculation follows from taking the real part of the second expression. Thus, we define
$$
J(\varepsilon)
\eqdef
\frac{1}{2} (1-e^{-2\varepsilon}) \int_{\partial \D} \! \int_{\partial \D} \! \left( 1 + P_{e^{-2\varepsilon}} (z/w) \right) \, d\mu(z) \, d\mu(w),
\quad
\varepsilon >0,
$$
so that the (time-averaged) return probability at time $T$ is simply $J(1/T)$.

\subsection{The Gesztesy--Zinchenko Cocycle}

Often, to study the spectral theory of a CMV matrix $\CMV$, we must examine solutions to the difference equation $\CMV u = zu$ with $z \in \C \setminus \{0\}$ and $u \in \C^{\Z_0}$. To that end, consider the  matrices
\begin{equation} \label{gz:onestepmats:def}
P(\alpha,z)
=
\frac{1}{\rho}
\begin{pmatrix}
-\alpha & z^{-1} \\
z & - \overline{\alpha}
\end{pmatrix}
,
\quad
Q(\alpha,z)
=
\frac{1}{\rho}
\begin{pmatrix}
-\overline{\alpha} & 1 \\
1 & - \alpha
\end{pmatrix},
\, \alpha \in \D, z \in \C \setminus \{0\},
\end{equation}
where $\rho = \rho_\alpha = \left( 1 - |\alpha|^2\right)^{1/2}$ as before. Notice that
$$
\det(P(\alpha,z))
=
\det(Q(\alpha,z))
=
-1
\text{ for all } \alpha \in \D, \, z \in \C \setminus \{0\}.
$$
These matrices come from \cite{GZ06}, though our $\alpha_n$ is their $-\overline{\alpha_{n+1}}$. One may use $P$ and $Q$ to capture the recursion described by the difference equation $\CMV u = zu$, in a sense which we presently describe. Since these matrices are absolutely central to our work, and our normalization is different from that of \cite{GZ06}, we provide a proof of the following proposition for the convenience of the reader.

\begin{prop} \label{p:gz:stepbystep}
Let $u \in \C^{\Z_0}$ be such that $\CMV u = zu$, and define $v \eqdef \mathcal M u$, $\Phi(n) \eqdef (u(n), v(n))^\top$. For all $n \ge 0$, we have
\begin{equation} \label{eq:gz:stepbystep}
\Phi(n+1)
=
\begin{cases}
Q(\alpha_n,z)
\Phi(n)
& \text{if } n \text{ is odd} \\
P(\alpha_n,z)
\Phi(n)
& \text{if } n \text{ is even}.
\end{cases}
\end{equation}

\end{prop}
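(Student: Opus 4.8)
The plan is to verify the recursion \eqref{eq:gz:stepbystep} by unwinding the definitions $\CMV = \mathcal L \mathcal M$, $v = \mathcal M u$, and $\CMV u = zu$ at the level of $2\times 2$ blocks. First I would rewrite the eigenvalue equation as $\mathcal L v = \mathcal L \mathcal M u = \CMV u = z u$, so that we simultaneously have $v = \mathcal M u$ and $\mathcal L v = z u$. The point is that $\mathcal M$ couples the coordinates $(u(2k-1), u(2k))$ through the block $\Theta(\alpha_{2k-1})$ while fixing $u(0)$, whereas $\mathcal L$ couples $(v(2k), v(2k+1))$ through $\Theta(\alpha_{2k})$; each of these two scalar-pair relations can be solved to express one coordinate pair in terms of the adjacent one.

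Concretely, I would proceed as follows. For the \emph{even} step $n = 2k$: the relation $\mathcal L v = zu$ restricted to rows $2k, 2k+1$ reads $\Theta(\alpha_{2k}) (v(2k), v(2k+1))^\top = z(u(2k), u(2k+1))^\top$, and the relation $v = \mathcal M u$ restricted to row $2k$ gives $v(2k)$ in terms of $u(2k), u(2k+1)$ (or, at the boundary, $v(0) = u(0)$). Writing out $\Theta(\alpha_{2k}) = \left(\begin{smallmatrix} \overline{\alpha_{2k}} & \rho_{2k} \\ \rho_{2k} & -\alpha_{2k}\end{smallmatrix}\right)$ explicitly and eliminating, I can solve the resulting linear system for $(u(2k+1), v(2k+1))^\top$ in terms of $(u(2k), v(2k))^\top$; a short computation should produce exactly the matrix $P(\alpha_{2k}, z) = \rho_{2k}^{-1}\left(\begin{smallmatrix} -\alpha_{2k} & z^{-1} \\ z & -\overline{\alpha_{2k}}\end{smallmatrix}\right)$ acting on $\Phi(2k)$, yielding $\Phi(2k+1)$. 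For the \emph{odd} step $n = 2k+1$: here the coupling comes purely from $v = \mathcal M u$ via the block $\Theta(\alpha_{2k+1})$ acting on $(u(2k+1), u(2k+2))^\top$ to give $(v(2k+1), v(2k+2))^\top$; solving this $2\times 2$ system for $(u(2k+2), v(2k+2))^\top$ in terms of $(u(2k+1), v(2k+1))^\top$ should produce $Q(\alpha_{2k+1}, z)$. One must also separately check the initial step $n = 0$, where $v(0) = u(0)$ and the even-step computation degenerates appropriately; this is a direct substitution using $\Theta(\alpha_0)$ in the first block of $\mathcal L$.

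I do not anticipate a genuine obstacle here: the proposition is an identity whose proof is a bookkeeping exercise in matching up block rows of $\mathcal L$ and $\mathcal M$. The one place requiring care is the \emph{boundary behavior at} $n=0$ (since $\mathcal M$ has a $1\times 1$ block there rather than a $\Theta$-block) and, relatedly, keeping the parity conventions consistent with the indexing in \eqref{def:cmv} — i.e., making sure that the even/odd split in \eqref{eq:gz:stepbystep} lines up with which of $\mathcal L$, $\mathcal M$ supplies the binding relation at each step. A secondary bookkeeping point is the normalization discrepancy flagged in the text (our $\alpha_n$ versus the $-\overline{\alpha_{n+1}}$ of \cite{GZ06}), which means I should derive the formulas from scratch from \eqref{def:cmv} rather than quote \cite{GZ06}; the determinant computation $\det P = \det Q = -1$ is then a trivial check that serves as a useful sanity test that the algebra was done correctly. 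Once the even and odd steps and the $n=0$ base case are each verified by direct computation, the proposition follows, and an immediate induction gives the transfer-matrix description of solutions that the rest of the paper relies on.
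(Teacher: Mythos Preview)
Your proposal is correct and follows essentially the same approach as the paper: derive the block relations $\Theta(\alpha_{2k})(v_{2k},v_{2k+1})^\top = z(u_{2k},u_{2k+1})^\top$ from $\mathcal L v = zu$ and $\Theta(\alpha_{2k-1})(u_{2k-1},u_{2k})^\top = (v_{2k-1},v_{2k})^\top$ from $v = \mathcal M u$, then solve each $2\times 2$ system to obtain $P$ and $Q$ respectively. One small slip: in the even step you do \emph{not} need the $\mathcal M$-relation at row $2k$ (and in any case that row expresses $v(2k)$ in terms of $u(2k-1),u(2k)$, not $u(2k),u(2k+1)$) --- the $\mathcal L$ block alone suffices, and the $n=0$ case is then no different from any other even step.
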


\begin{proof}
Since $v = \mathcal M u$, we have $u_0 = v_0$ and
\begin{equation} \label{eq:uvtheta1}
\begin{pmatrix} v_{2k-1} \\ v_{2k} \end{pmatrix}
=
\Theta(\alpha_{2k-1})
\begin{pmatrix} u_{2k-1} \\ u_{2k} \end{pmatrix}
\end{equation}
for each $k \geq 1$. Looking at the first coordinate of \eqref{eq:uvtheta1}, we get
$$
v_{2k-1}
=
\overline{\alpha_{2k-1}} u_{2k-1} + \rho_{2k-1} u_{2k},
$$
and hence
$$
u_{2k}
=
\frac{1}{\rho_{2k-1}} \left(v_{2k-1} - \overline{\alpha_{2k-1}} u_{2k-1} \right).
$$
Similarly, looking at the first coordinate of
$$
\begin{pmatrix} u_{2k-1} \\ u_{2k} \end{pmatrix}
=
\Theta(\alpha_{2k-1})^{-1}
\begin{pmatrix} v_{2k-1} \\ v_{2k} \end{pmatrix}
$$
and using $\Theta(\alpha)^{-1} = \Theta(\overline\alpha)$, we get
$$
v_{2k}
=
\frac{1}{\rho_{2k-1}}\left( u_{2k-1} - \alpha_{2k-1} v_{2k-1}  \right).
$$
These calculations prove \eqref{eq:gz:stepbystep} for odd $n$. Similarly, we note that $\mathcal L v = \CMV u = zu$, so
$$
z\begin{pmatrix} u_{2k} \\ u_{2k+1} \end{pmatrix}
=
\Theta(\alpha_{2k})
\begin{pmatrix} v_{2k} \\ v_{2k+1} \end{pmatrix}
$$
for all $k \geq 0$. Solving this system for $u_{2k+1}$ and $v_{2k+1}$ in terms of $u_{2k}$ and $v_{2k}$ resolves the other case.
\end{proof}

The foregoing proposition motivates the following definition. Denote $Y(n,z) = Q(\alpha_n,z)$ when $n$ is odd and $Y(n,z) = P(\alpha_n,z)$ when $n$ is even; then, the \emph{Gesztesy-Zinchenko cocycle} is defined by
\begin{equation} \label{gz:def}
Z(n,m;z)
=
\begin{cases}
Y(n - 1,z) \cdots Y(m, z) & n > m \\
I & n = m \\
Y(n, z)^{-1} \cdots Y(m - 1, z)^{-1} & n < m
\end{cases}
\end{equation}
If $u$, $v$, and $\Phi$ are as above, we have
\begin{equation} \label{eq:gzsoltransfer}
\Phi(n)
=
Z(n, m; z) \Phi(m)
\text{ for all } n,m \in \Z_+
\end{equation}
by \eqref{eq:gz:stepbystep}; compare \cite[Lemma~2.2]{GZ06}. We will often abbreviate the names and refer to $Z(n,m;z)$ as a GZ matrix.

\subsection{Generalized Eigenfunctions}

Given a CMV matrix $\CMV$, let $\mu = \mu_{\delta_0}$ denote the corresponding spectral measure, and let $W : \Hi \to L^2(\partial \D, \mu)$ denote the canonical unitary equivalence which maps $g(\CMV) \delta_0$ to $g$. The \emph{generalized eigenfunctions} of $\CMV$ are defined by
$$
\xi_n
=
W\delta_n,
\quad
n \ge 0.
$$
These will play an important role in our dynamical analysis. Their signficance in our setting is readily apparent from the identity
$$
\widetilde a(n,T)
=
\frac{1}{2} (1-e^{-2/T}) \int _{\partial \D} \! \int_{\partial \D} \! \left( 1+ P_{e^{-2/T}}(z/w) \right) \xi_n(z) \overline{\xi_n(w)} \, d\mu(z) \, d\mu(w),
$$
which is a straightforward consequence of the definitions and the spectral theorem.

\subsection{The Carath\'eodory Function}
If $\CMV$ is a CMV matrix and $\mu = \mu_{\delta_0}$ is its spectral measure, then the \emph{Carath\'eodory function} of $\CMV$ is defined by
$$
F(z)
=
F_\mu(z)
\eqdef
\int_{\partial \D} \frac{\tau+z}{\tau-z} \, d\mu(\tau).
$$
In the present setting, the Carath\'eodory function often plays a role analogous to that played by the Borel transform for Borel measures on $\R$. One can check that
$$
\mathrm{Re}(F_\mu(z))
=
\int_{\partial\D} \frac{1-|z|^2}{|\tau-z|^2} \, d\mu(\tau)
$$
for all $z \notin \supp(\mu) = \sigma(\CMV)$. Equivalently, writing $z = re^{i\theta}$, one has
\begin{equation} \label{eq:cara:poisson}
\mathrm{Re}(F(z))
=
-\int_{\partial\D} \! P_{1/r}(\tau e^{-i\theta}) \, d\mu(\tau)
\text{ whenever } r > 1.
\end{equation}
We begin by noting the following formula for the integral of the real part of the Carath\'eodory function against arc length measure.

\begin{lemma} \label{l:cara:int}
Let $\CMV$ be any CMV matrix with Carath\'eodory function $F$. For all $\varepsilon > 0$, we have
$$
\int_0^{2\pi} \! \mathrm{Re}\left( F(e^{i\theta +\varepsilon}) \right) \frac{d\theta}{2\pi}
=
-1.
$$
\end{lemma}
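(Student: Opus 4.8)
The plan is to reduce the statement to the elementary fact that the Poisson kernel has total mass one, together with the observation that the spectral measure $\mu = \mu_{\delta_0}$ is a probability measure: since $\delta_0$ is a unit vector, $\mu(\partial\D) = \langle \delta_0, g(\CMV)\delta_0\rangle$ with $g \equiv 1$ equals $\langle \delta_0,\delta_0\rangle = 1$.

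First I would fix $\varepsilon > 0$ and write $z = e^{i\theta+\varepsilon} = re^{i\theta}$ with $r = e^\varepsilon > 1$. Since $\supp(\mu) \subseteq \partial\D$, we have $z \notin \supp(\mu)$, so formula \eqref{eq:cara:poisson} applies and gives
$$
\mathrm{Re}\left(F(e^{i\theta+\varepsilon})\right)
=
-\int_{\partial\D} \! P_{e^{-\varepsilon}}(\tau e^{-i\theta}) \, d\mu(\tau).
$$
Integrating in $\theta$ against $\frac{d\theta}{2\pi}$ and invoking Tonelli's theorem — legitimate because $P_r \geq 0$ for $r \in [0,1)$ and $e^{-\varepsilon} \in (0,1)$ — we obtain
$$
\int_0^{2\pi} \! \mathrm{Re}\left(F(e^{i\theta+\varepsilon})\right) \frac{d\theta}{2\pi}
=
-\int_{\partial\D} \! \left( \int_0^{2\pi} \! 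P_{e^{-\varepsilon}}(\tau e^{-i\theta}) \, \frac{d\theta}{2\pi} \right) d\mu(\tau).
$$
It then remains to evaluate the inner integral. Using the series representation in \eqref{def:poissonk}, $P_{e^{-\varepsilon}}(\tau e^{-i\theta}) = \sum_{\ell \in \Z} e^{-|\ell|\varepsilon} \tau^\ell e^{-i\ell\theta}$, a series that converges uniformly in $\theta$ for fixed $\varepsilon > 0$ and $\tau \in \partial\D$; integrating term by term and using $\int_0^{2\pi} e^{-i\ell\theta}\,\frac{d\theta}{2\pi} = \delta_{\ell,0}$, only the $\ell=0$ term survives, so the inner integral equals $1$ for every $\tau \in \partial\D$. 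Substituting back yields $-\int_{\partial\D} 1 \, d\mu(\tau) = -\mu(\partial\D) = -1$, as claimed.

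There is essentially no serious obstacle here; the only points requiring a moment's care are the interchange of the two integrals, handled by nonnegativity of the Poisson kernel, and the termwise integration of the Poisson series, handled by uniform convergence for $r < 1$. Alternatively, one could bypass the series by citing the standard fact that the Poisson kernel is an approximate identity of unit mass, $\int_0^{2\pi} P_r(e^{i\psi})\,\frac{d\psi}{2\pi} = 1$, combined with the change of variables $\psi = \arg\tau - \theta$.
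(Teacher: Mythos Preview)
Your proof is correct and follows essentially the same route as the paper's: both use \eqref{eq:cara:poisson} to express $\Re F$ via the Poisson kernel, interchange the $\theta$- and $\mu$-integrals, and then expand $P_{e^{-\varepsilon}}$ as a series to see that only the $\ell=0$ term survives. The only cosmetic differences are that you invoke Tonelli and uniform convergence where the paper cites Fubini and dominated convergence.
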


\begin{proof}
By \eqref{eq:cara:poisson}, Fubini's Theorem, the definition of the Poisson kernel, and dominated convergence, we have
\begin{align*}
\int_0^{2\pi} \Re \! \left(F(e^{i\theta + \varepsilon})\right) \, \frac{d\theta}{2\pi}
& =
- \int_0^{2\pi} \int_{\partial \D} P_{e^{-\varepsilon}}(\tau e^{-i\theta}) \, d\mu(\tau)\, \frac{d\theta}{2\pi} \\
& =
-\int_{\partial \D} \int_0^{2\pi} \sum_{j \in \Z}  \tau^j e^{-ij\theta-\varepsilon|j|}\,  \frac{d\theta}{2\pi} \, d\mu(\tau) \\
& = -1,
\end{align*}
as desired.
\end{proof}

There are several useful connections between the Carath\'eodory function of $\CMV$ and its resolvent, which we note in the following proposition.

\begin{prop} \label{p:cara:resolve}
Let $\CMV$ be a CMV matrix, $z \notin \sigma(\CMV)$, and $u = (\CMV - z)^{-1} \delta_0$. If $z \neq 0$, we have
\begin{equation} \label{eq:cfct:u0}
u_0
=
\frac{F(z) - 1}{2z}.
\end{equation}
If $z \notin \partial \D$, one also has
\begin{equation} \label{eq:cfct:unorm}
\| u \|^2
=
\frac{\mathrm{Re}(F_\mu(z))}{1-|z|^2}.
\end{equation}
Finally, with $v = \mathcal M u$, we have
\begin{equation} \label{eq:gzcara:resolvent}
\begin{pmatrix} u_n \\ v_n \end{pmatrix}
=
\frac{1}{2z} Z(n,0;z) \begin{pmatrix}
F(z)+1 \\ F(z)-1
\end{pmatrix}
\quad
\text{for all } n \in \Z_+
\end{equation}
whenever $z \notin \sigma(\CMV) \cup \{0\}$.
\end{prop}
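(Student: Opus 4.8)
The plan is to prove the three identities in order, using the spectral theorem for the first two and the Gesztesy--Zinchenko cocycle for the third, with \eqref{eq:cfct:u0} entering the proof of \eqref{eq:gzcara:resolvent} as the crucial piece of boundary data. For \eqref{eq:cfct:u0}: since the canonical unitary equivalence $W$ carries $\CMV$ to multiplication by $\tau$ on $L^2(\partial\D,\mu)$ and $W\delta_0\equiv 1$, the vector $u=(\CMV-z)^{-1}\delta_0$ is carried to $\tau\mapsto(\tau-z)^{-1}$, so $u_0=\langle\delta_0,u\rangle=\int_{\partial\D}(\tau-z)^{-1}\,d\mu(\tau)$. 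Using $\tfrac{\tau+z}{\tau-z}=1+\tfrac{2z}{\tau-z}$ and $\mu(\partial\D)=1$ one gets $F(z)=1+2zu_0$, which rearranges to \eqref{eq:cfct:u0}. For \eqref{eq:cfct:unorm}: the same equivalence gives $\|u\|^2=\int_{\partial\D}|\tau-z|^{-2}\,d\mu(\tau)$, and dividing the already-recorded identity $\Re F_\mu(z)=\int_{\partial\D}\frac{1-|z|^2}{|\tau-z|^2}\,d\mu(\tau)$ by $1-|z|^2$ --- permissible since $z\notin\partial\D$ --- yields the claim.

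The substance is \eqref{eq:gzcara:resolvent}, and the point to watch is that $u$ is \emph{not} an eigenfunction of $\CMV$: writing $v=\mathcal M u$ and $\Phi(n)=(u_n,v_n)^\top$ as in Proposition~\ref{p:gz:stepbystep}, one has $\mathcal L v=\CMV u=zu+\delta_0$, so the eigenvalue equation $\CMV w=zw$ holds in every row except the zeroth. Revisiting the proof of Proposition~\ref{p:gz:stepbystep}, the one-step relations there split into two types: the ``odd'' relations $\Phi(2k)=Q(\alpha_{2k-1},z)\Phi(2k-1)$ depend only on $v=\mathcal M u$, which holds verbatim here; and the ``even'' relations $\Phi(2k+1)=P(\alpha_{2k},z)\Phi(2k)$ depend only on rows $2k$ and $2k+1$ of $\mathcal L v=zu$, hence remain valid for every $k\ge 1$. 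Therefore $\Phi(n+1)=Y(n,z)\Phi(n)$ for all $n\ge 1$, so $\Phi(n)=Z(n,1;z)\Phi(1)$ for all $n\in\Z_+$.

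It then remains to identify $\Phi(1)$. Rows $0$ and $1$ of $(\CMV-z)u=\delta_0$, rewritten through $\mathcal L$ and $v=\mathcal M u$, read $\overline{\alpha_0}v_0+\rho_0 v_1=zu_0+1$ and $\rho_0 v_0-\alpha_0 v_1=zu_1$; together with $v_0=u_0$ these solve for $u_1$ and $v_1$ in terms of $u_0$ alone. Substituting $u_0=\tfrac{F(z)-1}{2z}$ from \eqref{eq:cfct:u0} and comparing with a direct computation of $\tfrac1{2z}P(\alpha_0,z)\big(F(z)+1,\,F(z)-1\big)^\top$ shows $\Phi(1)=\tfrac1{2z}Z(1,0;z)\big(F(z)+1,\,F(z)-1\big)^\top$, using $Z(1,0;z)=Y(0,z)=P(\alpha_0,z)$ since $0$ is even. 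Finally, the cocycle property $Z(n,1;z)Z(1,0;z)=Z(n,0;z)$ promotes this to \eqref{eq:gzcara:resolvent} for every $n\in\Z_+$.

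I expect the main obstacle to be purely a matter of care in the last two steps. One must verify that the source term $\delta_0$ corrupts the Gesztesy--Zinchenko recursion at exactly one site, so that the recursion still propagates unhindered from $n=1$ onward, and one must resist feeding the naive boundary vector $(u_0,v_0)^\top=(u_0,u_0)^\top$ into \eqref{eq:gz:stepbystep}: it is the corrected vector $\tfrac1{2z}\big(F(z)+1,\,F(z)-1\big)^\top$ that appears, and pinning this down is precisely where \eqref{eq:cfct:u0} does the essential work. Everything else reduces to short $2\times 2$ matrix manipulations.
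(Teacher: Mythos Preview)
Your proposal is correct and follows essentially the same approach as the paper: the first two identities are proved via the spectral theorem exactly as you do, and for \eqref{eq:gzcara:resolvent} the paper likewise observes that $\mathcal L v = zu + \delta_0$ leaves the Gesztesy--Zinchenko recursion intact for $n\geq 1$ and absorbs the $\delta_0$ perturbation at the origin by replacing $u_0$ with $\widetilde u_0 = u_0 + 1/z = \tfrac{1}{2z}(F(z)+1)$, which is exactly your corrected initial vector. The only cosmetic difference is that the paper packages the zeroth-block computation as ``run the recursion from $n=0$ with the modified initial vector $(\widetilde u_0, v_0)^\top$'' rather than your ``run from $n=1$ and separately match $\Phi(1)$,'' but these are the same calculation.
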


\begin{proof}
By definition,
\[
F(z)
=
\int_{\partial \D}\left( 1 + \frac{2z}{\tau-z} \right) \, d\mu(\tau)
=
1 + 2z\langle \delta_0, (\CMV - z)^{-1} \delta_0 \rangle
=
1 + 2zu_0.
\]
Solving for $u_0$ yields \eqref{eq:cfct:u0}.

Next, by linearity of the integral and the spectral theorem,
\[
\mathrm{Re}(F(z))
=
\int_{\partial \D} \! \frac{1-|z|^2}{|\tau - z|^2} \, d\mu(\tau)
=
\left( 1 - |z|^2 \right) \! \| u \|^2,
\]
which proves \eqref{eq:cfct:unorm}.

Let us now turn to \eqref{eq:gzcara:resolvent}. Since $v = \mathcal M u$, we have
\begin{equation} \label{eq:odd:uvtheta}
v_0 = u_0
=
\frac{F(z)-1}{2z},
\text{ and }
\Theta(\alpha_{2k-1})
\begin{pmatrix} u_{2k-1} \\ u_{2k} \end{pmatrix}
=
\begin{pmatrix} v_{2k-1} \\ v_{2k} \end{pmatrix},
\, k \geq 1.
\end{equation}
Similarly, we note that $\mathcal L v = \CMV u = zu + \delta_0$, so
\begin{equation} \label{eq:even:uvtheta}
 \Theta(\alpha_{2k})
\begin{pmatrix} v_{2k} \\ v_{2k+1} \end{pmatrix}
=
z\begin{pmatrix} u_{2k} \\ u_{2k+1} \end{pmatrix}
\text{ for all } k \geq 1.
\end{equation}
Finally, using the first two rows and columns of $\mathcal L v = zu + \delta_0$, we get
$$
\Theta(\alpha_0)
\begin{pmatrix}
v_0 \\
v_1 \\
\end{pmatrix}
=
\begin{pmatrix}
zu_0 + 1 \\
zu_1 \\
\end{pmatrix},
$$
so we get
\begin{equation} \label{eq:zero:uvtheta}
\Theta(\alpha_0) \begin{pmatrix}
v_0 \\
v_1 \\
\end{pmatrix}
=
\begin{pmatrix}
z\widetilde{u}_0 \\
zu_1 \\
\end{pmatrix},
\end{equation}
where $\widetilde{u}_0 = u_0 + \frac{1}{z} = \frac{1}{2z}(F(z)+1)$. Following the proof of Proposition~\ref{p:gz:stepbystep} and using \eqref{eq:odd:uvtheta}, \eqref{eq:even:uvtheta}, and \eqref{eq:zero:uvtheta}, we have
$$
\begin{pmatrix}
u_n \\
v_n \\
\end{pmatrix}
=
Z(n,0;z) \begin{pmatrix}
\widetilde{u}_0 \\
v_0 \\
\end{pmatrix}
$$
for all $n \geq 1$, which proves \eqref{eq:gzcara:resolvent}.
\end{proof}

\section{Lower Bounds}

In this section, we will prove the lower bounds from Theorem~\ref{t:betas}, where we view the two identities as pairs of inequalities, following the general approach of \cite{T2005}. Let us make a brief comment on notation. Throughout the paper, $C$ will denote a constant which depends only on $\eta$ and $\{L_j\}$, and $C_p$ denotes a constant which may additionally depend on $p > 0$; to avoid cluttering the notation, we will use the same letters in all corollaries, lemmas, propositions, and theorems, though they may change from line to line.

We will implement the following overall strategy to prove the desired lower bounds. First, we will use the definition of $(L_j)_{j=1}^\infty$ and $(\alpha_n)_{n=0}^\infty$ to prove upper bounds on the GZ matrices (Lemma~\ref{l:znormbound}). Next, using the connection between the GZ transfer matrices and the resolvent of $\CMV$ from \eqref{eq:gzcara:resolvent} and the Parseval formula, we can translate these into lower bounds on the outside probabilities (Theorem~\ref{t:pbounds}). At last, these lower bounds can then be translated into appropriate lower bounds on the moments (Corollary~\ref{c:momentbound} and Theorem~\ref{t:betalowerbounds}). The precise details follow. We begin with a few preparatory lemmas. First, we relate the return and outside probabilities.

\begin{lemma}\label{l:firstbound}
If $M(T) \eqdef 1 / \left(8J(T^{-1}) \right)$, then
\[
P (n \geq M(T), T)
\geq
\frac{1}{2}
>
0.
\]
\end{lemma}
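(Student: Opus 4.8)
The plan is to exploit the elementary fact that a probability distribution on $\Z_0$ cannot be too concentrated near the origin if its return probability is small, combined with a crude Cauchy--Schwarz bound on the inside probabilities. The key identity is the observation that $J(T^{-1})$ is exactly the time-averaged return probability, i.e.\ $J(T^{-1}) = \widetilde a(0,T)$. More precisely, from the definition of $J$ and the calculation in Subsection~2.1 we have
\[
J(T^{-1})
=
\left(1-e^{-2/T}\right) \sum_{t=0}^\infty e^{-2t/T} \left| \langle \delta_0, \CMV^t \delta_0 \rangle \right|^2
=
\widetilde a(0,T).
\]

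Next I would bound the inside probability $P(n < M, T)$ by something controlled by $\widetilde a(0,T)$. The mechanism here is that each $\widetilde a(n,T)$ with $n$ small is comparable to $\widetilde a(0,T)$; this follows because $\delta_0$ is a cyclic vector and, more concretely, because the Parseval formula \eqref{eq:parsevalprob} expresses $\widetilde a(n,T)$ as an integral of $\left| \langle \delta_n, (\CMV - e^{i\theta+1/T})^{-1}\delta_0 \rangle \right|^2$, and the matrix elements $\langle \delta_n, (\CMV-z)^{-1}\delta_0\rangle$ for small $n$ are controlled by $u_0 = \langle \delta_0,(\CMV-z)^{-1}\delta_0\rangle$ together with a bounded number of steps of the GZ cocycle with $\alpha_k = 0$ (since, for $n$ not equal to any $L_j$, the relevant Verblunsky coefficients vanish, and the matrices $P(0,z)$, $Q(0,z)$ have norm controlled uniformly for $z$ near $\partial\D$). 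Summing over $0 \le n < M$ then gives $P(n<M,T) \le C' M \, \widetilde a(0,T) = C' M J(T^{-1})$ for an absolute constant $C'$. Choosing the constant $8$ in the definition of $M(T)$ so that $C' M(T) J(T^{-1}) = C'/8 \le 1/2$ (absorbing $C'$ appropriately, or simply noting one can take the implicit constant to be $1$ in the simplest bound $\widetilde a(n,T) \le \widetilde a(0,T)$ if available via unitarity and the structure of $\CMV_0$-like blocks) yields $P(n < M(T), T) \le 1/2$.

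Finally, since $\sum_{n \ge 0} \widetilde a(n,T) = \left(1-e^{-2/T}\right)\sum_{t\ge 0} e^{-2t/T} \|\CMV^t \delta_0\|^2 = 1$ by unitarity of $\CMV$, we conclude
\[
P(n \ge M(T), T)
=
1 - P(n < M(T), T)
\ge
1 - \tfrac12
=
\tfrac12,
\]
which is the claim.

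The main obstacle is the second step: establishing the bound $\widetilde a(n,T) \le C\,\widetilde a(0,T)$ (or more precisely $P(n<M,T) \le C M J(T^{-1})$) with a clean, explicit constant, since one must carefully track how the matrix elements $\langle \delta_n, (\CMV-z)^{-1}\delta_0\rangle$ relate to $u_0$ via \eqref{eq:gzcara:resolvent}. The point is that for $n < L_1$ (or more generally away from the barrier sites) the GZ cocycle $Z(n,0;z)$ is a product of the fixed matrices $P(0,z), Q(0,z)$, whose norms are bounded by an absolute constant uniformly for $|z| = e^{1/T}$ with $T$ large; combined with \eqref{eq:cfct:u0} this gives $|u_n|, |v_n| \le C(|F(z)|+1)$, hence after the Parseval integration $\widetilde a(n,T) \le C \widetilde a(0,T) + (\text{lower order})$, and the extra terms are handled using Lemma~\ref{l:cara:int} to control $\int |F|$. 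One then only needs $M$ proportional to $1/J(T^{-1})$ with the right constant, which is precisely what the choice $M(T) = 1/(8 J(T^{-1}))$ provides.
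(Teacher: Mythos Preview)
Your approach has a genuine gap. The inequality you aim for, $P(n<M,T) \le C' M\, J(T^{-1})$, would follow from a pointwise bound $\widetilde a(n,T) \le C\,\widetilde a(0,T)$ with an \emph{absolute} constant $C$ for all $0\le n < M(T)$. But $M(T)\to\infty$ as $T\to\infty$, so you cannot restrict to $n<L_1$; the range $[0,M(T))$ eventually contains arbitrarily many barriers $L_1,\ldots,L_N$. The GZ relation \eqref{eq:gzcara:resolvent} gives $|u_n|^2+|v_n|^2 \le C\|Z(n,0;z)\|^2(|F(z)|^2+1)$, and after integration this does yield $\widetilde a(n,T)\le C\,\|Z(n,0;\cdot)\|_\infty^2\,\widetilde a(0,T)$, but by Lemma~\ref{l:znormbound} the factor $\|Z(n,0;z)\|^2$ grows like $L_N^{(1-\eta)/\eta}$ once $n>L_N$, so the constant is not absolute and the bound $P(n<M,T)\le C' M J(T^{-1})$ fails. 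Your last paragraph essentially concedes that the constant depends on $n$, which breaks the argument.

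The paper's proof is quite different and model-independent. It rewrites $P(n<M,T)$ as a double spectral integral involving the kernel $S_M(z,w)=\sum_{n<M}\overline{\xi_n(z)}\xi_n(w)$ (with $\xi_n$ the generalized eigenfunctions) and applies Cauchy--Schwarz twice, separating the Poisson-type factor from $S_M$. Orthonormality of the $\xi_n$ gives $\int\!\!\int|S_M|^2\,d\mu\,d\mu\le M$, and the other factor is bounded by $2J(T^{-1})$, yielding $P(n<M,T)\le\sqrt{2M\,J(T^{-1})}$. Note the square root: this is weaker than what you were aiming for but exactly strong enough, since with $M=1/(8J)$ one gets $\sqrt{1/4}=1/2$. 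No information about $\CMV$ beyond unitarity and cyclicity of $\delta_0$ is used.
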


\begin{proof}
Clearly, it suffices to show
$$
P(n < M(T), T)
\leq
\frac{1}{2}.
$$
Let $\xi_n$ and $\mu = \mu_{\delta_0}$ denote the generalized eigenfunctions and spectral measure of $\CMV$, respectively. Given $M > 0$, $z,w \in \partial \D$, and $T > 0$ define
\begin{align*}
S_M(z, w)
& =
\sum_{n < M} \overline{\xi_n(z)} \xi_n(w), \\
G_M(w)
& =
\int_{\partial \D} |S_M(z,w)|^2 \, d\mu(z), \\
b(w, T)
& =
(1-e^{-2/T})^2 \int_{\partial \D} \left |\frac{1}{1 - e^{-2/T}z/w}\right| ^2 d\mu(z).
\end{align*}
By the spectral theorem and the definition of $\xi_n$, we have
\[
\CMV^t\delta_0(n)
=
\langle \delta_n, \CMV^t \delta_0 \rangle
=
\langle W\delta_n, W\CMV^t \delta_0 \rangle
=
\int_{\partial \D} \! z^t \overline{\xi_n(z)} \, d\mu (z)
\]
for any $t \in \Z$. Thus, we obtain that
\begin{align*}
P(n < M,T)
& =
(1-e^{-2/T}) \sum_{t=0}^{\infty} \sum_{n < M} e^{-\frac{2t}{T}} \int_{\partial \D} \! \int_{\partial \D} \! \left( \frac{z}{w} \right)^t
\overline{\xi_n(z)} \xi_n(w) \, d\mu(z) \, d\mu(w) \\
& =
(1-e^{-2/T}) \int_{\partial \D} \! \int_{\partial \D} \frac{1}{1-e^{-2/T}z/w} S_M(z,w) \, d\mu(z) \, d\mu(w). \\
\end{align*}
Applying Cauchy-Schwarz to the integral over $z$ and then to the integral over $w$, we obtain
\begin{align*}
P(n < M,T)	
& \leq
(1-e^{-2/T}) \int_{\partial \D} \! \int_{\partial \D} \! \left|\frac{1}{1-e^{-2/T}z/w} S_M(z,w) \right| \, d\mu(z)  \, d\mu(w) \\
&\leq
\int_{\partial \D} \sqrt{b(w,T) G_M(w)} \, d\mu(w) \\
& \leq
\sqrt{\int_{\partial \D} b (w, T)  \, d\mu(w) \int_{\partial \D} G_M(w)  \, d\mu(w)}.
\end{align*}
We now tackle the integrals of $b$ and $G_M$. First, since  $1 - e^{-2/T} \leq 1 - e^{-4/T}$, we have
\begin{align*}
\int_{\partial \D} \! b(w,T) \, d\mu(w)
& =
 (1-e^{-2/T})^2 \int_{\partial \D} \! \int_{\partial \D} \! \left |\frac{1}{1 - e^{-2/T}z/w}\right| ^2 d\mu(z) \, d\mu(w) \\
& \leq
(1-e^{-2/T}) \int_{\partial \D} \! \int_{\partial \D} \! P_{e^{-2/T}}(z/w) \, d\mu(z) \, d\mu(w) \\
& \leq
2J(1/T).
\end{align*}
To estimate the integral of $G_M$, notice that
\begin{align*}
\int_{\partial \D} \! G_M(w) \, d\mu(w)
& =
\int_{\partial \D} \! \int_{\partial \D} \! |S_M(z,w)|^2 \, d\mu(w) \, d\mu(z) \\
& =
\int_{\partial \D} \! \int_{\partial \D} \! \sum_{n < M} \sum_{m < M} \overline{\xi_n(z)}\xi_n(w) \xi_m(z) \overline{\xi_m(w)} \, d\mu(w) \, d\mu(z) \\
& =
\sum_{n,m< M} |\langle \xi_n,\xi_m\rangle|^2 \\
& \leq
M,
\end{align*}
where the final inequality follows from the definition of the generalized eigenfunctions and unitarity of $W$ (notice that we have not assumed $M$ is an integer). Thus, taking $M(T) = 1/(8J(1/T))$ as in the statement of the lemma, we get
\[
|P(n < M(T),T)|	
\leq
\sqrt{2J(1/T) \cdot M(T)} \\
=
\frac{1}{2}.
\]
\end{proof}

We will occasionally encounter the following integrals:
$$
I(\varepsilon)
\eqdef
\left(1 - e^{-2\varepsilon}\right) \int_0^{2\pi}\mathrm{Re}^2 \! \left( F \left(e^{i\theta+\varepsilon} \right) \right) \frac{d\theta}{2\pi}.
$$
The following lemma will allow us to relate $I$ to the return probabilities.

\begin{lemma}\label{l:jleqi}
For all $\varepsilon \geq 0$, $J(\varepsilon) \leq I(\varepsilon)$.
\end{lemma}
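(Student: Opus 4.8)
The plan is to express both $J(\varepsilon)$ and $I(\varepsilon)$ in terms of the same double integral against $\mu \times \mu$ and compare the integrands pointwise. Recall from the definition that
\[
J(\varepsilon)
=
\frac{1}{2}\left(1-e^{-2\varepsilon}\right) \int_{\partial\D}\!\int_{\partial\D}\!\left(1+P_{e^{-2\varepsilon}}(z/w)\right)\,d\mu(z)\,d\mu(w),
\]
while $\Re^2\!\left(F(e^{i\theta+\varepsilon})\right)$, after squaring the Poisson-kernel representation \eqref{eq:cara:poisson} with $r=e^\varepsilon$ and applying Fubini, becomes a double integral of $P_{e^{-\varepsilon}}(z e^{-i\theta})\,P_{e^{-\varepsilon}}(w e^{-i\theta})$ against $d\mu(z)\,d\mu(w)$. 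So the first step is to carry out the $\theta$-integration in the definition of $I(\varepsilon)$, i.e., to compute
\[
\int_0^{2\pi} P_{e^{-\varepsilon}}(z e^{-i\theta})\,P_{e^{-\varepsilon}}(w e^{-i\theta})\,\frac{d\theta}{2\pi}
\]
for $|z|=|w|=1$. Expanding each Poisson kernel as its Fourier series $\sum_{\ell} e^{-\varepsilon|\ell|}(ze^{-i\theta})^\ell$ and using orthogonality of $\{e^{ij\theta}\}$, this integral collapses to $\sum_{\ell\in\Z} e^{-2\varepsilon|\ell|}(z/w)^\ell = P_{e^{-2\varepsilon}}(z/w)$ — the convolution identity for Poisson kernels, $P_r * P_r = P_{r^2}$, on the circle.

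The second step is then purely arithmetic: having shown
\[
I(\varepsilon)
=
\left(1-e^{-2\varepsilon}\right)\int_{\partial\D}\!\int_{\partial\D}\! P_{e^{-2\varepsilon}}(z/w)\,d\mu(z)\,d\mu(w),
\]
we subtract to get
\[
I(\varepsilon) - J(\varepsilon)
=
\frac{1}{2}\left(1-e^{-2\varepsilon}\right)\int_{\partial\D}\!\int_{\partial\D}\!\left(P_{e^{-2\varepsilon}}(z/w) - 1\right)\,d\mu(z)\,d\mu(w).
\]
Since $P_r(\tau)\ge 0$ for $r\in[0,1)$ and, more to the point, $P_r(\tau) = \frac{1-r^2}{|1-r\tau|^2}$, one checks directly that $P_r(\tau)\ge 1$ exactly when $\Re(\tau)\ge r$; so the integrand $P_{e^{-2\varepsilon}}(z/w)-1$ is not sign-definite, and a naive pointwise bound fails. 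The fix is to use positive-definiteness: $\iint P_r(z/w)\,d\mu(z)\,d\mu(w) = \sum_\ell r^{|\ell|}|\widehat\mu(\ell)|^2 \ge |\widehat\mu(0)|^2 = 1 = \iint 1\,d\mu(z)\,d\mu(w)$, since $\mu$ is a probability measure. Hence $I(\varepsilon)-J(\varepsilon)\ge 0$. The boundary case $\varepsilon=0$ is trivial since both sides vanish (the prefactor $1-e^{-2\varepsilon}$ is zero).

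The main obstacle is the first step — recognizing and justifying that the $\theta$-average of the product of two Poisson kernels is again a Poisson kernel with the squared parameter, and handling the interchange of summation/integration (dominated convergence, using that for fixed $\varepsilon>0$ the geometric series converges uniformly). Once that convolution identity is in hand, everything else is bookkeeping: the comparison reduces to the elementary fact that $\widehat\mu(\ell)$ contributes nonnegatively and the $\ell=0$ term alone already accounts for the ``$1$'' being subtracted. I would present the $\theta$-integration as the computational heart of the argument and relegate the final comparison to one or two lines invoking that $\mu(\partial\D)=1$.
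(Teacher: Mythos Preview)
Your proposal is correct and follows essentially the same route as the paper: both arguments reduce $I(\varepsilon)$ to $(1-e^{-2\varepsilon})\iint P_{e^{-2\varepsilon}}(z/w)\,d\mu\,d\mu$ via the Poisson-kernel convolution identity $P_r * P_r = P_{r^2}$, and then use the Fourier expansion $\iint P_r(z/w)\,d\mu\,d\mu = \sum_\ell r^{|\ell|}|\widehat\mu(\ell)|^2 \ge |\widehat\mu(0)|^2 = 1$ to conclude. The paper packages the final step slightly differently, writing $J(\varepsilon) = \tfrac{1}{2}(1-e^{-2\varepsilon}) + \tfrac{1}{2}I(\varepsilon)$ and invoking $I(\varepsilon) \ge 1-e^{-2\varepsilon}$, but this is just an algebraic rearrangement of your subtraction.
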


\begin{proof}
By Fubini's theorem, we have
\begin{align*}
I(\varepsilon)	
& =
\left(1 - e^{-2\varepsilon}\right) \int_0^{2\pi}\mathrm{Re}^2 \! \left( F \left(e^{i\theta+\varepsilon} \right) \right) \frac{d\theta}{2\pi} \\ 	
& =
\left(1 - e^{-2\varepsilon}\right)  \int_0^{2\pi} \int_{\partial \D} \int_{\partial \D} P_{e^{-\varepsilon}} \left( e^{i\theta} \overline{z}\right) P_{e^{-\varepsilon}} \left( e^{i\theta} \overline{w}\right) \, d\mu(z) \, d\mu(w) \, \frac{d\theta}{2\pi} \\
& =
\left(1 - e^{-2\varepsilon}\right) \int_{\partial \D} \int_{\partial \D} P_{e^{-2\varepsilon}} \left( \frac{z}{w} \right) \, d\mu(z) \,d\mu(w).
\end{align*}
The final equality is a straightforward calculation using the series definition of $P$ and the dominated convergence theorem. Using the definition of $P$ and dominated convergence once more, we obtain
\begin{align*}
I(\varepsilon)
& =
\left(1 - e^{-2\varepsilon}\right) \sum_{\ell \in \Z} e^{-2\varepsilon|\ell|} \int_{\partial \D} \int_{\partial \D} z^\ell w^{-\ell} \, d\mu(z) \,d\mu(w) \\
& =
\left(1 - e^{-2\varepsilon}\right) \sum_{\ell \in \Z} e^{-2\varepsilon |\ell|} | c_\ell|^2 \geq 1- e^{-2\varepsilon},
\end{align*}
where $c_\ell = \int_{\partial \D} z^\ell d\mu\left(z\right)$. Consequently,
\begin{align*}
J(\varepsilon)
& =
\frac{1-e^{-2\varepsilon}}{2} \int_{\partial \D} \int_{\partial \D} (1+P_{e^{-2\varepsilon}}(z/w)) \, d\mu(z) \, d\mu(w) \\
& =
\frac{1-e^{-2\varepsilon}}{2} + \frac{1}{2} I(\varepsilon) \\
&
\leq
I(\varepsilon),
\end{align*}
as desired.
\end{proof}

Next, we prove some bounds on the growth of the Gesztesy--Zinchenko matrices in the spirit of \cite[Lemma~2.3]{T2005}.

\begin{lemma}\label{l:znormbound}
Let $z = e^{i\theta + \varepsilon},$ with $\varepsilon \in (0, 1)$. Then, if $n\varepsilon \leq K$ for some constant $K > 0$, we have
\begin{align*}
\left\| Z(n,0;z) \right\|
& \leq
CL_N^{\frac{1 - \eta}{2\eta} \left(1 + 2\nu_N\right)} \\
\left\| Z(n,0;z) \right\|
& \leq
CL_{N+1}^{\frac{1 - \eta}{\eta} \nu_{N+1}}
\end{align*}
for all $n$ such that $L_N < n \leq L_{N+1}$, where $C = C_K$ is a constant which depends on $K$.
\end{lemma}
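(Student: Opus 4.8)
The plan is to estimate $\norm{Z(n,0;z)}$ directly from its definition in \eqref{gz:def} as a product of the one-step matrices $Y(m,z)$, $0 \le m \le n-1$, separating the ``barrier'' factors (those at the sites $m = L_1,\dots,L_N$, where $\abs{\alpha_m}$ is close to $1$ and $\rho_m$ is small) from the ``free'' factors (those at sites where $\alpha_m = 0$). When $\alpha_m = 0$ we have $\rho_m = 1$, so the free one-step matrices are literally $P(0,z) = \left(\begin{smallmatrix} 0 & z^{-1} \\ z & 0 \end{smallmatrix}\right)$ (at even $m$) and $Q(0,z) = \left(\begin{smallmatrix} 0 & 1 \\ 1 & 0 \end{smallmatrix}\right)$ (at odd $m$); since $\abs{z} = e^\varepsilon$, these have norms $e^\varepsilon$ and $1$. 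At a barrier site $L_j$, $Y(L_j,z)$ equals $\rho_{L_j}^{-1}$ times one of the $2\times 2$ matrices $\left(\begin{smallmatrix} -\alpha_{L_j} & z^{-1} \\ z & -\overline{\alpha_{L_j}}\end{smallmatrix}\right)$, $\left(\begin{smallmatrix} -\overline{\alpha_{L_j}} & 1 \\ 1 & -\alpha_{L_j}\end{smallmatrix}\right)$, and since $\abs{\alpha_{L_j}} < 1$ and $\abs{z} = e^\varepsilon < e$, the norm of that matrix is bounded by an absolute constant $C_0$; together with the identity $\rho_{L_j}^{-1} = L_j^{\frac{1-\eta}{2\eta}}$ from \eqref{eq:sparseCMVdef} this gives $\norm{Y(L_j,z)} \le C_0 L_j^{\frac{1-\eta}{2\eta}}$.

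Next I would combine these bounds by submultiplicativity. For $L_N < n \le L_{N+1}$ the barrier factors in $Z(n,0;z)$ are exactly those at $L_1,\dots,L_N$, the remaining factors being free; among $0,\dots,n-1$ there are at most $\lceil n/2\rceil$ even sites, each contributing a free factor of norm $\le e^\varepsilon$, while the odd free factors have norm $1$. Hence, using $n\varepsilon \le K$ and $\varepsilon < 1$,
\[
\norm{Z(n,0;z)}
\le
\Big(\prod_{j=1}^N \norm{Y(L_j,z)}\Big)\, e^{\varepsilon \lceil n/2\rceil}
\le
C_K\, C_0^N\,(L_1\cdots L_N)^{\frac{1-\eta}{2\eta}},
\qquad C_K \eqdef e^{(K+1)/2}.
\]

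The crux of the argument is absorbing the factor $C_0^N$. Because $L_1 < L_2 < \cdots$ are positive integers, $L_j \ge j$, so $L_1\cdots L_{N-1} \ge (N-1)!$ grows faster than any exponential in $N$; consequently $\sup_{N\ge 1} C_0^N\big/(L_1\cdots L_{N-1})^{\frac{1-\eta}{2\eta}} =: C_1 < \infty$. Therefore
\[
\norm{Z(n,0;z)}
\le
C_K C_1\,(L_1\cdots L_{N-1})^{\frac{1-\eta}{2\eta}}(L_1\cdots L_N)^{\frac{1-\eta}{2\eta}}
=
C_K C_1\, L_N^{\frac{1-\eta}{2\eta}(1+2\nu_N)},
\]
where the last step uses $L_1\cdots L_{N-1} = L_N^{\nu_N}$ and $L_1\cdots L_N = L_N^{1+\nu_N}$, both of which are just \eqref{eq:sparsedef}; this is the first asserted bound. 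For the second, I would instead bound $C_0^N \le C_1'\,(L_1\cdots L_N)^{\frac{1-\eta}{2\eta}}$ (the same reasoning, now with $L_1\cdots L_N \ge N!$) and use $(L_1\cdots L_N)^{\frac{1-\eta}{2\eta}}\cdot(L_1\cdots L_N)^{\frac{1-\eta}{2\eta}} = (L_1\cdots L_N)^{\frac{1-\eta}{\eta}} = L_{N+1}^{\frac{1-\eta}{\eta}\nu_{N+1}}$, again by \eqref{eq:sparsedef}.

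The case $N = 0$ is immediate: then $n \le L_1$, so $Z(n,0;z)$ is a product of free matrices and $\norm{Z(n,0;z)} \le C_K$. The only genuine subtlety I anticipate is the absorption of $C_0^N$ (which is exactly where the room between the exponents $\tfrac12$ and $1$, and the extra $\nu_N$, come from); the rest is bookkeeping, although one must take a little care to count even versus odd sites correctly — the order of the factors is irrelevant for the submultiplicative estimate.
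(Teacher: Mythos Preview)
Your proof is correct and follows essentially the same strategy as the paper's: decompose $Z(n,0;z)$ into the barrier factors at $L_1,\dots,L_N$ and the free factors elsewhere, bound each separately, and absorb the resulting $C_0^N$ into $(L_1\cdots L_{N-1})^{\frac{1-\eta}{2\eta}}$. Your treatment is in fact slightly more explicit than the paper's --- you distinguish even and odd free sites (the paper simply uses $\|Y(m,z)\|\le e^\varepsilon$ for all non-barrier $m$) and you spell out the factorial argument $L_1\cdots L_{N-1}\ge (N-1)!$ to justify the absorption, whereas the paper just asserts $C^N \le C' L_N^{\frac{1-\eta}{2\eta}\nu_N}$ --- but the substance is identical.
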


\begin{proof}
Suppose $L_N < n \leq L_{N+1}$. By definition of $Z(n,0;z)$, we have
\begin{equation} \label{eq:sparseZdecomp}
Z(n,0;z)
=
Z(n,L_N + 1;z) \prod_{j=N}^1 Y(L_j,z) Z(L_j,L_{j-1} + 1;z),
\end{equation}
where we adopt the convention $L_0 = -1$. Since $\varepsilon \in (0,1)$, we have $\|Y(L_j,z)\| \leq CL_j^{\frac{1-\eta}{2\eta}}$ for a constant $C$ by definition of $Y$ and $\alpha_{L_j}$. Additionally, we can see that $\|Y(n,z)\| \leq e^\varepsilon$ for all $n \notin \set{L_j : j \in \Z_+}$, which implies
\begin{align*}
\|Z(n,L_N + 1) \|
& \leq
e^{(n-L_N-1)\varepsilon},\\
\|Z(L_j,L_{j-1}+1)\|
& \leq
e^{(L_j-L_{j-1}-1)\varepsilon},
\text{ for all } 1 \leq j \leq N.
\end{align*}
Combining these estimates with \eqref{eq:sparseZdecomp}, we get
\[
\|Z(n,0;z)\|
\leq
C^N \prod_{j=1}^N L_j^{\frac{1-\eta}{2\eta}} e^{n\varepsilon}
\leq
C_K L_N^{\frac{1-\eta}{2\eta}(1 + 2\nu_N)},
\]
where we have applied the sparseness condition to get $L_1 \cdots L_{N-1} = L_N^{\nu_N}$ and $C^N \leq C' L_N^{\frac{1-\eta}{2\eta} \nu_N}$ for another constant $C' > 0$. The second bound is proved similarly using $L_1 \cdots L_N = L_{N+1}^{\nu_{N+1}}$.
\end{proof}

With the work above in hand, we can produce dynamical bounds similar to those in \cite[Theorem~2.4]{T2005}.

\begin{theorem}\label{t:pbounds}
Suppose $\frac{L_N}{4} \leq T \leq \frac{L_{N+1}}{4}$ for some $N \in \Z_+$. There exists a constant $C > 0$ independent of $N$ and $T$ such that
\begin{equation} \label{eq:lb:pbound1}
P(n \geq T,T)
\geq
CTL_N^{-\frac{1 - \eta}{\eta} (1 + 2\nu_N)} I(1/T)
\end{equation}
and
\begin{equation} \label{eq:lb:pbound3}
P(L_N/4 \leq n \leq L_N,T)
\geq
C L_N^{1 - 2\frac{1 - \eta}{\eta} \nu_N} I(1/T).
\end{equation}
\end{theorem}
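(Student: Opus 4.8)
The plan is to start from the Parseval formula \eqref{eq:parsevalprob} and the resolvent identity \eqref{eq:gzcara:resolvent}, and to lower bound the outside probability by lower bounding $\|u\|^2$ on an annulus and then redistributing that mass to the sites $n \geq T$ (respectively $L_N/4 \leq n \leq L_N$) using the upper bounds on the GZ matrices from Lemma~\ref{l:znormbound}. Concretely, set $z = e^{i\theta + 1/T}$, $u = u(\theta) = (\CMV - z)^{-1}\delta_0$, $v = \mathcal M u$, and $\Phi(n) = (u_n, v_n)^\top$. By \eqref{eq:gzcara:resolvent}, $\|\Phi(n)\| \le \|Z(n,0;z)\| \cdot \frac{1}{2|z|}\|(F(z)+1, F(z)-1)^\top\|$, and conversely $\|\Phi(0)\| = \frac{1}{2|z|}\|(F(z)+1,F(z)-1)^\top\| \le \|Z(0,n;z)\|\cdot \|\Phi(n)\| = \|Z(n,0;z)^{-1}\|\cdot\|\Phi(n)\|$. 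Since $\det Z(n,0;z) = \pm 1$ we have $\|Z(n,0;z)^{-1}\| = \|Z(n,0;z)\|$, so in both directions the two-sided bound is governed by the single quantity $\|Z(n,0;z)\|$. This is the mechanism that converts the transfer matrix norm bounds of Lemma~\ref{l:znormbound} into control on $u_n$.

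First I would record that $T$ in the stated range forces $n\varepsilon = n/T \le K$ (with $K$ an absolute constant, e.g. $K=1$) for all $n$ up to $L_{N+1}$, so Lemma~\ref{l:znormbound} applies uniformly on that window. Next, by \eqref{eq:cfct:unorm} we have $\|u\|^2 = \mathrm{Re}(F(z))/(1-|z|^2)$; integrating in $\theta$ against $\frac{d\theta}{2\pi}$ and using $1-|z|^2 = 1 - e^{2/T} \asymp -2/T$ for large $T$, together with the crude lower bound $\mathrm{Re}(F(e^{i\theta+\varepsilon}))^2 \le \mathrm{Re}(F)\cdot\|F\|_\infty$-type estimates — actually more directly, $\int \mathrm{Re}^2(F)\,\frac{d\theta}{2\pi} = I(\varepsilon)/(1-e^{-2\varepsilon})$ by definition of $I$, and Cauchy--Schwarz with Lemma~\ref{l:cara:int} gives $\int \mathrm{Re}^2(F)\,\frac{d\theta}{2\pi} \ge \left(\int \mathrm{Re}(F)\,\frac{d\theta}{2\pi}\right)^2 = 1$, but we want the $I(\varepsilon)$ factor on the right, so the correct move is to bound $\int \mathrm{Re}(F)^2 \le \left(\sup_\theta \mathrm{Re}(F)\right)\int|\mathrm{Re}(F)|$ — I would instead follow \cite{T2005} and bound $\int_0^{2\pi}|u_0|^2\,\frac{d\theta}{2\pi}$ and the "total mass" $\sum_n \widetilde a(n,T) = 1$ against each other via \eqref{eq:cfct:u0}, extracting $I(1/T)$ from $\int \mathrm{Re}^2(F)\,d\theta$ directly since $|u_0|^2 = |F(z)-1|^2/(4|z|^2) \gtrsim \mathrm{Re}(F(z))^2$ is false, so the cleanest route is: $\int_0^{2\pi}\!\sum_{n\ge 0}\|\Phi(n)\|^2 \cdot (\text{something}) $. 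Let me state the intended route plainly: by \eqref{eq:gzcara:resolvent} and $\|Z(n,0;z)\|=\|Z(n,0;z)^{-1}\|$, we get $\|\Phi(n)\|^2 \ge \|Z(n,0;z)\|^{-2}\|\Phi(0)\|^2 \ge \|Z(n,0;z)\|^{-2}|u_0|^2$, and $|u_0|^2 \asymp |F(z)-1|^2 \gtrsim \mathrm{Re}(F(z))^2$ when $\mathrm{Re}(F) \le 0$ large (which holds by \eqref{eq:cara:poisson} since the Poisson kernel is nonnegative, so $\mathrm{Re}(F(e^{i\theta+\varepsilon})) \le 0$ and $|F-1| \ge |\mathrm{Re}(F) - 1| = 1 + |\mathrm{Re}(F)| \ge |\mathrm{Re}(F)|$). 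Therefore
\[
\sum_{T \le n} \widetilde a(n,T) \ge (e^{2/T}-1)\int_0^{2\pi}\!\sum_{T \le n \le 2T} |u_n(\theta)|^2 \,\frac{d\theta}{2\pi} \gtrsim \frac{1}{T}\cdot T \cdot \sup_{T\le n\le 2T}\|Z(n,0;z)\|^{-2}\int_0^{2\pi}\mathrm{Re}(F)^2\,\frac{d\theta}{2\pi},
\]
and plugging in the first bound of Lemma~\ref{l:znormbound}, $\|Z(n,0;z)\| \le C L_N^{\frac{1-\eta}{2\eta}(1+2\nu_N)}$, produces $L_N^{-\frac{1-\eta}{\eta}(1+2\nu_N)}$; since $\int \mathrm{Re}(F)^2\frac{d\theta}{2\pi} = I(1/T)/(1-e^{-2/T}) \asymp T\cdot I(1/T)$ — wait, that would give an extra $T$; the definition is $I(\varepsilon) = (1-e^{-2\varepsilon})\int \mathrm{Re}^2(F)\,\frac{d\theta}{2\pi}$, so $\int\mathrm{Re}^2(F)\,\frac{d\theta}{2\pi} = I(1/T)/(1-e^{-2/T})$, and $(e^{2/T}-1)/(1-e^{-2/T}) = e^{2/T} \asymp 1$, so the two exponential factors cancel cleanly and one is left exactly with $\asymp T \cdot L_N^{-\frac{1-\eta}{\eta}(1+2\nu_N)}\cdot I(1/T)$, which is \eqref{eq:lb:pbound1}. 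For \eqref{eq:lb:pbound3}, I would run the identical argument but sum $n$ over the dyadic-type block $L_N/4 \le n \le L_N$ (which has $\asymp L_N$ terms), and use the \emph{second} bound of Lemma~\ref{l:znormbound} with $N$ in place of $N+1$, i.e. $\|Z(n,0;z)\| \le C L_N^{\frac{1-\eta}{\eta}\nu_N}$ for $L_{N-1} < n \le L_N$, yielding the factor $L_N^{1 - 2\frac{1-\eta}{\eta}\nu_N} I(1/T)$.

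The main obstacle I anticipate is the step that converts the pointwise-in-$\theta$ lower bound on $\sum_n|u_n(\theta)|^2$ into a clean lower bound after integrating in $\theta$ — specifically, making sure that the $\theta$-set where $\|Z(n,0;z)\|$ is close to its worst-case upper bound does not have vanishingly small measure against the relevant weight. Since Lemma~\ref{l:znormbound} is a uniform (worst-case) upper bound valid for \emph{all} $\theta$, and we are using it as an upper bound on $\|Z(n,0;z)\|$ to get a \emph{lower} bound on $\|Z(n,0;z)\|^{-2}$, there is actually no such difficulty: the bound $\|Z(n,0;z)\|^{-2} \ge C L_N^{-\frac{1-\eta}{\eta}(1+2\nu_N)}$ holds for every $\theta$ simultaneously, so the $\theta$-integral only touches the factor $\int_0^{2\pi}\mathrm{Re}(F(e^{i\theta+1/T}))^2\,\frac{d\theta}{2\pi}$, which is precisely $I(1/T)/(1-e^{-2/T})$ by definition. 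Thus the real care needed is just (i) the bookkeeping of the exponential prefactors $e^{2/T}-1$, $1-e^{-2/T}$, $1-|z|^2$ so that they combine to the clean powers of $T$ claimed, and (ii) justifying $|F(z)-1| \ge |\mathrm{Re}(F(z))|$ via nonnegativity of the Poisson kernel in \eqref{eq:cara:poisson}, together with the elementary inequality $\sum_{n \in [T,2T]} 1 \ge cT$. These are all routine once set up, so I would present the argument in the order: (1) reduce to the transfer-matrix window via the range of $T$; (2) lower bound $|u_n|^2$ pointwise in $\theta$ and $n$ using \eqref{eq:gzcara:resolvent}, $\det Z = \pm1$, and \eqref{eq:cfct:u0}; (3) sum over the appropriate block of $n$, integrate in $\theta$, and recognize $I(1/T)$; (4) insert the two bounds of Lemma~\ref{l:znormbound} to finish \eqref{eq:lb:pbound1} and \eqref{eq:lb:pbound3} respectively.
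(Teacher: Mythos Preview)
Your approach is the paper's: lower-bound $\|\Phi(n)\|^2$ via \eqref{eq:gzcara:resolvent} and $\|Z(n,0;z)^{-1}\| = \|Z(n,0;z)\|$, insert Lemma~\ref{l:znormbound}, sum over the block $T\le n\le 2T$ (resp.\ $L_N/4\le n\le L_N$), integrate in $\theta$, and recognize $I(1/T)$. Two points to tighten. First, a naming slip: the vector $\frac{1}{2z}(F+1,\,F-1)^\top$ is \emph{not} $\Phi(0)$, since $u_0=v_0=(F-1)/(2z)$; it is the modified initial data $(\widetilde u_0,v_0)$ with $\widetilde u_0=u_0+1/z$, and \eqref{eq:gzcara:resolvent} is stated only for $n\ge 1$. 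Your inequalities survive this relabeling unchanged.

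The one genuine gap is the passage from $\|\Phi(n)\|^2=|u_n|^2+|v_n|^2$ back to $|u_n|^2$. Your pointwise bound controls $|u_n|^2+|v_n|^2$, but the Parseval identity \eqref{eq:parsevalprob} only sees $\sum_n|u_n|^2$; nothing in your outline rules out the mass sitting entirely in the $v$-component. The paper closes this by observing that $\mathcal M$ is a direct sum of $2\times 2$ unitary blocks, so $|u_{2k-1}|^2+|u_{2k}|^2=|v_{2k-1}|^2+|v_{2k}|^2$ for every $k\ge 1$; summing over consecutive pairs then gives $\sum_{n\in I}|u_n|^2\asymp\sum_{n\in I}\|\Phi(n)\|^2$ on any interval $I$ of length at least two, at the cost of a harmless constant. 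With that one line inserted, your four-step plan is exactly the paper's proof. (Also, your blanket claim ``$n/T\le K$ for all $n\le L_{N+1}$'' is false as stated---take $T=L_N/4$, $n=L_{N+1}$---but it \emph{is} true on the two summation windows you actually use, with $K=2$ and $K=4$ respectively, which is all Lemma~\ref{l:znormbound} needs.)
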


\begin{proof} Suppose $\frac{L_N}{4} \leq T \leq  \frac{L_{N+1}}{4}$ and $T \leq n \leq 2T$, let $z = e^{i\theta + \frac{1}{T}}$, and define $R(z) \eqdef (\CMV - z)^{-1}$. Next, put $u(n,z) \eqdef R(z)\delta_0(n)$ and $v \eqdef \mathcal{M}u$. Notice that one immediately has
\begin{equation} \label{uvshift}
|u(2k-1,z)|^2 + |u(2k,z)|^2
=
|v(2k-1,z)|^2 + |v(2k,z)|^2
\end{equation}
for all $k \in \Z_+$, since $\mathcal M$ is a direct sum of $2\times 2$ unitary blocks. Now, by Proposition~\ref{p:cara:resolve}, we have
\[
\begin{pmatrix}
u(n,z) \\
v(n,z) \\
\end{pmatrix}
=
\frac{1}{2z} Z(n,0;z) \begin{pmatrix}
F(z) + 1 \\
F(z) - 1 \\
\end{pmatrix},
\]
where $F$ denotes the Carath\'eodory function of $\CMV$. Because $n/T \leq 2$, the previous lemma yields $\|Z(n,0;z)^{-1}\| =  \|Z(n,0;z)\| \leq CL_N^{\frac{1- \eta}{2\eta}(1+2\nu_N)}$. Consequently, since $|z| \leq e$, we get
\begin{align*}
|u(n,z)|^2 + |v(n,z)|^2
&\geq
CL_N ^{-\frac{1 - \eta}{\eta}(1+2\nu_N)} (|F(z) + 1|^2 + |F(z) - 1|^2) \\
& \geq
CL_N^{-\frac{1 - \eta}{\eta} (1+2\nu_N)} \Re^2(F(z)).
\end{align*}
Moreover, by \eqref{eq:parsevalprob} and the definitions of $u$ and $P$, we have
\[
P(n \geq T,T)
\geq
\left(e^{\frac{2}{T}} - 1\right) \int_0^{2\pi} \sum_{T \leq n \leq 2T} \left|u\left(n,e^{i\theta + \frac{1}{T}}\right)\right|^2 \frac{d\theta}{2\pi},
\]
where we have used nonnegativity of the summands to remove terms with $n > 2T$. Combining these two estimates, we have
\begin{align*}
P(n \geq T,T)
& \geq
\left(e^{\frac{2}{T}} - 1\right) \int_0^{2\pi} \sum_{T \leq n \leq 2T} \left|u\left(n,e^{i\theta + \frac{1}{T}}\right)\right|^2 \frac{d\theta}{2\pi} \\
& \geq
C T\left(e^{\frac{2}{T}} - 1\right) L_N^{-\frac{1 - \eta}{\eta} (1+2\nu_N)} \int_0^{2\pi} \Re^2\!\left(F\left(e^{i\theta + \frac{1}{T}}\right)\right) \frac{d\theta}{2\pi} \\
& \geq
C T L_N^{-\frac{1 - \eta}{\eta} (1+2\nu_N)} I(1/T),
\end{align*}
where we have used the definition of $I$ and nonnegativity thereof in the third line. Thus we have \eqref{eq:lb:pbound1}. Notice also that we have used \eqref{uvshift} in the second line to replace $|u(n)|^2$ with $|u(n)|^2 + |v(n)|^2$ at the expense of adjusting the constant in front.

Analogously, \eqref{eq:lb:pbound3} can be obtained by summing over $n$ with $\frac{L_N}{4} \leq n \leq L_N$.  Notice that one must apply Lemma~\ref{l:znormbound} with $N$ replaced by $N-1$ in this case.
\end{proof}

We can translate the bounds on the outside probabilities from the previous theorem into lower bounds on the moments directly.

\begin{coro}\label{c:momentbound}
Let $p > 0$, and suppose that $N$ and $T$ satisfy $\frac{L_N}{4} \leq T \leq \frac{L_{N + 1}}{4}$. Then the following bound holds:
\[
\left\langle \left| X \right|^p\right\rangle (T)
\geq
C_p I(1/T)^{-p} + C_p \left(L_N^{p + 1 - 2\frac{1 - \eta}{\eta} \nu_N} + T^{p + 1}L_N^{-\frac{1 - \eta}{\eta} (1 + 2\nu_N)}\right) I(1/T).
\]
\end{coro}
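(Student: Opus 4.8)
The plan is to obtain each of the three summands on the right-hand side separately, in each case by throwing away most of the sum defining $\left\langle |X|^p \right\rangle(T)$ and keeping only a tail (or window) on which one of the already-proven probability lower bounds is available, and then to combine the three resulting inequalities. The basic mechanism is the following: since $\widetilde a(\cdot,T)\ge 0$ and $x\mapsto x^p$ is increasing on $[0,\infty)$, for any $M\ge 0$ we have
\[
\left\langle |X|^p \right\rangle(T)
\geq
\sum_{n \geq M} n^p \, \widetilde a(n,T)
\geq
M^p \, P(n \geq M, T),
\]
and, restricting instead to a window, $\left\langle |X|^p \right\rangle(T) \geq (L_N/4)^p \, P(L_N/4 \leq n \leq L_N, T)$.

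First I would apply the displayed inequality with $M = M(T) = 1/\bigl(8J(1/T)\bigr)$ and invoke Lemma~\ref{l:firstbound}, which gives $\left\langle |X|^p \right\rangle(T) \geq \tfrac12 M(T)^p = \tfrac12\bigl(8 J(1/T)\bigr)^{-p}$. Since $J(1/T) \leq I(1/T)$ by Lemma~\ref{l:jleqi} and $x\mapsto x^{-p}$ is decreasing, this is at least $\tfrac12\bigl(8 I(1/T)\bigr)^{-p} = C_p\, I(1/T)^{-p}$, which is the first term.

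Next, applying the displayed inequality with $M = T$ together with \eqref{eq:lb:pbound1} of Theorem~\ref{t:pbounds} (valid because $\tfrac{L_N}{4}\le T\le \tfrac{L_{N+1}}{4}$) yields
\[
\left\langle |X|^p \right\rangle(T)
\geq
T^p \, P(n \geq T, T)
\geq
C \, T^{p+1} L_N^{-\frac{1-\eta}{\eta}(1 + 2\nu_N)} I(1/T),
\]
and restricting to $L_N/4 \leq n \leq L_N$ and using \eqref{eq:lb:pbound3} gives $\left\langle |X|^p \right\rangle(T) \geq 4^{-p} L_N^{p}\cdot C L_N^{1 - 2\frac{1-\eta}{\eta}\nu_N} I(1/T) = C_p\, L_N^{p + 1 - 2\frac{1-\eta}{\eta}\nu_N} I(1/T)$.

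Finally, each of these three quantities is a lower bound for $\left\langle |X|^p \right\rangle(T)$, hence so is their maximum, and the maximum of three nonnegative numbers is at least one third of their sum; absorbing the factor $\tfrac13$ into the constants produces the asserted inequality. Note that it is irrelevant whether the three index ranges overlap, precisely because we pass through the maximum rather than adding the estimates. The substance of the corollary is entirely contained in Lemma~\ref{l:firstbound}, Lemma~\ref{l:jleqi}, and Theorem~\ref{t:pbounds}, so there is no real obstacle here; the only places demanding any attention are the monotonicity argument converting the $J$-bound into the $I^{-p}$ term and the bookkeeping of the $\eta$-dependent exponents.
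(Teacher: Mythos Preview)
Your proof is correct and follows essentially the same approach as the paper: both obtain the three terms separately from Lemma~\ref{l:firstbound} (combined with Lemma~\ref{l:jleqi}), \eqref{eq:lb:pbound1}, and \eqref{eq:lb:pbound3}, and then combine them by averaging (equivalently, via the max-is-at-least-one-third-of-the-sum argument you give). The only cosmetic difference is that the paper phrases the third bound as $C_p L_N^p\, P(n\ge L_N/4,T)$ before invoking \eqref{eq:lb:pbound3}, whereas you restrict directly to the window $L_N/4\le n\le L_N$; these are trivially equivalent.
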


\begin{proof}
This is analogous to the proof of \cite[Corollary~2.7]{T2005}; we explain the details for the convenience of the reader. Observe that for any $M$ and $T$, we have
\[
\langle|X|^p\rangle(T)
\geq
M^p \sum_{n \geq M} \widetilde{a}(n,T)
=
M^p P(n \geq M,T).
\]
Let $M(T) = \frac{1}{8J(1/T)}$. By Lemma~\ref{l:firstbound}, $P(n \geq M(T),T) \geq \frac{1}{2}$, so, for each $T$, we have
\[
\langle|X|^p\rangle(T)
\geq
(M(T))^p P(n \geq M(T),T)
\geq
C_p J(1/T)^{-p}
\geq
C_p I(1/T)^{-p},
\]
where the last inequality comes from Lemma~\ref{l:jleqi}. Applying similar reasoning and \eqref{eq:lb:pbound1} from Theorem~\ref{t:pbounds}, we obtain
\[
\langle|X|^p\rangle(T)
\geq
T^p P(n \geq T,T)
\geq
CT^{p + 1}L_N^{-\frac{1 - \eta}{\eta} (1 + 2\nu_N)} I(1/T).
\]
Similarly, using \eqref{eq:lb:pbound3} from Theorem~\ref{t:pbounds}, we have
\[
\langle|X|^p\rangle(T)
\geq
C_p L_N^p P(n \geq L_N/4,T)
\geq
C_p L_N^{p + 1 - 2\frac{1 - \eta}{\eta} \nu_N} I(1/T).
\]
Therefore, by averaging the three lower bounds for $\langle|X|^p\rangle(T)$ and adjusting the constants, we obtain the desired result.
\end{proof}

These lower bounds on the moments suffice to prove our desired lower bounds on $\widetilde \beta^\pm$.

\begin{theorem}\label{t:betalowerbounds}
Let $p > 0$ and $\frac{L_N}{4} \leq T \leq \frac{L_{N + 1}}{4}$. The following estimate holds uniformly in $T$:
\begin{equation} \label{eq:momentlb}
\langle|X|^p\rangle(T)
\geq
C_p L_N^{-2\frac{1 - \eta}{\eta} \nu_N} \left(L_N^p + T^p L_N^{-\frac{p}{p + 1} \frac{1 - \eta}{\eta}}\right).
\end{equation}
In particular,
\[
\widetilde{\beta}^-(p) \geq \frac{p + 1}{p + 1/\eta}
\text{ and }
\widetilde{\beta}^+(p) \geq 1.
\]
\end{theorem}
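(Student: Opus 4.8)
The plan is to derive the two claimed lower bounds on the transport exponents directly from the moment estimate \eqref{eq:momentlb}, which in turn follows by simplifying Corollary~\ref{c:momentbound}. First I would obtain \eqref{eq:momentlb}. By Lemma~\ref{l:jleqi} and Lemma~\ref{l:cara:int} together with the Cauchy--Schwarz inequality, $I(\varepsilon) \geq (1-e^{-2\varepsilon}) \left( \int_0^{2\pi} \Re(F(e^{i\theta+\varepsilon})) \frac{d\theta}{2\pi} \right)^2 = 1 - e^{-2\varepsilon}$, so $I(1/T) \geq 1 - e^{-2/T} \geq c/T$ for a universal constant $c > 0$ and $T$ large. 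On the other hand, the trivial bound $J(\varepsilon) \leq 1$ gives $I(1/T) \leq$ (something), but more simply: we only need a lower bound on $I(1/T)$ for the second and third terms of Corollary~\ref{c:momentbound} and an upper bound for the first term. For the first term we use $I(1/T)^{-p} \geq (J(1/T))^{-p} \cdot$(const) is the wrong direction; instead note the first term $C_p I(1/T)^{-p}$ is bounded below by $C_p$ since $I(1/T) \leq C$ (because $J(\varepsilon) \leq 1$ always, and $I(\varepsilon) = 2J(\varepsilon) - (1-e^{-2\varepsilon}) \leq 2$), hence $I(1/T)^{-p} \geq 2^{-p}$. Combining, and using $I(1/T) \geq c/T$ in the remaining two terms of Corollary~\ref{c:momentbound}, we get
\[
\langle|X|^p\rangle(T)
\geq
C_p + C_p\left( L_N^{p+1-2\frac{1-\eta}{\eta}\nu_N} T^{-1} + T^p L_N^{-\frac{1-\eta}{\eta}(1+2\nu_N)} \right).
\]
Since $L_N/4 \leq T \leq L_{N+1}/4$, the middle term is at least $C_p L_N^{p - 2\frac{1-\eta}{\eta}\nu_N}$ (using $T \geq L_N/4$ and $L_N^{p+1} T^{-1} \geq C L_N^p$), which combined with the third term and factoring out $L_N^{-2\frac{1-\eta}{\eta}\nu_N}$ (absorbing the extra $L_N^{-2\frac{1-\eta}{\eta}\nu_N}$ versus $L_N^{-\frac{1-\eta}{\eta}(1+2\nu_N)}$, the latter being smaller only by the factor $L_N^{-\frac{1-\eta}{\eta}}$ which is dominated by the $T^p$ power in the optimization) yields \eqref{eq:momentlb} after checking the exponent $-\frac{p}{p+1}\frac{1-\eta}{\eta}$ comes out right.

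Next I would extract $\widetilde\beta^+(p) \geq 1$. Specialize \eqref{eq:momentlb} to the endpoint $T = L_{N+1}/4$. Then the term $T^p L_N^{-\frac{p}{p+1}\frac{1-\eta}{\eta}} = C_p L_{N+1}^p L_N^{-\frac{p}{p+1}\frac{1-\eta}{\eta}}$, and since $L_N^{\nu_N} = L_1\cdots L_{N-1}$ while $L_N \leq L_N^{1/\nu_{N+1} \cdot \nu_{N+1}} = (L_1 \cdots L_N)^{\text{small}}$, more precisely using \eqref{eq:sparsedef} we have $\log L_N / \log L_{N+1} = \nu_{N+1} L_N / (L_1 \cdots L_N) \to 0$, hence $\log L_N = o(\log L_{N+1})$ and also $\nu_N \to 0$. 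Therefore $\log\langle|X|^p\rangle(L_{N+1}/4) \geq p \log L_{N+1} - o(\log L_{N+1})$, and dividing by $p\log(L_{N+1}/4)$ and taking $\limsup$ over $N$ gives $\widetilde\beta^+(p) \geq 1$.

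Finally, for $\widetilde\beta^-(p) \geq \frac{p+1}{p+1/\eta}$, I would show that for \emph{every} $T$ with $L_N/4 \leq T \leq L_{N+1}/4$ the right-hand side of \eqref{eq:momentlb} is at least $T^{\,(p+1)/(p+1/\eta) \cdot p - o(1)}$ in the exponent. Ignoring the $\nu_N \to 0$ corrections (which contribute only $o(\log T)$ since $\log L_N \asymp \log T$ in this range), the bracketed quantity $L_N^p + T^p L_N^{-\frac{p}{p+1}\frac{1-\eta}{\eta}}$ is minimized, as a function of the free ratio between $T$ and $L_N$, when the two terms balance; setting $L_N^p = T^p L_N^{-\frac{p}{p+1}\frac{1-\eta}{\eta}}$ gives $L_N^{1 + \frac{1}{p+1}\frac{1-\eta}{\eta}} = T$, i.e. $L_N = T^{\gamma}$ with $\gamma = \left(1 + \frac{1-\eta}{(p+1)\eta}\right)^{-1} = \frac{(p+1)\eta}{(p+1)\eta + 1 - \eta} = \frac{(p+1)\eta}{p\eta + 1}$, at which point the common value is $T^{p\gamma}$, and one checks $p\gamma = \frac{p(p+1)\eta}{p\eta+1} = p \cdot \frac{p+1}{p + 1/\eta}$. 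Thus $\log\langle|X|^p\rangle(T) \geq p\frac{p+1}{p+1/\eta}\log T - o(\log T)$ uniformly, so dividing by $p\log T$ and taking $\liminf$ yields $\widetilde\beta^-(p) \geq \frac{p+1}{p+1/\eta}$. The main obstacle is bookkeeping: verifying that all the $\nu_N$-dependent exponent corrections are genuinely $o(\log T)$ uniformly across the whole window $L_N/4 \leq T \leq L_{N+1}/4$ — in particular near the right endpoint where $\log T \approx \log L_{N+1}$ but the bound is phrased in terms of $L_N$ — and confirming that the worst case of the min-over-balance argument is indeed attained within the allowed range of $T$ rather than at an endpoint; handling the latter requires separately checking the endpoints $T \in \{L_N/4, L_{N+1}/4\}$, which is where the $L_N^p$ term (left endpoint) and the $T^p L_N^{-\cdots}$ term (right endpoint) respectively dominate and each already exceeds $T^{p(p+1)/(p+1/\eta)}$.
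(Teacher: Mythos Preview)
Your derivation of \eqref{eq:momentlb} from Corollary~\ref{c:momentbound} has a genuine gap. The bounds $c/T \leq I(1/T) \leq 2$ are correct, but substituting them separately into the three terms of Corollary~\ref{c:momentbound} does not produce \eqref{eq:momentlb}. Inserting $I(1/T) \geq c/T$ into the last two terms yields (ignoring the $\nu_N$-corrections)
\[
\langle|X|^p\rangle(T) \geq C_p\left( \frac{L_N^{p+1}}{T} + T^{p} L_N^{-\frac{1-\eta}{\eta}} \right),
\]
whereas \eqref{eq:momentlb} asserts the bracket $L_N^{p} + T^{p} L_N^{-\frac{p}{p+1}\frac{1-\eta}{\eta}}$. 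Your second term carries the exponent $-\frac{1-\eta}{\eta}$ on $L_N$, not $-\frac{p}{p+1}\frac{1-\eta}{\eta}$; since $\frac{p}{p+1} < 1$, your estimate is strictly weaker. Your claim that the first term is at least $C_p L_N^{p}$ is also false: it rests on ``$L_N^{p+1}/T \geq C L_N^{p}$,'' which requires $T \leq C L_N$, but $T$ ranges up to $L_{N+1}/4 \gg L_N$. If one runs your balance argument on the estimate you actually obtain, the resulting lower bound for $\widetilde\beta^-(p)$ comes out to $\frac{(p^2+p+1)\eta - 1}{p(p\eta+1)}$, which is strictly less than $\frac{p+1}{p+1/\eta} = \frac{(p+1)\eta}{p\eta+1}$ for every $\eta \in (0,1)$.

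The missing idea is that the $I(1/T)^{-p}$ term and the $I(1/T)$ terms in Corollary~\ref{c:momentbound} must be played off against each other \emph{before} $I(1/T)$ is eliminated. Writing $y = I(1/T)$ and $K = L_N^{-2\frac{1-\eta}{\eta}\nu_N}\big(L_N^{p+1} + T^{p+1}L_N^{-\frac{1-\eta}{\eta}}\big)$, the corollary reads $\langle|X|^p\rangle(T) \geq C_p(y^{-p} + Ky)$, and one uses the elementary inequality $y^{-p} + Ky \geq C_p K^{p/(p+1)}$ (valid for all $y > 0$), followed by concavity of $t \mapsto t^{p/(p+1)}$, to obtain exactly \eqref{eq:momentlb}. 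The factor $\frac{p}{p+1}$ in the exponent comes precisely from this optimization and cannot be recovered from a priori two-sided bounds on $I(1/T)$ alone. Once \eqref{eq:momentlb} is established, your extraction of $\widetilde\beta^\pm(p)$ via a crossover/balance argument is essentially the paper's approach (the paper uses the time scale $T_N = L_N$ rather than $L_{N+1}/4$ for $\widetilde\beta^+$, but either works).
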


\begin{proof}
We proceed as in \cite[Theorem~2.8]{T2005}. By Corollary~\ref{c:momentbound},
\[
\langle|X|^p\rangle(T)
\geq
C_p \! \left(y_T^{-p} + L_N^{-2\frac{1 - \eta}{\eta} \nu_N} \left(L_N^{p + 1} + T^{p + 1} L_N^{-\frac{1 - \eta}{\eta}}\right) y_T \right),
\]
where $y_T \eqdef I(1/T)$. If $f(y) = y^{-p} + Ky$, where $K$ is a positive number, then one can check that $f(y) \geq C_p K^{\frac{p}{p + 1}}$ for all $y > 0$ by an easy calculus exercise. Thus
\begin{align*}
\langle|X|^p\rangle(T)
&\geq
C_p \left(L_N^{-2\frac{1 - \eta}{\eta} \nu_N} \left(L_N^{p + 1} + T^{p + 1} L_N^{-\frac{1 - \eta}{\eta}}\right)\right)^{\frac{p}{p + 1}} \\
&\geq
C_p L_N^{-2\frac{1 - \eta}{\eta} \nu_N} \left(L_N^p + T^p L_N^{-\frac{p}{p + 1} \frac{1 - \eta}{\eta}}\right),
\end{align*}
which proves \eqref{eq:momentlb}. Notice that we have applied concavity of the function $y \mapsto y^{\frac{p}{p+1}}$ and $\frac{p}{p+1} < 1$ to obtain the second inequality. Now, let $s = \frac{p(1 - \eta)}{(p + 1) \eta}$. If $\frac{L_N}{4} \leq T \leq L_N^{\frac{p + s}{p}}$, then
\[
\langle|X|^p\rangle(T)
\geq
C_p L_N^{-2\frac{1 - \eta}{\eta} \nu_N} L_N^p
\geq
C_p L_N^{-2\frac{1 - \eta}{\eta} \nu_N} \left(T^{\frac{p}{p + s}}\right)^p
\geq
C_p T^{-2\frac{1 - \eta}{\eta} \nu_N + \frac{p^2}{p + s}}.
\]
Similarly, if $L_N^{\frac{p + s}{p}} \leq T \leq \frac{L_{N + 1}}{4}$, then
\[
\langle|X|^p\rangle(T)
\geq
C_p L_N^{-2\frac{1 - \eta}{\eta} \nu_N} T^p L_N^{-\frac{p}{p + 1} \frac{1 - \eta}{\eta}}
\geq
C_p T^{-2\frac{1 - \eta}{\eta} \nu_N + \frac{p^2}{p + s}},
\]
since $p - \frac{p^2}{(p + s)(p + 1)} \frac{1 - \eta}{\eta} = \frac{p^2}{p + s}$. Thus, for all $T$ with $\frac{L_N}{4} \leq T \leq \frac{L_{N + 1}}{4}$, we have
\[
\langle|X|^p\rangle(T)
\geq
C_p T^{-2\frac{1 - \eta}{\eta} \nu_N + \frac{p^2}{p + s}}.
\]
Therefore, since $\nu_N \to 0$ as $N \to \infty$, we have
\[
\widetilde{\beta}^-(p)
=
\liminf_{T \to \infty} \frac{\log \langle|X|^p\rangle(T)}{p \log T}
\geq \frac{p}{p + s}
=
\frac{p + 1}{p + 1/\eta},
\]
as desired. On the other hand, applying \eqref{eq:momentlb} to the sequence of time scales $T_N = L_N$, we get
\[
\widetilde{\beta}^+(p)
\geq
\limsup_{N \to \infty} \frac{\log \langle|X|^p \rangle(L_N)}{p \log L_N}
\geq
\limsup_{N \to \infty} \frac{\log\Big(C_p L_N^{-2\frac{1 - \eta}{\eta}\nu_N} L_N^p \Big)}{p \log L_N}
=
1,
\]
concluding the proof.
\end{proof}

\section{Proofs of Upper Bounds}

In this section, we prove upper bounds that complement the lower bounds of the previous section. Our main result is an adaptation of \cite[Theorem~3.4]{T2005} to the present setting.

\begin{theorem} \label{t:ubs}
Suppose that $T$ and $N$ satisfy $\frac{L_N}{4} \leq T \leq L_N^{\frac{1}{\eta}}$. Then for any $p \geq 0$,
\begin{equation} \label{eq:tubs:outmomentbound}
\sum_{n \geq 2L_N} n^p \widetilde a(n,T)
\leq
C_p T^{p + 1} L_N^{-\frac{1}{\eta}}.
\end{equation}
Additionally,
\begin{equation} \label{eq:tubs:poutbound}
P\left(n \geq 2L_N,T\right) \leq CTL_N^{-\frac{1}{\eta}}
\end{equation}
and
\begin{equation} \label{eq:tubs:momentbound}
\langle\abs{X}^p\rangle(T)
\leq
C_p L_N^p + C_p T^{p + 1} L_N^{-\frac{1}{\eta}}.
\end{equation}
\end{theorem}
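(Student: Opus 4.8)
\textbf{Proof plan for Theorem~\ref{t:ubs}.}
The plan is to mirror the strategy of \cite[Theorem~3.4]{T2005}, working throughout with the resolvent via the Parseval formula \eqref{eq:parsevalprob} rather than directly with $\CMV^t\delta_0$. The three conclusions \eqref{eq:tubs:outmomentbound}, \eqref{eq:tubs:poutbound}, and \eqref{eq:tubs:momentbound} are not independent: \eqref{eq:tubs:poutbound} is the $p=0$ case of \eqref{eq:tubs:outmomentbound} (up to a harmless factor), and \eqref{eq:tubs:momentbound} follows by splitting the moment sum at $n=2L_N$, bounding the part with $n<2L_N$ by $(2L_N)^p \sum_n \widetilde a(n,T) \le C_p L_N^p$ using $\sum_n \widetilde a(n,T)=1$, and invoking \eqref{eq:tubs:outmomentbound} for the tail. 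So the heart of the matter is \eqref{eq:tubs:outmomentbound}.

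To get \eqref{eq:tubs:outmomentbound}, the first step is to use \eqref{eq:parsevalprob} to write
\[
\sum_{n \geq 2L_N} n^p \widetilde a(n,T)
\le
C \left(e^{2/T}-1\right) \int_0^{2\pi} \sum_{n\ge 2L_N} n^p \left|u\!\left(n,e^{i\theta+1/T}\right)\right|^2 \, \frac{d\theta}{2\pi},
\]
where $u(\cdot,z)=(\CMV-z)^{-1}\delta_0$. The second step is to control the inner sum for fixed $z=e^{i\theta+1/T}$ via the Gesztesy--Zinchenko transfer formula \eqref{eq:gzcara:resolvent}, which expresses $(u(n,z),v(n,z))^\top$ as $\frac{1}{2z}Z(n,0;z)(F(z)+1,F(z)-1)^\top$, and the identity \eqref{uvshift} equating $|u(2k-1)|^2+|u(2k)|^2$ with $|v(2k-1)|^2+|v(2k)|^2$. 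The key analytic input is a \emph{lower} bound on $\|Z(n,L_N+1;z)\|$ — equivalently, a statement that once the wavepacket is past the barrier at $L_N$, the solution cannot have grown much between $L_N+1$ and $n$; between barriers the transfer matrices are close to rotations, so $\|Z(n,L_N+1;z)^{\pm1}\|\le e^{(n-L_N)/T}$, which is bounded by a constant on the relevant range $n\lesssim$ something times $L_N^{1/\eta}$. This lets one bound $|u(n,z)|^2+|v(n,z)|^2$ below by $C$ times $|u(L_N+1,z)|^2+|u(L_N,z)|^2$ (or a comparable quantity just past the barrier), and hence relate the tail sum to the $\ell^2$ mass of $u$ just outside $L_N$ together with the constraint that the barrier at $L_N$, of size $\sim L_N^{(1-\eta)/2\eta}$, forces that mass to be small: crossing the barrier costs a factor $\rho_{L_N}=L_N^{-(1-\eta)/2\eta}$, so $|u(L_N+1,z)|$ is at most $C\rho_{L_N}$ times $\|u(\cdot,z)\|$ on $[0,L_N]$, and by \eqref{eq:cfct:unorm} the full norm $\|u(\cdot,z)\|^2 = \Re F(z)/(1-|z|^2)$, with $1-|z|^2 \sim 1/T$.

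Assembling: the tail $\ell^2$ mass beyond $2L_N$ is at most $C$ (from the bounded transfer between barriers, using $n^p\le (CL_{N+1})^p$ is too lossy — instead one must only sum $n$ up to the range where $\|Z(n,L_N+1;z)\|$ stays bounded, namely $n\lesssim L_N^{1/\eta}$, exactly the hypothesis's upper cutoff) times $\rho_{L_N}^2 \cdot T \cdot \Re F(z)$; then integrating in $\theta$, using $(e^{2/T}-1)\sim 1/T$ and Lemma~\ref{l:cara:int} (which gives $\int_0^{2\pi}\Re F(e^{i\theta+\varepsilon})\,d\theta/(2\pi)=-1$, so $\int |\Re F|\le$ a constant after accounting for sign — more precisely $\Re F < 0$ here by \eqref{eq:cara:poisson}, so $\int_0^{2\pi}|\Re F(e^{i\theta+1/T})|\,d\theta/(2\pi)=1$), one lands on a bound of the form $C_p T^{p}\cdot L_N^{-(1-\eta)/\eta}\cdot T \cdot 1 = C_p T^{p+1}L_N^{-1/\eta}$, using $\rho_{L_N}^2 = L_N^{-(1-\eta)/\eta}$ and $L_N^p$ absorbed by $T^p$ since $T\ge L_N/4$... one must be careful that $n^p$ is replaced by $T^p$ only up to the constant coming from $n \le C T$ being false here — indeed $n$ ranges up to $\sim L_N^{1/\eta}\gg T$ in general, so the $n^p$ must be kept and the geometric-type decay in $n$ from a \emph{second}, finer use of the transfer estimate is what converts $\sum_n n^p$ into $C_p T^{p+1}$; here is where the argument genuinely uses the full strength of \eqref{eq:parsevalprob} together with the sparseness and the precise exponent $\frac{1}{\eta}$ in the hypothesis $T\le L_N^{1/\eta}$. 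The main obstacle I anticipate is precisely this bookkeeping: arranging the transfer-matrix estimates so that the sum $\sum_{n\ge 2L_N} n^p |u(n,z)|^2$ is controlled by $C_p T^{p+1}L_N^{-1/\eta}\Re F(z)$ uniformly in $\theta$ on the stated range of $(N,T)$, rather than merely getting a bound growing with $L_{N+1}$; this requires exploiting that, in the regime $2L_N \le n \le L_N^{1/\eta}$, there are no further barriers (by sparseness, $L_{N+1}\gg L_N^{1/\eta}$ is \emph{not} automatic — one needs $\nu_{N+1}\to 0$ to guarantee $L_N^{1/\eta} < L_{N+1}$ eventually, which is the only place the full sparseness hypothesis enters the upper bound) and the transfer matrices there are uniformly bounded, so the decay is governed entirely by the single reflection coefficient $\rho_{L_N}$.
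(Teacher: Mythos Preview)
Your plan has a genuine gap at its two central steps. First, the claim that ``crossing the barrier costs a factor $\rho_{L_N}$'' is unjustified for the full resolvent $u=(\CMV-z)^{-1}\delta_0$: the GZ step matrix $Y(L_N,z)$ has norm $\sim\rho_{L_N}^{-1}$, so generically it \emph{amplifies}, and the reduction $|u(L_N+1)|\sim\rho_{L_N}|u(L_N)|$ occurs only when $(u(L_N),v(L_N))$ lies in the contracting singular direction---equivalently (for $L_N$ even), when $v(L_N+1)=0$. Second, and for the same reason, your ``bounded transfer'' estimate $\|Z(n,L_N+1;z)^{\pm1}\|\le e^{(n-L_N)/T}$ yields no decay of $|u(n)|$ beyond $L_N$: on the free stretch $(L_N,L_{N+1})$ the cocycle has eigenvalues $z^{\pm1}$, so the solution splits into a decaying and a growing mode, and the growing mode is governed precisely by $v(L_N+1)$, which is \emph{not} zero for the full resolvent. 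Without a mechanism forcing $v(L_N+1)$ to be small you have neither the $\rho_{L_N}^2$ gain nor the geometric decay needed to sum the weights $n^p$; a crude $\ell^2$ bound on the growing mode leaves a contribution of order $L_{N+1}^p$, which is hopelessly large. Your plan also says nothing about $n\ge L_{N+1}$, where further barriers appear. (Separately, the arithmetic $T^p\cdot L_N^{-(1-\eta)/\eta}\cdot T = T^{p+1}L_N^{-1/\eta}$ is off by a factor $L_N$; the missing $L_N^{-1}$ comes from a pointwise-to-average step $|u(L_N)|^2\lesssim L_N^{-1}\|u\|^2$ that you did not include.)

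The paper closes all of these gaps at once by introducing the \emph{truncated} operator $\CMV_N$ (setting $\alpha_n=0$ for $n>L_N$) and splitting $R=R_N+(R-R_N)$. For $g=R_N(z)\delta_0$ the tail beyond $L_N$ sees only free transfer, so the $\ell^2$ constraint forces the growing mode to vanish \emph{exactly}: Lemma~\ref{l:gh} gives $h(L_N+1)=0$, whence the barrier identity $g(L_N+1)=(\rho_{L_N}/\alpha_{L_N})\,g(L_N)$ and pure geometric decay $|g(L_N+1+2k)|=e^{-k/T}|g(L_N+1)|$. Together with the averaging bound $|g(L_N)|^2\le CL_N^{-1}\|g\|^2$ this yields Lemma~\ref{l:ghLNbound}, and integration against $-\Re F_N$ via Lemma~\ref{l:cara:int} produces \eqref{eq:ubthm:RNbound}. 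The correction $R-R_N$, which absorbs all later barriers $L_{N+1},L_{N+2},\ldots$, is shown to be $O(e^{-T^\eta})$ by a resolvent identity and the explicit form of $\CMV-\CMV_N$ (Lemmas~\ref{cmv-phi} and~\ref{l:rdiffbound}); the far tail $n\ge T^2$ is dispatched by a direct time-domain bound (Lemma~\ref{l:tsquared}). Without this truncation device---or an equivalent argument that $v(L_N+1)$ is exponentially small---your scheme cannot be completed.
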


We will break the proof into several smaller steps. We begin by noting that the contribution of sites beyond $n = T^2$ is negligible. In fact, this argument applies to all CMV matrices, not just those with sparse coefficients.

\begin{lemma}\label{l:tsquared}
Let $\CMV$ be any CMV matrix. For any $p \geq 0$, there is a constant $C_p > 0$ such that
\[
\sum_{n \geq T^2} n^p \widetilde a(n,T)
\leq
C_p e^{- \frac{T}{2}}
\]
for all $T > 0$.
\end{lemma}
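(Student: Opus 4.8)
The plan is to bound the tail moment $\sum_{n \geq T^2} n^p \widetilde a(n,T)$ by separately controlling two pieces: the decay coming from the exponential time average and a crude a priori bound on $a(n,t)$ coming from unitarity. First I would recall that $\widetilde a(n,T) = (1-e^{-2/T}) \sum_{t=0}^\infty e^{-2t/T} a(n,t)$ with $a(n,t) = |\langle \delta_n, \CMV^t \delta_0\rangle|^2 \leq 1$, and moreover $\sum_{n} a(n,t) = \|\CMV^t \delta_0\|^2 = 1$ for every $t$ by unitarity. The key structural fact is finite speed of propagation: the CMV matrix $\CMV$ has a fixed bandwidth (its matrix representation \eqref{def:cmv} has nonzero entries only within a bounded distance of the diagonal), so $\langle \delta_n, \CMV^t \delta_0\rangle = 0$ whenever $n > c\,t$ for a universal constant $c$ (one can take $c = 3$ from the band structure, or simply some fixed constant). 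Hence $a(n,t) = 0$ unless $t \geq n/c$.

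With that in hand, for $n \geq T^2$ we have
\[
\widetilde a(n,T)
=
(1-e^{-2/T}) \sum_{t \geq n/c} e^{-2t/T} a(n,t)
\leq
(1-e^{-2/T}) \sum_{t \geq n/c} e^{-2t/T}
\leq
C e^{-2n/(cT)},
\]
using $1 - e^{-2/T} \leq 2/T$ and summing the geometric series (whose sum is $\leq C T e^{-2n/(cT)}$, and the factor $1/T$ cancels the $T$). Then
\[
\sum_{n \geq T^2} n^p \widetilde a(n,T)
\leq
C \sum_{n \geq T^2} n^p e^{-2n/(cT)}.
\]
Now I would split the exponent: $e^{-2n/(cT)} = e^{-n/(cT)} e^{-n/(cT)}$, and for $n \geq T^2$ we have $n/(cT) \geq T/c$, so $e^{-n/(cT)} \leq e^{-T/c} \leq e^{-T/2}$ after possibly enlarging constants (or one simply notes $e^{-n/(cT)} \le e^{-\sqrt n/c}$ and absorbs). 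This pulls out the desired factor $e^{-T/2}$, and the remaining sum $\sum_{n \geq T^2} n^p e^{-n/(cT)}$ is bounded by $C_p$ times a constant (in fact it is at most $C_p T^{p+1}$, which is dominated once multiplied by the leftover exponential decay, or one keeps the extra half of the exponential to kill it entirely). Collecting the constants into a single $C_p$ depending only on $p$ (and the universal bandwidth constant) gives the claimed inequality $\sum_{n \geq T^2} n^p \widetilde a(n,T) \leq C_p e^{-T/2}$.

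The only genuine point requiring care is the finite-speed-of-propagation / bandedness input: one must observe that $\CMV$ is a banded operator so that $\CMV^t$ has matrix entries supported within distance $O(t)$ of the diagonal, which forces $a(n,t)=0$ for $n$ larger than a fixed multiple of $t$. This is elementary but is the one place where the CMV structure (as opposed to general unitarity) enters. Everything after that is a routine estimate on geometric sums and on $\sum n^p e^{-cn/T}$, and the constants depend only on $p$ and the fixed bandwidth, consistent with the convention on $C_p$ stated at the start of Section~3. I do not expect any real obstacle; the statement is a ``soft'' a priori bound whose role is simply to discard the far tail so that later arguments may restrict attention to $n \leq T^2$.
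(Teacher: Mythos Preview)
Your proposal is correct and follows essentially the same approach as the paper: both arguments use the finite bandwidth of $\CMV$ to force $a(n,t)=0$ for $n$ larger than a fixed multiple of $t$, and then reduce to an elementary estimate on geometric sums. The only cosmetic differences are that the paper uses the sharp constant $c=2$ (since $\CMV$ is pentadiagonal) and interchanges the order of summation to invoke $\sum_n a(n,t)=1$ rather than your pointwise bound $a(n,t)\le 1$; note that your line ``$e^{-T/c}\le e^{-T/2}$ after possibly enlarging constants'' only goes the right way once you take $c=2$, not $c=3$.
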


\begin{proof}
First, note that $\CMV^t\delta_0(n)$ is 0 whenever $n > 2t$ because $\CMV$ is pentadiagonal. Thus, we have
\begin{align*}
\sum_{n \geq T^2} n^p \widetilde a(n,T)
& =
\sum_{n \geq T^2} n^p(1-e^{-\frac{2}{T}}) \sum_{t \geq \frac{n}{2}} e^{-\frac{2t}{T}}|\CMV^t\delta_0(n)|^2 \\
& =
\sum_{t \geq \frac{T^2}{2}}\sum_{T^2 \leq n \leq 2t} n^p(1-e^{-\frac{2}{T}})e^{-\frac{2t}{T}}|\CMV^t\delta_0(n)|^2 \\
& \leq
C_p \sum_{t \geq \frac{T^2}{2}} t^p e^{-\frac{2t}{T}},
\end{align*}
where we have used unitarity of $\CMV$ and $n \leq 2t$ in the final inequality. Since $u^p e^{-u} \to 0$ as $u \to \infty$, we have $t^p \leq C_p T^p e^{t/T}$ for all $t, T > 0$. Consequently,
\begin{align*}
\sum_{n \geq T^2} n^p \widetilde a(n,T)
& \leq
C_p \sum_{t \geq \frac{T^2}{2}} T^p e^{-t/T} \\
& \leq
C_p T^p \frac{e^{-T/2}}{1-e^{-1/T}} \\
& \leq
C_p e^{-\frac{T}{2}},
\end{align*}
where we have used $u^{p+1} e^{-u/2} \to 0$ as $u \to \infty$ in the final line.
\end{proof}
Let $\CMV$ be our sparse CMV matrix. For each $N \in \Z_+$, let $\CMV_N$ be the truncated CMV matrix, with coefficients $(\alpha_{N,n})_{n=0}^\infty$ defined by
$$
\alpha_{N,n}
=
\begin{cases}
\alpha(n) & n \leq L_N, \\
0 & \text{otherwise.}
\end{cases}
$$
Henceforth, denote $R(z) = (\CMV - z)^{-1}$ and $R_N(z) = (\CMV_N - z)^{-1}$ for each $N \in \Z_+$ and $z \notin \partial \D$. Our approach to Theorem~\ref{t:ubs} is as follows: in light of the Parseval formula \eqref{eq:parsevalprob}, we want to have good upper bounds on matrix elements of $R(z)$. We will accomplish this by proving effective upper bounds for matrix elements of $R_N(z)$ and $R(z) - R_N(z)$ and then applying the elementary fact that $\abs{a + b}^2 \leq 2\abs{a}^2 + 2\abs{b}^2$ for all $a,b \in \C$. We begin the estimation of the matrix elements of $R_N(z)$ by explicitly computing (some of) them.

\begin{lemma}\label{l:gh}
Fix $N \in \Z_+$ and $z = e^{i\theta + \varepsilon}$ with $\varepsilon > 0$. Let $g = R_N(z) \delta_0$ and $h = \mathcal M_N g$. For all $k \geq L_N+1$, we have
\begin{align}
\label{eq:grec}
g(k)
& =
\begin{cases}
z^{-\frac{k -L_N}{2}} h(L_N + 1) & \text{$k$ odd and $L_N$ odd,} \\
z^{-\frac{k - L_N - 1}{2}} g(L_N + 1) & \text{$k$ odd and $L_N$ even,} \\
0 & \text{$k$ even,}
\end{cases} \\
\label{eq:hrec}
h(k)
& =
\begin{cases}
0 & \text{$k$ odd,} \\
z^{-\frac{k - L_N - 1 }{2}} h(L_N + 1) & \text{$k$ even and $L_N$ odd,} \\
z^{-\frac{k - L_N - 2}{2}} g(L_N + 1) & \text{$k$ even and $L_N$ even.}
\end{cases}
\end{align}
\end{lemma}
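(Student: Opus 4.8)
The plan is to exploit the fact that $\CMV_N$ has only trivial Verblunsky coefficients past the site $L_N$, so that on $\{n > L_N\}$ it behaves like the free CMV matrix, together with the observation that $g$ and $h = \mathcal M_N g$ both lie in $\ell^2(\Z_0)$: these two facts pin down the tails of $g$ and $h$ up to the single free parameter that survives at the edge of the ``free region.'' Since $\abs{z} = e^{\varepsilon} > 1$ and $\sigma(\CMV_N) \subseteq \partial\D$, the vector $g = R_N(z)\delta_0$ is a genuine $\ell^2$ vector, and $h = \mathcal M_N g \in \ell^2$ because $\mathcal M_N$ is unitary; also $(\CMV_N - z)g = \delta_0$ gives $(\CMV_N g)(n) = z\,g(n)$ for every $n \ge 1$.

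First I would establish a Gesztesy--Zinchenko transfer law for the pair $(g,h)$ away from the origin. Inspecting the proofs of Proposition~\ref{p:gz:stepbystep} and of \eqref{eq:gzcara:resolvent} in Proposition~\ref{p:cara:resolve}, one sees that the inhomogeneity $\delta_0$ perturbs the step-by-step recursion only at $n = 0$; consequently, writing $\Psi(n) \eqdef (g(n), h(n))^\top$, one still has $\Psi(k) = Z_N(k, L_N+1; z)\,\Psi(L_N+1)$ for all $k \ge L_N+1$, where $Z_N$ is the GZ cocycle built from the coefficients $(\alpha_{N,n})$. For $n > L_N$ we have $\alpha_{N,n} = 0$, so the one-step matrices in \eqref{gz:onestepmats:def} reduce to
\[
P(0,z) = \begin{pmatrix} 0 & z^{-1} \\ z & 0 \end{pmatrix} \text{ for } n \text{ even}, \qquad Q(0,z) = \begin{pmatrix} 0 & 1 \\ 1 & 0 \end{pmatrix} \text{ for } n \text{ odd},
\]
and a one-line computation gives the two-step products $Q(0,z)P(0,z) = \mathrm{diag}(z, z^{-1})$ and $P(0,z)Q(0,z) = \mathrm{diag}(z^{-1}, z)$.

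The heart of the argument is then reading off square-summability. Along even indices past $L_N$ the first entry of $\Psi$ is multiplied by $z$ and the second by $z^{-1}$ at each pair of steps, and along odd indices the roles are interchanged; since $\abs{z^{\pm1}} \ne 1$, the ``growing'' entry must vanish at the first site of the free region or else $g$ or $h$ fails to be in $\ell^2$. If $L_N$ is even, the free region starts at the odd site $L_N+1$, forcing $h(L_N+1) = 0$; applying $Q(0,z)$ once yields $g(L_N+2) = 0$ and $h(L_N+2) = g(L_N+1)$, and iterating the two-step relations propagates the surviving datum $g(L_N+1)$ outward with decay factor $z^{-1}$, giving precisely the ``$L_N$ even'' cases of \eqref{eq:grec}--\eqref{eq:hrec}. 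If $L_N$ is odd, the free region starts at the even site $L_N+1$, forcing $g(L_N+1) = 0$; applying $P(0,z)$ gives $g(L_N+2) = z^{-1}h(L_N+1)$ and $h(L_N+2) = 0$, and the identical iteration yields the ``$L_N$ odd'' cases.

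The routine part is the $2\times2$ algebra together with the exponent bookkeeping over the several cases of the lemma. The step deserving the most care is the transfer law for $(g,h)$: one must verify that the right-hand side $\delta_0$ disturbs the recursion only at $n=0$, so that the clean identity $\Psi(k) = Z_N(k,L_N+1;z)\Psi(L_N+1)$ holds for all $k \ge L_N+1$ (alternatively, one can simply invoke the computation that produced \eqref{eq:gzcara:resolvent}, applied to $\CMV_N$). The other mild point is the ``Weyl solution'' step: on $\{n > L_N\}$ the recursion is constant-coefficient with eigenvalues $z^{\pm1}$ off the unit circle, so its space of $\ell^2$ solutions is one-dimensional, which is exactly what eliminates the growing branch and leaves a single free parameter, namely the value of $g$ or $h$ at $L_N+1$.
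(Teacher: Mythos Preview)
Your proof is correct and follows essentially the same route as the paper: invoke the transfer relation $\Psi(k) = Z_N(k,L_N+1;z)\,\Psi(L_N+1)$ for $k \ge L_N+1$ (justified via the computation behind \eqref{eq:gzcara:resolvent}), compute the free GZ matrices explicitly since $\alpha_{N,n}=0$ past $L_N$, and use $g,h\in\ell^2$ with $|z|>1$ to kill the growing branch, forcing $h(L_N+1)=0$ or $g(L_N+1)=0$ according to the parity of $L_N$. Your write-up is a bit more explicit about the two-step products and the Weyl-solution mechanism, but the argument is the same.
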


\begin{proof}
From the proof of \eqref{eq:gzcara:resolvent} from Lemma~\ref{p:cara:resolve}, we have
$$
\begin{pmatrix}
g(k) \\
h(k)
\end{pmatrix}
=
Z_N(k,L_N + 1;z)
\begin{pmatrix}
g(L_N + 1) \\
h(L_N + 1)
\end{pmatrix}
$$
for $k \geq L_N + 1$, where $Z_N$ is the Gesztesy-Zinchenko transfer matrix for $\CMV_N$. Suppose first that $L_N$ is even. Because $\alpha_{N,j} = 0$ for all $j \geq L_N + 1$, we may explicitly compute $Z_N(k,L_N + 1; z)$, and we obtain
\begin{align*}
g(k)
& =
\begin{cases}
z^{-\frac{k - L_N - 1}{2}} g(L_N + 1) & \text{$k$ odd,} \\
z^{\frac{k - L_N - 2}{2}} h(L_N + 1) & \text{$k$ even,}
\end{cases} \\
h(k)
& =
\begin{cases}
z^{\frac{k - L_N - 1}{2}} h(L_N + 1) & \text{$k$ odd,} \\
z^{-\frac{k - L_N - 2}{2}} g(L_N + 1) & \text{$k$ even.}
\end{cases}
\end{align*}
for all $k \geq L_N + 1$. Because $\abs{z} = e^{\varepsilon} > 1$ and $g,h \in \ell^2(\Z_0)$, it follows that $h(L_N + 1) = 0$, which proves \eqref{eq:grec} and \eqref{eq:hrec} when $L_N$ is even. The arguments when $L_N$ is odd are identical.
\end{proof}

In light of the previous lemma, we can prove effective bounds on the ``tails'' of $g$ and $h$ by proving estimates on $g(L_N+1)$ and $h(L_N+1)$, which is what we accomplish in the following lemma.

\begin{lemma} \label{l:ghLNbound}
Fix $N \in \Z_+$, let $z = e^{i\theta + \varepsilon}$ with $0 < \varepsilon \leq \frac{4}{L_N}$, and put $g = R_N(z) \delta_0$ and $h = \mathcal{M}_N g$. Then
\begin{equation} \label{eq:ghLNbound}
\abs{g(L_N + 1)}^2 + \abs{h(L_N + 1)}^2
\leq
 -\frac{C}{1 + \left(e^{2\varepsilon} - 1\right) L_N^\frac{1}{\eta}} \Re\!\left(F_N(e^{i\theta + \varepsilon})\right),
\end{equation}
where $F_N$ denotes the Carath\'eodory function of $\mathcal C_N$.

\end{lemma}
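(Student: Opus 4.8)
The plan is to combine Proposition~\ref{p:cara:resolve} with the structure of $\mathcal C_N$ on the long ``free'' stretch that precedes its last barrier. Write $z = e^{i\theta+\varepsilon}$ and set $x \eqdef e^{2\varepsilon}-1 > 0$. By \eqref{eq:cfct:unorm} applied to $\mathcal C_N$ one has $\|g\|^2 = \Re(F_N(z))/(1-|z|^2) = -\Re(F_N(z))/x$, so $-\Re(F_N(z)) = x\|g\|^2 \geq 0$, and \eqref{eq:ghLNbound} is equivalent to
$$
w \eqdef |g(L_N+1)|^2 + |h(L_N+1)|^2 \;\leq\; \frac{C}{1+xL_N^{1/\eta}}\,x\|g\|^2 .
$$
In other words, we must show the boundary values at $L_N+1$ are small relative to $\|g\|^2$, with the gain coming from the barrier $\rho_{L_N} = L_N^{-(1-\eta)/(2\eta)}$ at $L_N$. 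Throughout write $g_\inn$ and $g_\out$ for the restrictions of $g$ to $\{0,\dots,L_N\}$ and $\{n > L_N\}$, so $\|g\|^2 = \|g_\inn\|^2 + \|g_\out\|^2$.

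First I would record two exact consequences of Lemma~\ref{l:gh} and the one-step recursion $\Phi(n+1) = Y(n,z)\Phi(n)$ (valid for $n\geq 1$ by \eqref{eq:gzcara:resolvent}, with $\Phi = (g,h)^\top$). Lemma~\ref{l:gh} tells us that exactly one of $g(L_N+1)$, $h(L_N+1)$ is nonzero and that $g_\out$ decays geometrically with ratio $e^{-\varepsilon}$; summing this tail gives $x\,\|g_\out\|^2 \geq w$. Next, since $Y(L_N,z)$ is $P(\alpha_{L_N},z)$ or $Q(\alpha_{L_N},z)$ and one component of $\Phi(L_N+1)$ vanishes, a direct computation using $1-|\alpha_{L_N}|^2 = \rho_{L_N}^2$ shows $w \leq \rho_{L_N}^2 \max\{|g(L_N)|^2,|h(L_N)|^2\}$; equivalently $\max\{|g(L_N)|^2,|h(L_N)|^2\} \geq \rho_{L_N}^{-2}\,w = L_N^{(1-\eta)/\eta}\,w$.

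The core of the argument is the lower bound $\|g_\inn\|^2 \gtrsim L_N\cdot L_N^{(1-\eta)/\eta}\,w = L_N^{1/\eta}\,w$. Here one uses that all Verblunsky coefficients of $\mathcal C_N$ strictly between $L_{N-1}$ and $L_N$ vanish, so on that stretch $\Phi$ evolves under the free matrices $P(0,z), Q(0,z)$. Iterating these backward from $L_N$, one checks explicitly that along one of the two index parities (which one depends on the parity of $L_N$) the quantity $|g(n)|$ grows like $e^{k\varepsilon}$ — in particular $|g(n)| \geq \max\{|g(L_N)|,|h(L_N)|\}$ — at the site $n$ lying $k$ double-steps before $L_N$, for every $0\leq k < \tfrac12(L_N-L_{N-1})$. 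By the sparseness hypothesis \eqref{eq:sparsedef}, $L_{N-1} \leq L_1\cdots L_{N-1} = L_N^{\nu_N} = o(L_N)$, so there are at least $L_N/4$ such sites once $N$ is large; summing $|g|^2$ over them yields $\|g_\inn\|^2 \geq \tfrac14 L_N \max\{|g(L_N)|^2,|h(L_N)|^2\} \geq \tfrac14 L_N^{1/\eta}w$. Feeding this and the tail bound into $x\|g\|^2 = x\|g_\inn\|^2 + x\|g_\out\|^2$ gives $x\|g\|^2 \geq \bigl(1 + \tfrac{x}{4}L_N^{1/\eta}\bigr)w \geq \tfrac14\bigl(1+xL_N^{1/\eta}\bigr)w$, i.e.\ \eqref{eq:ghLNbound} with $C=4$. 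For the finitely many $N$ for which $L_N - L_{N-1}$ is too small for the sparseness bound to apply (including $N=1$, where the stretch runs into the exceptional site $0$), I would instead invoke only the crude bound $w \leq -\Re(F_N(z))$ (from $\|g_\out\|^2 \leq \|g\|^2$) together with the fact that $1 + xL_N^{1/\eta}$ is bounded by a constant depending only on $N$ because $0<\varepsilon\leq 4/L_N$; enlarging $C$ to absorb these finitely many constants finishes the proof.

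\emph{The main obstacle} is the lower bound on $\|g_\inn\|^2$: the naive estimate $\|g_\inn\|^2 \geq |g(L_N)|^2$ loses a whole factor of $L_N$ and would only put $L_N^{(1-\eta)/\eta}$ — not $L_N^{1/\eta}$ — in the denominator, which is far from enough. The point is precisely that the \emph{sparseness} of the barriers forces a free stretch of length $\sim L_N$ before $L_N$, over which $g$ (in one parity) stays at least as large as its value at $L_N$, so that the product $(\text{length of stretch})\times(\text{barrier height})^2 = L_N\cdot\rho_{L_N}^{-2} = L_N^{1/\eta}$ is what produces the exponent in the statement. The remaining pieces — the two parity cases for $L_N$, keeping track of the harmless factors $e^{\pm 2\varepsilon}$, and the bookkeeping with the convention $L_0 = -1$ — are routine.
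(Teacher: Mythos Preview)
Your argument is correct and follows the same overall strategy as the paper: use the free stretch preceding $L_N$ to show $\|g\|^2$ dominates the value of $\Phi$ at $L_N$, pick up the barrier gain $\rho_{L_N}^2=L_N^{-(1-\eta)/\eta}$ when passing through $L_N$, and use the geometric tail from Lemma~\ref{l:gh} to produce the ``$1$'' in the denominator. The organization differs in two small but pleasant ways. First, where the paper bounds $|g(k)|^2+|h(k)|^2$ on the stretch via $\|Z(k,L_N;z)\|\leq e^{\varepsilon(L_N-k)}$ (which is why it needs the hypothesis $\varepsilon\leq 4/L_N$ at this step), you compute the free evolution explicitly and observe that along one parity $|g|$ actually \emph{grows} backward, so no use of $\varepsilon\leq 4/L_N$ is needed there. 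Second, the paper derives the two bounds $w\leq -\Re F_N$ and $w\leq -C(xL_N^{1/\eta})^{-1}\Re F_N$ separately and then combines them into the denominator $1+xL_N^{1/\eta}$, whereas your additive decomposition $x\|g\|^2 = x\|g_\inn\|^2 + x\|g_\out\|^2 \geq \bigl(\tfrac{x}{4}L_N^{1/\eta}+1\bigr)w$ gets there in one stroke. Your explicit treatment of the finitely many small $N$ is also a bit more careful than the paper's, which tacitly absorbs them into the constant.
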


\begin{proof}
Suppose that $L_N$ is even. First, we observe
\[
-\frac{1}{e^{2\varepsilon}-1} \mathrm{Re}(F_N(e^{i\theta + \varepsilon}))
=
\|g\|^2
\geq
\sum_{k = 0}^\infty |g(L_N+2k+1)|^2
=
\frac{1}{1-e^{-2\varepsilon}} |g(L_N+1)|^2
\]
by Proposition~\ref{p:cara:resolve} and Lemma~\ref{l:gh}, so we obtain
\begin{equation} \label{eq:easyghLNbound}
|g(L_N+1)|^2
\leq
- \mathrm{Re}(F_N(e^{i\theta + \varepsilon})).
\end{equation}
Now, for $\frac{L_N}{2} < k < L_N$, we have
\[
\begin{pmatrix}
g(k) \\
h(k) \\
\end{pmatrix}
=
Z(k,L_N;z) \begin{pmatrix}
g(L_N) \\
h(L_N) \\
\end{pmatrix}.
\]
Since $\alpha_k = 0$ for $L_{N-1} < k < L_N$, we get
\[
\|Z(k,L_N;z)\|
\leq
e^{\varepsilon(L_N-k)},
\]
so
\[
|g(k)|^2 + |h(k)|^2
\geq
e^{-2\varepsilon(L_N-k)} \left(\abs{g(L_N)}^2 + \abs{h(L_N)}^2\right).
\]
Because $\varepsilon \leq \frac{4}{L_N}$ and $\frac{L_N}{2} < k < L_N$, this yields
\begin{align*}
CL_N \left(\left|g(L_N)\right|^2 + \left|h(L_N)\right|^2\right)
& \leq
\|g\|^2 + \|h\|^2 \\
& =
2\|g\|^2 \\
& =
-\frac{2}{e^{2\varepsilon}-1} \Re\!\left(F_N\left(e^{i\theta+\varepsilon}\right)\right),
\end{align*}
where the second line follows from unitarity of $\mathcal{M}_N$ and the final line follows from Proposition~\ref{p:cara:resolve}. Consequently,
\begin{equation} \label{eq:gLN1stbound}
\left|g\left(L_N\right)\right|^2
\leq
-\frac{C}{\left(e^{2\varepsilon} - 1\right) L_N} \Re\!\left(F_N\left(e^{i\theta + \varepsilon}\right)\right).
\end{equation}

For simplicity, denote $\alpha = \alpha_{L_N} = \sqrt{1 - L_N^{-\frac{1 - \eta}{\eta}}}$ and $\rho = \rho_{L_N} = L_N^{-\frac{1 - \eta}{2\eta}}$. Since $L_N$ is even, we have
\begin{equation} \label{eq:ghLNghLNp1}
\begin{pmatrix}
g(L_N + 1) \\
h(L_N + 1)
\end{pmatrix}
=
\frac{1}{\rho}
\begin{pmatrix}
-\alpha & z^{-1} \\
z & -\alpha
\end{pmatrix}
\begin{pmatrix}
g(L_N) \\
h(L_N)
\end{pmatrix}.
\end{equation}
Because $h(L_N + 1) = 0$, we have $h(L_N) = \frac{z}{\alpha} g(L_N)$, so the first component of \eqref{eq:ghLNghLNp1} gives us
\[
g(L_N + 1)
=
\frac{1}{\rho} \left(-\alpha + \alpha^{-1}\right) g(L_N)
=
\frac{\rho}{\alpha} g(L_N).
\]
Consequently,
\[
\abs{g(L_N + 1)}^2
=
\frac{\rho^2}{\alpha^2} \abs{g(L_N)}^2
=
\frac{L_N^{-\frac{1 - \eta}{\eta}}}{1 - L_N^{-\frac{1 - \eta}{\eta}}} \abs{g(L_N)}^2
\leq
2L_N^{-\frac{1 - \eta}{\eta}} \abs{g(L_N)}^2,
\]
where the last step requires taking $N$ sufficiently large. Combining this with \eqref{eq:gLN1stbound}, we have
\begin{equation} \label{eq:gLNbound1}
\abs{g(L_N + 1)}^2
\leq
-\frac{C}{\left(e^{2\varepsilon} - 1\right) L_N^{\frac{1}{\eta}}} \Re\left(F_N\left(e^{i\theta + \varepsilon}\right)\right).
\end{equation}
Thus, using \eqref{eq:easyghLNbound}, \eqref{eq:gLNbound1}, and $h(L_N + 1) = 0$, we obtain \eqref{eq:ghLNbound} in this case. The argument when $L_N$ is odd is identical, except $g(L_N + 1) = 0$ and we need to bound $h(L_N + 1)$.
\end{proof}

We now turn towards the estimation of matrix elements of $R(z) - R_N(z)$. In light of standard resolvent identities, we first compute $\CMV - \CMV_N$.

\begin{lemma}
\label{cmv-phi}
Let $\phi \in \Hi$ be given. For each $k \in \Z_+$, define $v_k = v_k(\phi) \in \Hi$ as follows. If $L_k$ is even, let
\[
v_k
\eqdef
(\alpha_{L_k}\phi_{L_k-1} + (\rho_{L_k}-1)\phi_{L_k+2})\delta_{L_k}
+
((\rho_{L_k}-1)\phi_{L_k-1} - (\alpha_{L_k}\phi_{L_k+2})\delta_{L_k+1},
\]
and, if $L_k$ is odd, let
\[
v_k
\eqdef
(\alpha_{L_k}\phi_{L_k} + (\rho_{L_k}-1)\phi_{L_k+1})\delta_{L_k-1}
+
((\rho_{L_k}-1)\phi_{L_k} - (\alpha_{L_k}\phi_{L_k+1})\delta_{L_k+2}.
\]
Then
\begin{equation} \label{eq:truncdiff}
(\CMV - \CMV_N)\phi
=
\sum_{k=N+1}^\infty v_k
\end{equation}
for all $N$ sufficiently large.
\end{lemma}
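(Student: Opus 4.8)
The plan is to exploit the factorizations $\CMV = \mathcal L\mathcal M$ and $\CMV_N = \mathcal L_N\mathcal M_N$ and to expand the difference. Write $\mathcal L = \mathcal L_N + \Delta\mathcal L$ and $\mathcal M = \mathcal M_N + \Delta\mathcal M$, where $\Delta\mathcal L$ is the block-diagonal operator supported on the $2\times 2$ coordinate blocks $\{L_k,L_k+1\}$ with $k > N$ and $L_k$ \emph{even} (precisely the blocks in which $\alpha_{L_k}$ appears in $\mathcal L$), the block there being $\Theta(\alpha_{L_k}) - \Theta(0)$; likewise $\Delta\mathcal M$ is supported on the blocks $\{L_k,L_k+1\}$ with $k > N$ and $L_k$ \emph{odd}, with the same block values. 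Using that $\alpha_{L_k} = \sqrt{1 - L_k^{-(1-\eta)/\eta}}$ is real,
\[
\Theta(\alpha_{L_k}) - \Theta(0)
=
\begin{pmatrix}
\alpha_{L_k} & \rho_{L_k} - 1 \\
\rho_{L_k} - 1 & -\alpha_{L_k}
\end{pmatrix},
\qquad
\Theta(0)
=
\begin{pmatrix} 0 & 1 \\ 1 & 0 \end{pmatrix},
\]
and expanding the product gives $\CMV - \CMV_N = \Delta\mathcal L\,\mathcal M_N + \mathcal L_N\,\Delta\mathcal M + \Delta\mathcal L\,\Delta\mathcal M$.

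Next I would invoke the sparseness hypothesis \eqref{eq:sparsedef}. Since $\log(L_1\cdots L_{N-1}) \ge \log L_{N-1}$, we have $\log L_{N-1}/\log L_N \le \nu_N \to 0$, so $L_N/L_{N-1}\to\infty$ and in particular $L_{k+1}-L_k\to\infty$. Fix $N_0$ so large that for all $N \ge N_0$ the windows $W_k \eqdef \{L_k-1,L_k,L_k+1,L_k+2\}$ with $k > N$ are pairwise disjoint and each lies strictly to the right of $L_N$. Then inside each $W_k$ the only Verblunsky coefficient of $\CMV$ or of $\CMV_N$ that is nonzero is $\alpha_{L_k}$ itself; in particular $\Delta\mathcal L$ and $\Delta\mathcal M$ are supported on disjoint coordinate blocks, so $\Delta\mathcal L\,\Delta\mathcal M = 0$, and every $2\times 2$ block of $\mathcal L_N$ and of $\mathcal M_N$ meeting some $W_k$ equals $\Theta(0)$, i.e.\ the transposition of the two relevant basis vectors.

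With these reductions, \eqref{eq:truncdiff} becomes a block-by-block computation; since the resulting $v_k$ turn out to have pairwise disjoint supports contained in the $W_k$, it suffices to check the identity one coordinate at a time. Let $k > N$ with $L_k = 2m$ even. The $\Theta(0)$-blocks of $\mathcal M_N$ at $\{2m-1,2m\}$ and $\{2m+1,2m+2\}$ show that the restriction of $\mathcal M_N\phi$ to $\{2m,2m+1\}$ is $(\phi_{L_k-1},\,\phi_{L_k+2})^\top$, and applying the $\Delta\mathcal L$-block at $\{2m,2m+1\}$ to this vector produces exactly the stated $v_k$, supported on $\{L_k,L_k+1\}$. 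Now let $k > N$ with $L_k = 2m-1$ odd. The $\Delta\mathcal M$-block at $\{2m-1,2m\}$ carries $(\phi_{L_k},\,\phi_{L_k+1})^\top$ to the vector on $\{L_k,L_k+1\}$ with entries $\alpha_{L_k}\phi_{L_k}+(\rho_{L_k}-1)\phi_{L_k+1}$ and $(\rho_{L_k}-1)\phi_{L_k}-\alpha_{L_k}\phi_{L_k+1}$; then the $\Theta(0)$-blocks of $\mathcal L_N$ at $\{2m-2,2m-1\}$ and $\{2m,2m+1\}$ transport these two entries to the coordinates $2m-2=L_k-1$ and $2m+1=L_k+2$, which is precisely $v_k$ in the odd case. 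Summing over $k > N$ yields \eqref{eq:truncdiff} for all $N \ge N_0$.

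The computations here are routine; the only part that demands care is the bookkeeping --- tracking the parity of each $L_k$, which $\Theta(0)$-block of $\mathcal L_N$ or $\mathcal M_N$ moves a given coordinate where, and the fact that $\Theta(0)$ is the transposition rather than the identity. The one genuinely structural input is the vanishing of $\Delta\mathcal L\,\Delta\mathcal M$ and of all interactions between distinct perturbed windows, which is exactly where the sparseness condition enters (through $L_{k+1}-L_k\to\infty$) and what accounts for the qualifier ``for all $N$ sufficiently large.''
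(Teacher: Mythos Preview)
Your proof is correct and is essentially the same approach as the paper's, which simply chooses $N_0$ large enough that $L_{k+1} \ge L_k + 2$ for $k \ge N_0$ and then defers to ``straightforward calculations using the definitions of $\CMV$ and $\CMV_N$.'' You have carried out precisely that calculation, organizing it cleanly via the factorization $\CMV - \CMV_N = \Delta\mathcal L\,\mathcal M_N + \mathcal L_N\,\Delta\mathcal M + \Delta\mathcal L\,\Delta\mathcal M$ and the observation that the cross term vanishes; your slightly stronger separation hypothesis (disjointness of the four-site windows $W_k$, hence $L_{k+1} \ge L_k + 4$) is harmless since it also follows from sparseness for large $N$.
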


\begin{proof}
Let us suppose that $N_0$ is large enough that $L_{k+1} \geq L_k + 2$ for all $k \geq N_0$. Then, for all $N \geq N_0$, \eqref{eq:truncdiff} follows from straightforward calculations using the definitions of $\CMV$ and $\CMV_N$.
\end{proof}

The explicit form of $\CMV - \CMV_N$ together with the sparseness condition on the $L$'s enables us to prove quite strong estimates on $\|R(z) \delta_0 - R_N(z) \delta_0 \|$.

\begin{lemma} \label{l:rdiffbound}
Fix $N \in \Z_{+}$ sufficiently large, and let $z = e^{i\theta+\varepsilon}$ with $L_N^{-\frac{1}{\eta}} < \varepsilon < \frac{4}{L_N}$. Then
\[
\| R(z) \delta_0 - R_N(z) \delta_0 \|^2
\leq
Ce^{-\varepsilon^{-\eta}}.
\]
\end{lemma}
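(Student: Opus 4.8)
The plan is to run the second resolvent identity and exploit the structural information already assembled: $g := R_N(z)\delta_0$ is, up to an exponentially small tail, supported in $[0,L_N]$, whereas $\CMV-\CMV_N$ lives near the sites $L_k$ with $k>N$, all of which are astronomically far beyond $L_N$ by sparseness.

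First I would use the resolvent identity $R(z)-R_N(z)=R(z)(\CMV_N-\CMV)R_N(z)$ applied to $\delta_0$, giving $R(z)\delta_0-R_N(z)\delta_0=R(z)(\CMV_N-\CMV)g$. Since $\sigma(\CMV)\subseteq\partial\D$ and $|z|=e^{\varepsilon}$, we have $\|R(z)\|\le (e^{\varepsilon}-1)^{-1}\le\varepsilon^{-1}$, hence
\[
\|R(z)\delta_0-R_N(z)\delta_0\|\le\frac{1}{\varepsilon}\,\|(\CMV-\CMV_N)g\|,
\]
and likewise $\|g\|\le\varepsilon^{-1}$, so (with $h:=\mathcal M_N g$ and $\mathcal M_N$ unitary) $\|h\|=\|g\|\le\varepsilon^{-1}$. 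Next I would expand $(\CMV-\CMV_N)g=\sum_{k\ge N+1}v_k(g)$ via Lemma~\ref{cmv-phi} (valid for $N$ large). Each $v_k(g)$ is supported within $[L_k-1,L_k+2]$ and, using $\alpha_{L_k}<1$ and $|\rho_{L_k}-1|\le 1$,
\[
\|v_k(g)\|^2\le C\big(|g(L_k-1)|^2+|h(L_k-1)|^2+|g(L_k+2)|^2+|h(L_k+2)|^2\big)
\]
(the exact pair of indices depends on the parity of $L_k$, but in all cases they exceed $L_N$). For $N$ large the supports of the $v_k(g)$ are pairwise disjoint, so $\|(\CMV-\CMV_N)g\|^2=\sum_{k\ge N+1}\|v_k(g)\|^2$.

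Now I would insert the decay of the tail of $g$. By Lemma~\ref{l:gh}, for every $j\ge L_N+1$,
\[
|g(j)|^2+|h(j)|^2\le e^{-\varepsilon(j-L_N-1)}\big(|g(L_N+1)|^2+|h(L_N+1)|^2\big)\le 2\varepsilon^{-2}e^{-\varepsilon(j-L_N-1)},
\]
using the crude bound $|g(L_N+1)|^2+|h(L_N+1)|^2\le 2\|g\|^2\le 2\varepsilon^{-2}$ from above (one could invoke Lemma~\ref{l:ghLNbound} for a sharper constant, but it is not needed). Evaluating at $j=L_k\pm O(1)$ and summing the resulting geometric series (the $L_k$ are strictly increasing integers, and $1-e^{-\varepsilon}\ge\varepsilon/2$ for $N$ large), we get
\[
\|(\CMV-\CMV_N)g\|^2\le\frac{C}{\varepsilon^2}\,e^{\varepsilon L_N}\sum_{k\ge N+1}e^{-\varepsilon L_k}\le\frac{C}{\varepsilon^3}\,e^{\varepsilon L_N}\,e^{-\varepsilon L_{N+1}}.
\]
Since $\varepsilon<4/L_N$, the factor $e^{\varepsilon L_N}$ is bounded by $e^4$, and combining with the first display yields
\[
\|R(z)\delta_0-R_N(z)\delta_0\|^2\le\frac{C}{\varepsilon^5}\,e^{-\varepsilon L_{N+1}}.
\]

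To close, I would invoke the sparseness condition: $\nu_N\to 0$ forces $\log L_N=o(\log L_{N+1})$, so for $N$ large $L_{N+1}\ge L_N^{1/\eta+2}$. Together with the hypothesis $\varepsilon>L_N^{-1/\eta}$ this gives $\varepsilon L_{N+1}\ge L_N^{-1/\eta}L_N^{1/\eta+2}=L_N^2$, while $\varepsilon^{-\eta}<L_N$ and $\varepsilon^{-5}<L_N^{5/\eta}$; hence for $N$ large
\[
\frac{C}{\varepsilon^5}\,e^{-\varepsilon L_{N+1}}\le C L_N^{5/\eta}e^{-L_N^2}\le C e^{-L_N}\le C e^{-\varepsilon^{-\eta}},
\]
which is the claimed estimate. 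The main (and essentially only) point of substance is matching the decay rate of $g$, which is of order $\varepsilon$ per site, against the target rate $\varepsilon^{-\eta}$: this works precisely because the lower bound $\varepsilon>L_N^{-1/\eta}$ keeps $\varepsilon^{-\eta}$ below $L_N$, while sparseness makes $\varepsilon L_{N+1}$ polynomially large in $L_N$; the slack is enormous, so the various powers of $\varepsilon$ accumulated from resolvent norms are irrelevant. The only remaining care is routine bookkeeping of the parity cases in Lemmas~\ref{cmv-phi}–\ref{l:gh} and the disjointness of the supports of the $v_k(g)$.
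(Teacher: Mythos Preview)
Your proof is correct and follows essentially the same route as the paper: second resolvent identity, the explicit form of $(\CMV-\CMV_N)g$ from Lemma~\ref{cmv-phi}, the exponential decay of $g$ beyond $L_N$ from Lemma~\ref{l:gh}, and then sparseness to absorb the powers of $\varepsilon^{-1}$ into $e^{-\varepsilon L_{N+1}}$. The only differences are cosmetic bookkeeping---you pick up $\varepsilon^{-5}$ versus the paper's $\varepsilon^{-4}$, and you invoke $L_{N+1}\ge L_N^{1/\eta+2}$ rather than $L_N^{1/\eta+1}$---none of which matters since, as you note, the slack is enormous.
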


\begin{proof}
As before, we denote $g = R_N(z) \delta_0$. First, observe that $\|R(z)\| \leq (e^\varepsilon-1)^{-1} \leq \varepsilon^{-1}$, so
\begin{equation}
\label{4-7ineq}
\| R(z) \delta_0 - R_N(z) \delta_0 \|^2
\leq
\varepsilon^{-2} \|(\CMV - \CMV_N)g\|^2
\end{equation}
by a standard resolvent identity. Using Lemmas~\ref{l:gh} and \ref{cmv-phi}, we have
\begin{align*}
\norm{(\CMV - \CMV_N)g}^2
&\leq
C \sum_{k = N + 1}^{\infty} e^{-2\varepsilon (L_k - L_N)} \left(\alpha_{L_k}^2 + (\rho_{L_k} - 1)^2\right) \frac{1}{\left(e^{\varepsilon} - 1\right)^2} \\
& \leq
C\varepsilon^{-2}  \sum_{k = N + 1}^{\infty} e^{-2\varepsilon (L_k - L_N)}.
\end{align*}
Combining this with (\ref{4-7ineq}), we have
\[
\norm{R(z)\delta_0 - R_N(z)\delta_0}^2
\leq
C\varepsilon^{-4} \sum_{k = N + 1}^{\infty} e^{-2\varepsilon (L_k - L_N)}.
\]
Now, use the sparseness condition to ensure that $N$ is large enough that $L_{N + 1} \geq 2L_N$, so that
\[
2\varepsilon \left(L_k - L_N\right)
\geq
\varepsilon L_k
\geq
L_N^{-\frac{1}{\eta}} L_k
\]
for all $k \geq N + 1$. Using sparseness once more to get $L_{N + 1} \geq L_N^{\frac{1}{\eta} + 1}$, we have
\[
\sum_{k = N + 1}^{\infty} e^{-2\varepsilon (L_k - L_N)} \leq
\sum_{k = N + 1}^{\infty} e^{- L_N^{-\frac{1}{\eta}} L_k}
\leq
\sum_{m = L_{N + 1}}^{\infty} e^{- L_N^{-\frac{1}{\eta}} m}
\leq
 Ce^{-L_N}.
\]
To obtain the final inequality, we have summed the geometric series and used $L_{N+1} \geq L_N^{\frac{1}{\eta} + 1}$. Since $\varepsilon \geq L_N^{-\frac{1}{\eta}}$ and $\varepsilon^{-4}e^{-\varepsilon^{-\eta}} \to 0$ as $\varepsilon \downarrow 0$, we may enlarge $C$ to obtain the statement of the lemma.
\end{proof}

We now have all of the estimates which we will need to prove Theorem~\ref{t:ubs}.

\begin{proof}[Proof of Theorem~\ref{t:ubs}]
By \eqref{eq:parsevalprob}, we have
\[
\widetilde a(n,T)
=
\left(e^{\frac{2}{T}} - 1\right) \int_0^{2\pi} \abs{R\left(e^{i\theta + \frac{1}{T}}\right)\delta_0(n)}^2 \frac{d\theta}{2\pi}.
\]
Since $|a+b|^2 \leq 2|a|^2 + 2|b|^2$ for all complex numbers $a$ and $b$, we may bound the integral on the right hand side by bounding the same integral with $R$ replaced by $R_N$ and $R-R_N$ and combining the resulting inequalities. First, apply Lemmas~\ref{l:gh} and \ref{l:ghLNbound} and $(e^{2/T}-1)^{-1} \leq T/2$ to get
\[
\sum_{2L_N \leq n \leq T^2} n^p \abs{g(n)}^2 \\
\leq
-CTL_N^{-\frac{1}{\eta}} \! \sum_{L_N \leq k \leq \frac{T^2 - 1}{2}} \! \left(2k + 1\right)^p e^{\frac{1}{T} (L_N - 2k)} \mathrm{Re}(F_N(e^{i\theta+1/T})),
\]
where $g = R_N(e^{i\theta+1/T}) \delta_0$, as before. Additionally, because $\frac{L_N}{T} \leq 4$,
\begin{align*}
\sum_{L_N \leq k \leq \frac{T^2 - 1}{2}} \left(2k + 1\right)^p e^{\frac{1}{T} \left(L_N - 2k\right)}
& \leq
C_p \sum_{L_N \leq k \leq \frac{T^2 - 1}{2}} k^p e^{-\frac{2k}{T}} \\
& \leq
C_p T^p \sum_{L_N \leq k \leq \frac{T^2 - 1}{2}}  e^{-\frac{k}{T}} \\
& \leq
C_p T^{p + 1}.
\end{align*}
Thus, by Lemma~\ref{l:cara:int}, we have
\begin{equation} \label{eq:ubthm:RNbound}
\int_0^{2\pi} \sum_{2L_N \leq n \leq T^2} n^p \abs{R_N\left(e^{i\theta + \frac{1}{T}}\right)\delta_0(n)}^2 \frac{d\theta}{2\pi} \\
\leq
C_p T^{p + 2} L_N^{-\frac{1}{\eta}}.
\end{equation}

Now, we turn towards bounding the relevant integral with $R$ replaced by $R - R_N$. Denote $\varphi(n,z) = R(z)\delta_0(n) - R_N(z) \delta_0(n)$ for $z \notin \partial \D$ and $n \in \Z_0$. By Lemma~\ref{l:rdiffbound}, we have
\begin{align*}
\sum_{2L_N \leq n \leq T^2} n^p \abs{\varphi(n,e^{i\theta + 1/T})}^2
& \leq
T^{2p}  \norm{\varphi(\cdot,e^{i\theta+1/T})}^2
\\
& \leq
C T^{2p} e^{-T^{\eta}}
\end{align*}
for each $\theta \in [0,2\pi]$. Combining this estimate with \eqref{eq:ubthm:RNbound}, and using the Parseval formula from \eqref{eq:parsevalprob}, we see that
\[
\sum_{2L_N \leq n \leq T^2} n^p \widetilde a(n,T)
\leq
CT^{2p} e^{-T^{\eta}} + C_p T^{p + 1} L_N^{-\frac{1}{\eta}}.
\]
for all sufficiently large $N$. Because $L_N^{-1/\eta} \geq (4T)^{-1/\eta}$, the second term dominates in the limit $T \to \infty$, so we may enlarge $C_p$ to ensure that
\[
\sum_{2L_N \leq n \leq T^2} n^p \widetilde a(n,T)
\leq
C_p T^{p + 1} L_N^{-\frac{1}{\eta}}.
\]
Using Lemma~\ref{l:tsquared}, to bound the sum over $n \geq T^2$, \eqref{eq:tubs:outmomentbound} from the theorem follows. By letting $p = 0$, we obtain the bound on the outside probabilities from \eqref{eq:tubs:poutbound}
Using this, we have
\begin{align*}
\langle\abs{X}^p\rangle(T)
& =
\sum_{n} (n^p + 1) \widetilde a(n,T) \\
&=
\sum_{n < 2L_N} (n^p + 1) \widetilde a(n,T)
+ \sum_{n \geq 2L_N} (n^p + 1) \widetilde a(n,T) \\
&\leq
C_p L_N^p + C_p T^{p + 1} L_N^{-\frac{1}{\eta}},
\end{align*}
which proves \eqref{eq:tubs:momentbound}.

\end{proof}

\begin{coro}\label{c:betaupperbounds}
For all $p > 0$,
\[
\widetilde{\beta}^-(p) \leq \frac{p + 1}{p + 1/\eta}
\quad
\text{ and }\quad
\widetilde{\beta}^+(p) \leq 1.
\]
\end{coro}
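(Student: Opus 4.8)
The plan is to deduce both inequalities from the moment bound \eqref{eq:tubs:momentbound} in Theorem~\ref{t:ubs}, which holds exactly on the windows $\frac{L_N}{4}\le T\le L_N^{1/\eta}$, by optimizing the choice of time scale. For the $\limsup$ bound $\widetilde\beta^+(p)\le1$ I would use the elementary ballistic estimate $\langle|X|^p\rangle(T)\le C_pT^p$, valid for every $T>0$ and any CMV matrix: since $\CMV$ is pentadiagonal, $\langle\delta_n,\CMV^t\delta_0\rangle=0$ for $n>2t$, so interchanging the sums in $\langle|X|^p\rangle(T)=(1-e^{-2/T})\sum_{t\ge0}e^{-2t/T}\sum_n(n^p+1)|\langle\delta_n,\CMV^t\delta_0\rangle|^2$ and using $\|\CMV^t\delta_0\|=1$ bounds this by $(1-e^{-2/T})\sum_{t\ge0}e^{-2t/T}((2t)^p+1)\le C_pT^p$; taking logarithms, dividing by $p\log T$, and letting $T\to\infty$ gives $\widetilde\beta^+(p)\le1$. (On the windows themselves one could instead read $\langle|X|^p\rangle(T)\le C_pT^p$ off \eqref{eq:tubs:momentbound} using $L_N\le4T$ and $L_N^{-1/\eta}\le T^{-1}$, but since the sparseness hypothesis forces $L_{N+1}\gg L_N^{1/\eta}$ these windows leave genuine gaps on which only the ballistic bound is available.)

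For the $\liminf$ bound $\widetilde\beta^-(p)\le\frac{p+1}{p+1/\eta}$, the idea is to balance the two terms in \eqref{eq:tubs:momentbound}. Put $T_N\eqdef L_N^{\frac{p+1/\eta}{p+1}}$. Because $0<\eta<1$ one has $1\le\frac{p+1/\eta}{p+1}\le\frac1\eta$, so $\frac{L_N}{4}\le L_N\le T_N\le L_N^{1/\eta}$ for all large $N$ and $T_N\to\infty$; in particular $T_N$ falls in the regime covered by Theorem~\ref{t:ubs}. For this choice $T_N^{p+1}L_N^{-1/\eta}=L_N^p$, so \eqref{eq:tubs:momentbound} gives $\langle|X|^p\rangle(T_N)\le C_pL_N^p$. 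Since $\log L_N=\frac{p+1}{p+1/\eta}\log T_N$, this yields
\[
\frac{\log\langle|X|^p\rangle(T_N)}{p\log T_N}
\le
\frac{\log C_p}{p\log T_N}+\frac{p+1}{p+1/\eta},
\]
and letting $N\to\infty$ we conclude $\widetilde\beta^-(p)=\liminf_{T\to\infty}\frac{\log\langle|X|^p\rangle(T)}{p\log T}\le\frac{p+1}{p+1/\eta}$.

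Since the hard analysis is already packed into Theorem~\ref{t:ubs}, no serious obstacle remains; the two small points requiring attention are verifying that the chosen scales $T_N$ lie inside the admissible window $[\frac14 L_N,L_N^{1/\eta}]$, and recognizing that the bound $\widetilde\beta^+(p)\le1$ must be supplied by the ballistic estimate on the ``gaps'' between consecutive windows rather than by Theorem~\ref{t:ubs} on its own.
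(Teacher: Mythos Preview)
Your proof is correct and follows essentially the same approach as the paper: both choose the time scale $T_N = L_N^{(p+1/\eta)/(p+1)}$ to balance the two terms in \eqref{eq:tubs:momentbound} for the $\widetilde\beta^-$ bound, and both obtain $\widetilde\beta^+(p)\le 1$ from the ballistic estimate coming from pentadiagonality of $\CMV$. Your additional remarks verifying that $T_N$ lies in the admissible window and noting that the windows do not cover all of $(0,\infty)$ are sound and make the argument slightly more explicit than the paper's version.
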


\begin{proof}
From Theorem~\ref{t:ubs}, we have the upper bound
\[
\langle\abs{X}^p\rangle(T)
\leq
C_p L_N^p + C_p T^{p + 1} L_N^{-\frac{1}{\eta}}
\]
whenever $L_N/4 \leq T \leq L_N^{1/\eta}$. Let $s = \frac{p + 1/\eta}{p + 1}$ and $T_N = L_N^s$. Since $s(p+1) - \frac{1}{\eta} = p$, we get $\langle |X|^p \rangle(T_N) \leq C_p L_N^p$, and hence
\[
\widetilde{\beta}^-(p)
\leq
\liminf_{N \to \infty} \frac{\log \langle\abs{X}^p\rangle(T_N)}{p \log T_N}
\leq
\liminf_{N \to \infty} \frac{\log \left(C_p L_N^p \right)}{p \log (L_N^s)}
=
\frac{1}{s}
=
\frac{p + 1}{p + 1/\eta}.
\]
The upper bound on $\widetilde{\beta}^+(p)$ follows immediately from the observation that $\langle \delta_n, \CMV^t \delta_0 \rangle = 0$ whenever $n > 2t$.
\end{proof}

With these upper bounds on $\widetilde{\beta}^+(p)$ and $\widetilde{\beta}^-(p)$ in hand, we have our main result.

\begin{proof}[Proof of Theorem~\ref{t:betas}]
This is immediate from Theorem~\ref{t:betalowerbounds} and Corollary~\ref{c:betaupperbounds}.
\end{proof}

Now that we have our upper bounds on $\widetilde \beta^-(p)$, we may deduce the claimed result on the fractal dimension of the spectral measure $\mu$:

\begin{proof}[Proof of Theorem~\ref{t:specdim}]
By Theorem~\ref{t:betas} and \cite[Corollary~3.13]{DFV}, we have
$$
\mathrm{dim}_{\mathrm{H}}^+(\mu)
\leq
\widetilde\beta^-(p)
=
\frac{p+1}{p+1/\eta}
$$
for all $p > 0$. Sending $p \downarrow 0$, we obtain $\mathrm{dim}_{\mathrm{H}}^+(\mu) \leq \eta$. When combined with Theorem~\ref{t:subord}, this yields the desired result.
\end{proof}

\begin{appendix}

\section{Spectral Continuity via Subordinacy Theory} \label{sec:speccont}

In this appendix, we briefly descbribe how one may verify Theorem~\ref{t:subord}. First, one needs estimates on the norms of the so-called Szeg\H{o} transfer matrices. If $(\alpha_n)_{n=0}^\infty$ is a sequence with $\alpha_n \in \D$ for all $n$, the corresponding Szeg\H{o} transfer matrices are defined by
$$
T(n,m;z)
\eqdef
\begin{cases}
S(\alpha_{n-1},z) \times \cdots \times S(\alpha_m,z) & n > m \\
I & n = m \\
T(m,n;z)^{-1} & n < m
\end{cases}
$$
where
$$
S(\alpha,z)
\eqdef
\frac{1}{\rho}
\begin{pmatrix}
z & -\overline{\alpha} \\
-\alpha z & 1
\end{pmatrix},
\quad
\alpha \in \D,
\quad
\rho = \sqrt{1-|\alpha|^2}.
$$
Now, let $L_1 < L_2 < \cdots$ denote a sequence of positive integers satisfying \eqref{eq:sparsedef}, and let $\CMV$ denote the corresponding CMV matrix defined by \eqref{eq:sparseCMVdef}. If $L_N < n \leq L_{N+1}$, then the definition of $T$ immediately implies that
\[
T(n,0;z)
=
T(n,L_N + 1) \times \prod_{j=N}^1 Y(L_j,z)  T(L_j,L_{j-1}+1;z),
\]
where we adopted the convention $L_0 = -1$. Using this representation, it is straightforward to prove upper and lower bounds on $T(n,0;z)$ in the same vein as \cite{JL99}.

\begin{prop} \label{p:szbounds}
Let $\gamma \eqdef \frac{1-\eta}{2\eta}$. For all $\varepsilon > 0$, there exists $N_0$ such that
\begin{equation} \label{eq:szegobounds}
L_N^{\gamma - \varepsilon}
\leq
\|T(n,0;z)\|
\leq
L_N^{\gamma + \varepsilon}
\end{equation}
for all $z \in \partial\D$ whenever $L_N < n \leq L_{N+1}$ and $N \geq N_0$.
\end{prop}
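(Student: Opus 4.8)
The plan is to exploit the block decomposition of $T(n,0;z)$ displayed just above the proposition, which writes the Szeg\H{o} transfer matrix over $[0,n]$ as a product of the ``free'' pieces $T(L_j, L_{j-1}+1;z)$ interspersed with the single-step barrier matrices $Y(L_j,z)$, plus a final free piece $T(n,L_N+1;z)$. The two ingredients needed are (i) control of the free pieces, which involve only Verblunsky coefficients equal to zero, and (ii) control of $\|Y(L_j,z)\|$ and $\|Y(L_j,z)^{-1}\|$. For (i), when $\alpha_k \equiv 0$ on an index range the matrices $S(0,z) = \begin{pmatrix} z & 0 \\ 0 & 1\end{pmatrix}$ are diagonal and unitary for $z \in \partial\D$, so every free block has norm exactly $1$ (and so does its inverse). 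For (ii), a direct computation with $S(\alpha,z)$ at $|\alpha_{L_j}|^2 = 1 - L_j^{-(1-\eta)/\eta}$ gives $\rho_{L_j} = L_j^{-(1-\eta)/(2\eta)} = L_j^{-\gamma}$, and since the entries of $S(\alpha,z)$ are $O(1)$ for $z\in\partial\D$, one gets $\|Y(L_j,z)\| \asymp L_j^{\gamma}$ and likewise $\|Y(L_j,z)^{-1}\| \asymp L_j^{\gamma}$, with constants independent of $j$ and $z$ (for $j$ large, so that $\rho_{L_j}$ is small).

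With these in hand, the upper bound follows by submultiplicativity: $\|T(n,0;z)\| \le \prod_{j=1}^{N} \|Y(L_j,z)\| \le C^N \prod_{j=1}^N L_j^{\gamma} = C^N (L_1\cdots L_N)^{\gamma}$. Now invoke the sparseness condition \eqref{eq:sparsedef}: since $n \le L_{N+1}$, and more to the point $L_1 \cdots L_{N-1} = L_N^{\nu_N}$ with $\nu_N \to 0$, we have $L_1\cdots L_N = L_N^{1 + \nu_N}$, hence $\|T(n,0;z)\| \le C^N L_N^{\gamma(1+\nu_N)}$. Absorbing $C^N$: because $C^N \le C' L_N^{\delta}$ for any $\delta > 0$ once $N$ is large (as $N = o(\log L_N)$, again from sparseness since $\log(L_1\cdots L_{N-1})/\log L_N \to 0$ forces $L_N$ to grow at least exponentially in $N$, in fact super-exponentially), we conclude $\|T(n,0;z)\| \le L_N^{\gamma+\varepsilon}$ for $N \ge N_0(\varepsilon)$, uniformly in $z \in \partial\D$ and in $n \in (L_N, L_{N+1}]$.

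For the lower bound, the idea is that the $N$-th barrier must be felt. Write $T(n,0;z) = A \, Y(L_N,z) \, B$ where $B = T(L_N, L_{N-1}+1;z)\prod_{j=N-1}^1 Y(L_j,z)T(L_j,L_{j-1}+1;z)$ collects everything strictly below $L_N$ and $A = T(n,L_N+1;z)$ is the final free block, so $\|A\| = 1 = \|A^{-1}\|$. Then $\|T(n,0;z)\| \ge \|A^{-1}\|^{-1}\|Y(L_N,z)B\| = \|Y(L_N,z)B\|$, and similarly I can peel $B$ from the right: $\|Y(L_N,z)B\| \ge \|Y(L_N,z)\| \, \|B^{-1}\|^{-1} \ge \|Y(L_N,z)\| \cdot (C^{N-1}(L_1\cdots L_{N-1})^{\gamma})^{-1}$, using the upper bound applied to the lower block $B$. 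Since $\|Y(L_N,z)\| \ge c\, L_N^{\gamma}$ and $(L_1\cdots L_{N-1})^{\gamma} = L_N^{\gamma\nu_N}$, this gives $\|T(n,0;z)\| \ge c\, C^{-(N-1)} L_N^{\gamma(1-\nu_N)} \ge L_N^{\gamma-\varepsilon}$ for $N$ large, again absorbing the exponential factor as above.

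The one point that needs a little care — and the main obstacle — is making the bound $\|Y(L_N,z)\| \ge c L_N^{\gamma}$ \emph{uniform in} $z \in \partial\D$ and robust against cancellation when it is sandwiched between the other blocks; the peeling argument above handles the sandwiching by using only $\|A^{\pm 1}\|, \|B^{\pm 1}\|$, so what remains is a clean lemma: for $z\in\partial\D$ and $\rho$ small, $S(\alpha,z)$ (equivalently $Q$ or $P$ from \eqref{gz:onestepmats:def}, up to the identification noted after \eqref{gz:onestepmats:def}) has largest singular value comparable to $\rho^{-1}$ with absolute constants. This follows because $\det S = z$ has modulus $1$, so the two singular values multiply to $1$, while one matrix entry equals $z/\rho$ (or $-\alpha z/\rho$, with $|\alpha| \to 1$), forcing the operator norm to be $\ge |z|/\rho = \rho^{-1}$; the reverse inequality is Cauchy–Schwarz on the four entries. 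Everything else is bookkeeping with the sparseness condition, in direct parallel with \cite{JL99} and with Lemma~\ref{l:znormbound} above. Letting $\varepsilon \downarrow 0$ after the fact, \eqref{eq:szegobounds} holds as stated.
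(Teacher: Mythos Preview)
Your proof is correct and follows essentially the same approach as the paper. Both arguments use the block decomposition to get the upper bound via submultiplicativity (free blocks have norm $1$, barriers have norm $\le C L_j^\gamma$), and obtain the lower bound by peeling off the $N$th barrier and controlling the remaining piece $T(L_N,0;z)$ via the upper estimate; the paper phrases this last step as $\|T(n,0;z)\| \ge \|T(n,L_N;z)\|\,\|T(L_N,0;z)\|^{-1}$ and invokes the already-proved upper bound with $\varepsilon = 1$, whereas you separate out $Y(L_N,z)$ explicitly and redo the submultiplicativity bound on $B$, but this is only a cosmetic difference.
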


\begin{proof}
Suppose $L_N < n \leq L_{N+1}$ and $z \in \partial \D$. Naturally, $\|S(\alpha_k,z)\| = 1$ whenever $k \notin\set{L_j : j \in \Z_+}$. On the other hand, one has $\|S(\alpha_{L_j},z)\| \leq CL_j^\gamma$ for some constant $C > 0$. Thus,
$$
\|T(n,0;z)\|
\leq
C^N \prod_{j=1}^N L_j^\gamma
=
C^N L_N^{\gamma(1+\nu_N)},
$$
where the second equality uses the definition of $\nu_N$. Taking $N$ sufficiently large and using the sparseness condition, we get $C^N \leq L_N^{\varepsilon/2}$ and $\nu_N \gamma \leq \varepsilon/2$, which proves the upper bound from \eqref{eq:szegobounds}. To prove the desired lower bound, notice that we have the following for all sufficiently large $N$:
\begin{align*}
\|T(n,0;z)\|
& \geq
\|T(n,L_N;z)\| \|T(L_N,0;z)\|^{-1} \\
& \geq
C^{-1} L_N^\gamma L_{N-1}^{-\gamma-1} \\
& \geq
C^{-1} L_N^{\gamma-\nu_N(\gamma+1)},
\end{align*}
where we have applied the upper bound from \eqref{eq:szegobounds} with $\varepsilon = 1$ and $n=L_{N}$ (notice that one must replace $N$ by $N-1$ to use the bound as stated). As before, we have $C^{-1} \geq L_N^{-\varepsilon/2}$ and $\nu_N(\gamma+1) \leq \varepsilon/2$, provided $N$ is sufficiently large.
\end{proof}

With these bounds in hand, one can easily prove suitable upper and lower bounds on the growth of the first and second kind orthogonal polynomials, given by
$$
\begin{pmatrix} \varphi_n(z) \\ \varphi_n^*(z) \end{pmatrix}
=
T(n,0;z) \begin{pmatrix} 1 \\ 1 \end{pmatrix},
\quad
\begin{pmatrix} \psi_n(z) \\ \psi_n^*(z) \end{pmatrix}
=
T(n,0;z) \begin{pmatrix} 1 \\ -1 \end{pmatrix}.
$$
Let us briefly recall some notation from subordinacy theory. Given a sequence $a:\Z_0 \to \C$ and $m \geq 0$, we denote $\underline m = \lfloor m \rfloor$, and
$$
\|a\|_m^2
=
\sum_{j=0}^{\underline m} |a(j)|^2
+
\set{m} \left|a\left(\underline m + 1 \right) \right|^2,
$$
where $\set{m} = m - \underline m$ denotes the fractional part of $m$. Thus, $\|a\|_m$ is a local $\ell^2$ norm for $m \in \Z_0$ and one linearly interpolates $\|a\|_m^2$ between consecutive integers.

We can use the bounds from Proposition~\ref{p:szbounds} to produce bounds on the local $\ell^2$ norms of the sequences $\varphi(z) = (\varphi_n(z))_{n=0}^\infty$ and $\psi(z) = (\psi_n(z))_{n=0}^\infty$.

\begin{prop} \label{p:polybounds}
Let $\beta = \eta/(2-\eta)$. For all $\delta > 0$ and all $z \in \partial \D$,
\begin{equation}
\liminf_{m \to \infty} \frac{\|\varphi(z)\|_m^2}{\|\psi(z)\|_m^{2(\beta-\delta)}}
>
0.
\end{equation}
\end{prop}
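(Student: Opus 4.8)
The plan is to extract the required $\liminf$ estimate directly from the Szeg\H{o} transfer matrix bounds of Proposition~\ref{p:szbounds}. First I would fix $z \in \partial\D$ and $\delta > 0$, and set $\gamma = \frac{1-\eta}{2\eta}$ as in Proposition~\ref{p:szbounds}. The key observation is that, for $z \in \partial\D$, the transfer matrices $T(n,0;z)$ all have determinant of modulus one (indeed $|\det S(\alpha,z)| = |z| = 1$), so $\|T(n,0;z)\| \asymp \|T(n,0;z)^{-1}\|$ and hence control on the norm gives two-sided control on how much a fixed vector can be stretched or shrunk, up to passing to the orthogonal complement. Concretely, for any unit vector $w$, either $\|T(n,0;z)w\|$ is comparable to $\|T(n,0;z)\|$, or $\|T(n,0;z)w'\|$ is, where $w'$ is orthogonal to $w$; this is the standard mechanism (as in Jitomirskaya--Last) that forces one of two independent solutions to be ``large'' at scale $n$.

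Second, I would relate the local $\ell^2$ norms $\|\varphi(z)\|_m$ and $\|\psi(z)\|_m$ to the transfer matrix norms by summing: since $(\varphi_k, \varphi_k^*)^\top = T(k,0;z)\binom{1}{1}$ and $(\psi_k, \psi_k^*)^\top = T(k,0;z)\binom{1}{-1}$, and the vectors $\binom{1}{1}$, $\binom{1}{-1}$ are orthogonal, at each scale $k$ at least one of $|\varphi_k(z)|^2 + |\varphi_k^*(z)|^2$, $|\psi_k(z)|^2 + |\psi_k^*(z)|^2$ is $\gtrsim \|T(k,0;z)\|^2$. Using Proposition~\ref{p:szbounds}, if $L_N < m \le L_{N+1}$ then $\|T(k,0;z)\| \le L_N^{\gamma+\varepsilon} \le m^{\gamma+\varepsilon}$ for all $k \le m$ (adjusting $N$ downward along the way), while for $k$ in a suitable range near $m$ one has $\|T(k,0;z)\| \ge L_N^{\gamma - \varepsilon}$. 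Summing the upper bound over $k \le m$ gives $\|\psi(z)\|_m^2 \le C m \cdot m^{2(\gamma+\varepsilon)} = C m^{1 + 2\gamma + 2\varepsilon}$, and the corresponding lower bound argument (the transfer matrix is already large at scale $L_N$, and $\varphi$ or $\psi$ must witness this) yields $\max\{\|\varphi(z)\|_m^2, \|\psi(z)\|_m^2\} \ge c\, L_N^{2\gamma - 2\varepsilon}$.

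Third, I would play the two possibilities against each other. The relation between the exponents is dictated by $\beta = \eta/(2-\eta)$: a short computation shows that $1 + 2\gamma = \frac{1}{\eta}$ and $\frac{2\gamma}{1 + 2\gamma} = \frac{(1-\eta)/\eta}{1/\eta} = 1 - \eta$, so $\beta = \frac{2\gamma}{1+2\gamma}\cdot\frac{1}{2-\eta}\cdot(\dots)$ — more to the point, the ``Last--Simon'' identity that makes subordinacy work is precisely that if $\|\psi\|_m^2 \lesssim m^{1+2\gamma}$ then $\|\varphi\|_m^2 \gtrsim m^{1 + 2\gamma - \text{(something)}}$ with the exponents arranged so that $\|\varphi\|_m^2 / \|\psi\|_m^{2\beta}$ stays bounded below. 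In the case $\|\varphi(z)\|_m^2 \ge c L_N^{2\gamma-2\varepsilon}$ directly, one combines with $\|\psi(z)\|_m^2 \le C m^{1+2\gamma+2\varepsilon}$ and $m \le L_{N+1}$, invoking sparseness ($\nu_{N+1} \to 0$, so $L_{N+1}$ is not much larger than $L_N$ on the relevant log scale) to conclude $\|\varphi(z)\|_m^2 \gtrsim \|\psi(z)\|_m^{2(\beta-\delta)}$; in the complementary case, where $\psi$ rather than $\varphi$ is the large solution, one instead uses the a priori lower bound $\|\varphi(z)\|_m^2 \ge |\varphi_0(z)|^2 = 1$ together with the upper bound $\|\psi(z)\|_m^2 \le C m^{1+2\gamma+2\varepsilon}$ and checks that $(\beta - \delta)(1 + 2\gamma) < $ the exponent we can afford — here one must be slightly careful because this case only gives $\|\varphi\|_m^2 \ge 1$, so one needs $\|\psi\|_m^{2(\beta-\delta)}$ to also be bounded, which is false; the resolution is that in this complementary case $\|\psi(z)\|_m$ is in fact large, and one must then use the two-sided nature of the transfer-matrix bound to propagate largeness back to $\varphi$ at a nearby scale. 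Taking $\varepsilon$ small relative to $\delta$ and $N \to \infty$, the $\liminf$ is strictly positive.

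The main obstacle I anticipate is the complementary case: making rigorous the dichotomy ``$\varphi$ large or $\psi$ large'' at \emph{each} scale and showing that whichever one is large, the ratio $\|\varphi(z)\|_m^2 / \|\psi(z)\|_m^{2(\beta-\delta)}$ is bounded below uniformly in $m$ and $z \in \partial\D$. This requires the standard but somewhat delicate Jitomirskaya--Last interpolation argument relating $\|\varphi\|_m$, $\|\psi\|_m$, and $\|T(\cdot,0;z)\|$ (essentially: $\|T(m,0;z)\|^{-2} \lesssim \|\varphi\|_m \|\psi\|_m \lesssim$ ... with the precise form as in \cite[Section~3]{JL99}); once that is in place, feeding in the clean power-law bounds from Proposition~\ref{p:szbounds} together with the sparseness control on $L_{N+1}/L_N$ on the logarithmic scale makes the exponent bookkeeping go through, the choice $\beta = \eta/(2-\eta)$ being exactly what balances the two sides.
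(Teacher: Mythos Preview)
Your overall plan --- feed the two-sided transfer matrix bounds of Proposition~\ref{p:szbounds} into a Jitomirskaya--Last style argument --- is exactly what the paper does (it simply cites \cite[(5.9)]{JL99}). However, your implementation has a genuine gap in the mechanism that produces a lower bound on $\|\varphi\|_m$, and a factual slip about sparseness.

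The dichotomy ``at each scale one of $\varphi,\psi$ witnesses $\|T(k,0;z)\|$'' only gives a lower bound on $\max\{\|\varphi\|_m,\|\psi\|_m\}$, which is why your complementary case collapses. The correct JL mechanism is different: from $\det T(n,0;z)=z^n$ one gets the Wronskian-type identity $\varphi_n\psi_n^* - \varphi_n^*\psi_n = -2z^n$, and combined with the key OPUC fact $|\varphi_n^*|=|\varphi_n|$, $|\psi_n^*|=|\psi_n|$ on $\partial\D$ (this is precisely the observation the paper singles out as the only adaptation needed from the Schr\"odinger case) one obtains the \emph{pointwise} bound $|\varphi_n(z)|\,|\psi_n(z)|\geq 1$. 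Together with $|\psi_n|\leq \|T(n,0;z)\|$ this yields $|\varphi_n|\geq \|T(n,0;z)\|^{-1}$, hence a direct lower bound on $\|\varphi\|_m$ with no case split. Summing carefully with the two-sided bounds of Proposition~\ref{p:szbounds} then gives, for $L_N<m\le L_{N+1}$, roughly $\|\varphi\|_m^2\gtrsim L_N^{1-o(1)}+(m-L_N)L_N^{-2\gamma}$ and $\|\psi\|_m^2\lesssim L_N^{1+o(1)}+(m-L_N)L_N^{2\gamma}$; the worst ratio occurs near $m-L_N\approx L_N^{1+2\gamma}=L_N^{1/\eta}$ and yields exactly the exponent $\beta=\eta/(2-\eta)=1/(1+4\gamma)$.

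Separately, your appeal to sparseness is backwards: $\nu_{N+1}\to 0$ means $\log L_N/\log L_{N+1}\to 0$, so $L_{N+1}$ is \emph{vastly} larger than $L_N$ on the log scale, not close to it. Sparseness enters the correct argument by making $L_{N-1}$ negligible relative to $L_N$, so that the block $(L_{N-1},L_N]$ contributes $\approx L_N^{1-o(1)}$ to $\|\varphi\|_m^2$; your crude bound $\|\psi\|_m^2\le Cm^{1+2\gamma+2\varepsilon}$ (which ignores the $L_N$-dependence) is too weak to recover $\beta=\eta/(2-\eta)$.
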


\begin{proof}
Simply follow the proof of \cite[(5.9)]{JL99}, using Proposition~\ref{p:szbounds} instead of \cite[(5.7)]{JL99}. To rerun their arguments, it suffices to note that
$$
|\varphi_n^*(z)|
=
|\varphi_n(z)|,
\quad
|\psi_n^*(z)|
=
|\psi_n(z)|
$$
for every $n \geq 0$ and every $z \in \partial \D$, which can readily be seen from the identities
$$
\varphi_n^*(z)
=
z^n \overline{\varphi_n(1/\overline{z})},
\qquad
\psi_n^*(z)
=
z^n \overline{\psi_n(1/\overline{z})},
\qquad
n \geq 0, \, z \in \C.
$$
\end{proof}

\begin{proof}[Proof of Theorem~1.2]
By Proposition~\ref{p:polybounds} and \cite[Theorem~10.8.5]{S2},
$$
\limsup_{\varepsilon \downarrow 0} \frac{\mu\set{z_0e^{i\theta} : \theta \in (-\varepsilon,\varepsilon)}}{(2\varepsilon)^{\eta-\delta}}
<
\infty
$$
for all $z_0 \in \partial \D$ and all $\delta > 0$. The conclusion of the theorem follows from \cite[Theorem~10.8.7]{S2}.
\end{proof}

\end{appendix}

\end{document}